\documentclass{article}

%-----------------------------------------------------------------------

%-----------------------------------------------------------------------
% Additional packages
%-----------------------------------------------------------------------

\usepackage{algorithmic}
\usepackage{algorithm}
\usepackage{amssymb}
\usepackage{amsmath}
\usepackage{amsfonts}
\usepackage{booktabs}
\usepackage[hmargin=.75in,vmargin=1in]{geometry}
\usepackage{hyperref}
\usepackage{microtype}
\usepackage{pgf}
\usepackage{pgfarrows}
\usepackage{pgfnodes}
\usepackage{pgfautomata}
\usepackage{pgfheaps}
\usepackage[loose]{units}
\usepackage{wrapfig}
\usepackage{xspace}
\usepackage[misc]{ifsym}

%-----------------------------------------------------------------------
% Theorems. Some are already included in amsthm in the LIPIcs style
%-----------------------------------------------------------------------

\newtheorem{theorem}{Theorem}[section] 
\newtheorem{lemma}[theorem]{Lemma}

\newtheorem{corollary}[theorem]{Corollary}

\newtheorem{definition}[theorem]{Definition}
\newtheorem{observation}[theorem]{Observation}
\newtheorem{fact}[theorem]{Fact}
\newtheorem{conclusion}[theorem]{Conclusion}
\setlength{\parindent}{0pt}
\columnsep 6mm

\setlength{\parskip}{0.8ex plus0.2ex minus0.2ex}

%-----------------------------------------------------------------------
% Reducibility
%-----------------------------------------------------------------------

\newcommand{\CirculantGraph}{\small{\textsf{CirculantGraph}}\xspace}

\newcommand{\Lang}[1]{\ifmmode{\text{\textsc{#1}}}\else\textsc{#1}\fi}

\newcommand{\lelex}{\ensuremath{\le_{\text{lex}}}}

\newcommand{\Nat}{\ensuremath{\mathbb{N}}}

\newcommand{\cB}{{\ensuremath{\cal B}}}
\newcommand{\cC}{{\ensuremath{\cal C}}}

\newcommand{\cG}{{\ensuremath{\cal G}}}

\newcommand{\cP}{{\ensuremath{\cal P}}\xspace}

\newcommand{\cNP}{{\ensuremath{\cal NP}}\xspace}
\newcommand{\cBH}{{\ensuremath{\cal BH}}}
\newcommand{\cOC}{{\ensuremath{\cal OC}}}

\newcommand{\CoV}{\text{CoV}}

\newcommand{\splitExt}[1]{{\ensuremath{\Gamma_{\text{split}}(#1)}}}
\newcommand{\splitExtP}[2]{{\ensuremath{\Gamma_{\text{split}}^{#2}(#1)}}}
\newcommand{\splitExtB}{{\ensuremath{\Gamma_{\text{split}}}}}

\newcommand{\parExt}[1]{{\ensuremath{\Gamma_{\text{par}}(#1)}}}
\newcommand{\parExtP}[2]{{\ensuremath{\Gamma_{\text{par}}^{#2}(#1)}}}
\newcommand{\parExtB}{{\ensuremath{\Gamma_{\text{par}}}}}
\newcommand{\parCla}[1]{{\ensuremath{[#1]_{\text{par}}}}}
\newcommand{\chaExt}[1]{{\ensuremath{\Gamma_{\text{chain}}(#1)}}}
\newcommand{\chaExtP}[2]{{\ensuremath{\Gamma_{\text{chain}}^{#2}(#1)}}}
\newcommand{\chaExtB}{{\ensuremath{\Gamma_{\text{chain}}}}}

\newcommand{\chaParExt}[1]{{\ensuremath{\Gamma_{\text{chain, par}}(#1)}}}
\newcommand{\chaParExtP}[2]{{\ensuremath{\Gamma_{\text{chain, par}}^{#2}(#1)}}}

\newcommand{\splitParExt}[1]{{\ensuremath{\Gamma_{\text{split, par}}(#1)}}}
\newcommand{\splitParExtP}[2]{{\ensuremath{\Gamma_{\text{split, par}}^{#2}(#1)}}}
\newcommand{\splitParExtB}{{\ensuremath{\Gamma_{\text{split, par}}}}}

\newcommand\qedsymbol{\textcolor{darkgray}{\ensuremath{\blacktriangleleft}}}

\newenvironment{proof}{\par\smallskip\noindent\textit{Proof.}\ }{\quad\hfill\qedsymbol\par\smallskip}

\let\origthebibliography=\thebibliography
\renewcommand\thebibliography[1]{\small\origthebibliography{#1}\parskip0pt\itemsep0pt}

%-----------------------------------------------------------------------
% Reducibility
%-----------------------------------------------------------------------

\newcommand{\UEdge}[1]{\ensuremath{\{#1\}}}

\newcommand{\CIrreducible}{$\alpha$-Critical\xspace}

\newcommand{\Reducible}{uncritical\xspace}
\newcommand{\Irreducible}{$\alpha$-critical\xspace}
\newcommand{\VCO}{VC-Overlap\xspace}
\newcommand{\VCOO}{$1$-VC-Overlap\xspace}
\newcommand{\VCR}{\Reducible}
\newcommand{\VCI}{\Irreducible}
\newcommand{\NeEq}{neighbor-equivalent\xspace}

%-----------------------------------------------------------------------
% Writing-related in general
%-----------------------------------------------------------------------

 % confer = compare/see
\newcommand{\eg}{e.g.,\xspace} % exempel gratia = for example
 % et alia = and others
 % id est = that means
\newcommand{\paragraphbf}[1]{{\scshape{#1}}\xspace}
 % such that
 % with respect to

%-----------------------------------------------------------------------
% Author names
%-----------------------------------------------------------------------

%-----------------------------------------------------------------------
% Math-related
%-----------------------------------------------------------------------

\newcommand{\set}[1]{\ensuremath{\mathcal{#1}}\xspace}

%-----------------------------------------------------------------------
% Graph-related
%-----------------------------------------------------------------------

\newcommand{\V}{\ensuremath{V}\xspace}
\newcommand{\E}{\ensuremath{E}\xspace}
\newcommand{\G}{\ensuremath{G}\xspace}

\newcommand{\cover}{\ensuremath{\set{C}}\xspace}

%-----------------------------------------------------------------------
% MVC algorithms and terms
%-----------------------------------------------------------------------

% \ensuremath{C(IC)^+}\textsc{-Graph}\xspace}

\pagestyle{plain}

%-----------------------------------------------------------------------

\begin{document}

%-----------------------------------------------------------------------
%-----------------------------------------------------------------------
% Author macros::begin
%-----------------------------------------------------------------------

\title{Critical Graphs for Minimum Vertex Cover}
\author{{%\bfseries 
Andreas Jakoby, 
Naveen Kumar Goswami,
Eik List,
Stefan Lucks} \\
Faculty of Media, 
Bauhaus-Universit\"at Weimar, Bauhausstr. 11, D-99423, Weimar, Germany\\
{\small\texttt{<firstname.lastname>@uni-weimar.de}}}

\maketitle

%-----------------------------------------------------------------------

\begin{abstract}
An $\alpha$-critical graph is an instance where the deletion of any element
would decrease some graph's measure $\alpha$. Such instances have shown to be
interesting objects of study for deepen the understanding of optimization
problems.

This work explores critical graphs in the context of the Minimum-Vertex-Cover
problem. We demonstrate their potential for the generation of larger graphs with
hidden a priori known solutions. Firstly, we propose a parametrized graph-
generation process which preserves the knowledge of the minimum cover. Secondly,
we conduct a systematic search for small critical graphs. Thirdly, we illustrate
the applicability for benchmarking purposes by reporting on a series of
experiments using the state-of-the-art heuristic solver NuMVC.
\end{abstract}

\vspace{1em}
\noindent\textbf{Keywords:}
 {\small  critical graphs, minimum vertex cover, graph generation, 
benchmark generator} 

%%%%%%%%%%%%%%%%%%%%%%%%%%%%%%%%%%%%%%%%%%%%%%%%%%%%%%%%%%%%
%-----------------------------------------------------------------------
\section{Introduction}
\label{sec:introduction}
%-----------------------------------------------------------------------

%-----------------------------------------------------------------------
\paragraphbf{The Minimum Vertex Cover Problem.}
%-----------------------------------------------------------------------
A \emph{vertex cover} \cover for a given graph $\G = (\V, \E)$ defines a subset
of vertices $\cover \subseteq \V$ such that every edge in \E is incident to at
least one vertex in \cover. A \emph{minimum vertex cover} (MVC) is a vertex
cover with the smallest possible size. The task of finding a minimum vertex
cover in a given graph is a classical \cNP-hard optimization
problem~\cite{garey:1979}, and its decision version one the 21 original
\cNP-complete problems listed by Karp~\cite{karp:1972}. While the construction
of a maximal matching yields a trivial approximation-2 algorithm, it is 
\cNP-hard to approximate MVC within any factor smaller than
1.3606~\cite{dinur:2005} unless $\cP = \cNP$, according to the Unique Game
Conjecture; though, one can achieve an approximation ratio of $2 -
o(1)$~\cite{karakostas:2005}.

The MVC problem is strongly related to at least three further
\cNP-hard problems. Finding a minimum vertex cover is equivalent to
finding a Maximum Independent Set (MIS), i.~e. a subset of vertices wherein no
pair of vertices shares an edge. An MIS problem instance can again be
transformed into an instance of the Maximum-size Clique (MC) problem; moreover,
there is a straight-forward reduction of a (binary) Constraint Satisfaction
Problem (CSP) to an MIS problem. In practice, the MVC problem plays an important
role in network security, industrial machine assignment, or facility
location~\cite{kavalci:2014,pirzada:2007,zhang:2006}. Furthermore, MVC
algorithms can be used for solving MIS problems \eg in the analysis of social
networks, pattern recognition, and alignment of protein sequences in
bioinformatics~\cite{pirzada:2007,stege:2000}.% abukhzam:2004,jin:2015

%-----------------------------------------------------------------------
\paragraphbf{\CIrreducible Graphs.}
%-----------------------------------------------------------------------
Graphs are called critical if they are minimal with regards to a
certain measure. More precisely, an edge of a given graph $G$ is called a
\emph{critical element} iff its deletion would decrease the measure. $G$ is then
called \emph{edge-critical} (or simply critical) iff every edge is a critical
element. The concept of critical graphs is mostly used in works on the chromatic
number. Though, it can also be of significant interest for the MVC problem,
where we call a graph critical iff the deletion of any edge would decrease the
size of the minimal cover. This concept has been introduced by 
Erd\"{o}s and Gallai \cite{erdos:1961} in 1961 using the term $\alpha$-critical
where $\alpha$ denotes the certain measure on a graph. E.g., let $\alpha$
denote a function determining the size of a maximum independent set, then 
we call a graph $G=(V,E)$ $\alpha$-critical if for any edge $e\in E$ it
holds that $\alpha(G)<\alpha(G')$ where $G'=(V,E\setminus\{e\})$. Since 
the size of a minimum vertex cover of a graph $G=(V,E)$ is given by 
$|V|-\alpha(G)$, this definition and the correlated results 
apply directly to the MVC problem. 
The insights of studying such $\alpha$-critical graphs could help to
deepen our understanding on the complexity of the Vertex Cover problem or to
find more efficient solvers.
Useful summaries on $\alpha$-critical graphs can be found in \cite{lovasz:1993,joret:2007}.

%-----------------------------------------------------------------------
\paragraphbf{Randomized Graph Generation.}
%-----------------------------------------------------------------------
Critical graphs can further serve as the base for constructing larger graphs.
Since small critical instances possess an easily determined cover size, a
parametrized graph-generation process that preserves the criticality could
create large instances while maintaining the knowledge about the solution. A
potential application for such graphs could be, e.g., the dedicated generation
of particularly hard instances for benchmarking purposes. Following a series of previous graph-%
generation models~\cite{smith:1996,xu:2000}, the idea for generating such graphs
for the minimum-vertex cover problem had been introduced by Xu and
Li~\cite{xu:2003,DBLP:journals/tcs/XuL06} and revisited in~\cite{xu:2007}.
During the past decade, Xu's BHOSLIB suite~\cite{xu:2005} has established as a
valuable benchmark suite for the evaluation of MVC, MIS, MC, and CSP solvers.

%-----------------------------------------------------------------------
\paragraphbf{Contribution.}
%-----------------------------------------------------------------------
This work studies critical graphs for the Minimum Vertex Cover problem. First,
we propose a (not necessarily efficiently implementable) graph-generation
process which can create all possible graphs while preserving the knowledge of
the minimum cover size. To implement this process efficiently, we
restrict it to a certain set of extensions that enlarge a critical graph while
maintaining the criticality. Second, we systematically search for small critical
instances. As a useful observation, we show that, if all critical graphs for the
MVC problem were known, our restricted process could efficiently generate all
possible graphs. Third, we illustrate the applicability of instances generated
by our process for benchmarking. We report on a series of experiments with a
state-of-the-art heuristic solver NuMVC~\cite{cai:2013} on examplary instances
that were generated by our randomized process.

% %-----------------------------------------------------------------------
\paragraphbf{Outline.}
% %-----------------------------------------------------------------------
In the following, Section~\ref{sec:irreducible} defines \Irreducible graphs for
the Vertex-Cover problem. Section~\ref{sec:generating} describes our approach
for generating hard random graphs from \Irreducible graphs.
Section~\ref{sec:extensions} presents our used extensions.
Section~\ref{sec:results} details the results of our
experiments and Section~\ref{sec:conclusion} concludes.

%-----------------------------------------------------------------------
% Allows for easy export from xfig. We need this here.
% Putting it before the intro conflicts (probably with the copyright 
% logos).
%\let\includegraphicsorg\includegraphics
%\renewcommand{\includegraphics}[1]{\includegraphicsorg{fig/#1}}
%-----------------------------------------------------------------------

%-----------------------------------------------------------------------
\section{\CIrreducible Graphs for the Minimum-Vertex-Cover Problem}
\label{sec:irreducible}
%-----------------------------------------------------------------------

%-----------------------------------------------------------------------
\begin{definition} 
  A connected graph $G = (V,E)$ is edge-\Reducible{} (\Reducible{} hereafter)
  according to an optimization problem $P$ on graphs iff there exists an edge
  $e\in E$ such that every solution for $P$ at $G'=(V,E\setminus\{e\})$ is a
  solution for $P$ at $G$. A connected graph is edge-\Irreducible{}
  (\Irreducible{} hereafter) according to an optimization problem $P$ iff it is
  not \Reducible.
\end{definition}
%-----------------------------------------------------------------------

For the vertex cover problem, this definition implies:

%-----------------------------------------------------------------------
\begin{observation}
  A connected graph $G=(V,E)$ is \Irreducible{} according to the vertex cover
  problem (in short: \VCI) iff deleting any edge reduces the minimum cover size.
\end{observation}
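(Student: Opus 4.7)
The plan is to prove the equivalence via contrapositive: $G$ is \Reducible{} iff there exists an edge $e\in E$ such that deleting $e$ does not strictly reduce the minimum cover size. The key underlying fact is monotonicity: every vertex cover of $G$ is automatically a vertex cover of $G'=(V,E\setminus\{e\})$, so the minimum cover size of $G'$ is at most that of $G$.

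For one direction, suppose some edge $e$ preserves the minimum cover size, so $G$ and $G'$ admit minimum covers of the same size. Then any minimum vertex cover $C$ of $G$ is still a vertex cover of $G'$ of exactly the minimum size, hence is a minimum cover of $G'$ as well. This supplies a minimum cover of $G'$ that is simultaneously a minimum cover of $G$, matching the definition of \Reducible{}.

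For the other direction, suppose $G$ is \Reducible{}, witnessed by some edge $e$. Then there exists a minimum cover of $G'$ that is also a minimum cover of $G$. Its size equals the minimum cover size of $G$, and since it is a vertex cover of $G'$, the minimum cover sizes of $G$ and $G'$ coincide. Taking the contrapositive yields exactly the claim of the observation.

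The only subtlety is unpacking the definition of \Reducible{}, \ie confirming that for MVC the transfer of an optimal solution from $G'$ to $G$ is equivalent to equality of the two optimal cover sizes. I would make this explicit at the outset; the two directions above then become short verifications that rely solely on the monotonicity fact, with no further case analysis required.
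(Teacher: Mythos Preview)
The paper states this Observation without proof, treating it as an immediate consequence of the general Definition; your argument is the natural one, and both directions correctly use the monotonicity fact that any cover of $G$ is a cover of $G'=(V,E\setminus\{e\})$.

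There is, however, a genuine mismatch you gloss over. The paper's Definition of \emph{uncritical} requires an edge $e$ such that \emph{every} optimal solution of $G'$ is an optimal solution of $G$, whereas in your first direction you only exhibit \emph{one} such solution (a minimum cover of $G$, viewed in $G'$). Read literally, the two conditions are not equivalent: take $G=P_3$, the path $u\text{--}v\text{--}w$. Deleting either edge keeps the minimum cover size at~$1$, yet after deleting $\{v,w\}$ the set $\{u\}$ is a minimum cover of $G'$ that is not even a cover of $G$. Thus $P_3$ satisfies the cover-size condition of the Observation but fails the ``every solution transfers'' condition of the Definition, so your first direction cannot be completed as written. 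In practice the paper uses the Observation (the standard $\alpha$-critical condition) as the operative definition throughout, so your argument is adequate for that purpose; just be explicit that you are adopting the Observation as the working definition for MVC rather than deriving it from the general one.
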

%-----------------------------------------------------------------------

There are several simple \Irreducible graphs.

%-----------------------------------------------------------------------
\begin{fact}
  \label{f:CycClique}
  Cliques and cycles of odd length are \VCI.
\end{fact}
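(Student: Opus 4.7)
The plan is to handle the two families separately and, in each case, compute the minimum vertex cover size of the original graph and of the graph obtained after deleting an arbitrary edge, then compare.

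For cliques $K_n$, I would first observe that any vertex cover must contain at least $n-1$ vertices: if two distinct vertices $u,v$ were both missing, the edge $\{u,v\}\in E$ would be uncovered. Conversely, omitting any single vertex yields a valid cover, so the MVC has size exactly $n-1$. Now fix an arbitrary edge $e=\{u,v\}$ and consider $G'=K_n\setminus\{e\}$. The set $V\setminus\{u,v\}$ has size $n-2$ and covers every remaining edge, since each such edge is incident to at least one vertex different from both $u$ and $v$. Hence the MVC strictly decreases upon deletion of any edge, which is precisely the condition for $K_n$ to be \VCI.

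For odd cycles $C_{2k+1}$, the key ingredient is the standard fact that the MVC of a cycle on $n$ vertices has size $\lceil n/2\rceil$, whereas the MVC of a path on $n$ vertices has size $\lfloor n/2\rfloor$. For $n=2k+1$, these give $k+1$ and $k$ respectively. By symmetry of the cycle, deleting any edge $e$ produces the same path $P_{2k+1}$ up to isomorphism, so the MVC drops from $k+1$ to $k$. This already establishes \VCI-ness, but I would briefly sketch why the two formulas hold: for the cycle, any cover must hit at least every other vertex along the cyclic sequence, and in an odd cycle no alternating pattern closes up, forcing the extra vertex; for the path, a greedy alternating selection starting from the second vertex gives a cover of size $\lfloor n/2\rfloor$, while a matching of that size witnesses the lower bound.

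The proof is essentially routine; the only minor subtlety is justifying the MVC size of an odd cycle (the closure obstruction that forces size $k+1$ rather than $k$). I expect this to be the single step worth spelling out carefully, though even here one can simply cite the well-known formula $\tau(C_n)=\lceil n/2\rceil$. Everything else is direct counting.
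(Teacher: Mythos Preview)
Your proposal is correct and follows essentially the same approach as the paper: compute the MVC of $K_n$ as $n-1$ and exhibit $V\setminus\{u,v\}$ as a cover of size $n-2$ after deleting $\{u,v\}$; compute the MVC of $C_{2k+1}$ as $k+1$ and note that deleting any edge yields a path $P_{2k+1}$ with MVC $k$. The paper states the MVC sizes without the extra justification you outline, but the argument is otherwise identical.
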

\begin{proof}
Note that the size of a minimum vertex cover of a clique 
$C_k=(V,E)$ of size $k$
is $k-1$. If we delete an edge $e=\UEdge{u,v}\in E$ 
from $C_k$, then
$V\setminus\{u,v\}$ is a vertex cover of size $k-2$ for
$(V,E\setminus\{e\})$. Thus, deleting any edge from $C_k$ 
gives a graph with
reduced vertex cover size. Cliques are \Irreducible.

Note that the size of a minimum vertex cover of a cycle 
$G_k=(V,E)$ of odd
length $2k+1$ is $k+1$. If we delete an edge $e\in E$ 
from $G_k$ the resulting
graph is a simple chain of $2k+1$ vertices. W.l.o.g., let
$V=\{v_1,\ldots,v_{2k+1}\}$ and 
$E\setminus\{e\}=\{\UEdge{v_i,v_{i+1}}|1\le i\le 2k\}$. 
Then $\{v_{2i}|1\le i\le k\}$ gives a vertex cover of size 
$k$ for
$(V,E\setminus\{e\})$. Thus, deleting any edge from 
$G_k$ gives a graph with
reduced vertex cover size. 
Cycles of odd length are \Irreducible.
\end{proof}

For an overview on $\alpha$-critical graphs see for example \cite{lovasz:1993,joret:2007}. 

%-----------------------------------------------------------------------

Recall that a perfect matching of a graph is defined as follows:

%-----------------------------------------------------------------------
\begin{definition}
Let $G = (V , E)$ be an undirected graph and $E'\subseteq E$. 
Then $E'$ is
called a matching if for every pair $e_1, e_2 \in E'$ 
of edges either $e_1 =
e_2$ or $e_1 \cap e_2 = \emptyset$, i.e. if two 
edges in $E'$ are different then
the adjacent pairs of nodes are disjoint.

A matching $E'$ is a perfect matching if each node of 
$G$ is incident to an edge
in $E'$.
\end{definition}
%-----------------------------------------------------------------------

Analyzing graphs with a perfect matching, one can show that cycles of even
length are \Reducible. Now, we can easily prove the following observations:

%-----------------------------------------------------------------------
\begin{observation}
  \label{ops:perfect_matching}
  Let $G=(V,E)$ be a connected undirected graph that has a perfect matching
  $E'$. The minimal size of a vertex cover is at least $|V|/2$. Moreover, if the
  minimal size of a vertex cover is exactly $|V|/2$, then either $E=E'$, i.e.,
  $G$ is its own perfect matching, or $|E|>|E'|$, and then $G$ is \VCR.
\end{observation}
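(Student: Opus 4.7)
The plan is to handle the two parts of the claim separately. For the lower bound $\mathrm{MVC}(G) \geq |V|/2$, I would invoke the fact that the $|V|/2$ edges of the perfect matching $E'$ are pairwise vertex-disjoint. Any vertex cover must contain at least one endpoint per matching edge, and since these endpoints come from disjoint pairs, the cover has at least $|V|/2$ distinct vertices.

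For the second part, I would assume $\mathrm{MVC}(G) = |V|/2$ and dichotomize on whether $E = E'$. The case $E = E'$ directly yields the first alternative of the conclusion. Otherwise $E \supsetneq E'$, so $|E| > |E'|$, and the task is to exhibit an edge whose deletion does not reduce the cover size. The natural candidate is any edge $e \in E \setminus E'$; the key point is that $E'$ remains a perfect matching of $G' := (V, E \setminus \{e\})$, so the first part (applied to $G'$) forces $\mathrm{MVC}(G') \geq |V|/2$. Conversely, any minimum vertex cover of $G$ is still a vertex cover of $G'$, giving $\mathrm{MVC}(G') \leq |V|/2$. Combining these yields $\mathrm{MVC}(G') = \mathrm{MVC}(G)$, so $e$ is an uncritical edge and hence $G$ is \VCR.

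There is no substantive technical obstacle; all the work is done by the simple observation that removing a non-matching edge leaves the perfect matching intact. The subtlest point is connecting the equality of cover sizes to the paper's definition of \VCR: one must verify that a minimum vertex cover of $G$ also constitutes a minimum vertex cover of $G'$, so that an optimal solution of the reduced graph yields an optimal solution of the original, as required by the definition in Section~\ref{sec:irreducible}.
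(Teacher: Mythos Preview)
Your proposal is correct and follows essentially the same approach as the paper's proof: both establish the lower bound by noting that the $|V|/2$ disjoint matching edges each require a distinct cover vertex, and both handle the second part by deleting a non-matching edge, observing that $E'$ remains a perfect matching of $G'$, and applying the first part to conclude the cover size does not drop. Your proposal is slightly more explicit about the upper bound $\mathrm{MVC}(G')\le |V|/2$ (via a minimum cover of $G$ restricting to $G'$), but the argument is the same.
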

%-----------------------------------------------------------------------
\begin{proof}
Observe that $|V|/2=|E'|$, and $|E'|$ vertices are needed to cover all edges in
$E'$. This proves the first claim.

For the second claim, assume that there exists a minimum vertex cover of size
$|V|/2$ of $G$ and $|E|>|E'|$. This implies at least one edge $\UEdge{u,v} \in
E$, with $\UEdge{u,v} \not\in E'$. Deleting $\UEdge{u,v}$ from $G$ gives a
smaller graph $G'$. $E'$ is a perfect matching, not only of $G$, but also of
$G'$. From the first claim, we know the minimal size of a vertex cover of $G'$
is at least $|V|/2$. Since removing $\UEdge{u,v}$  from $G$ does not change the
size of a minimum vertex cover, $G$ is \Reducible.
\end{proof}
%-----------------------------------------------------------------------

Since there exists a perfect matching for cycles of even length,
Observation~\ref{ops:perfect_matching} implies:

%-----------------------------------------------------------------------
\begin{corollary}
  \label{cor:evencycles}
  Cycles of even length $2k$ with $k\ge 2$ are \VCR.
\end{corollary}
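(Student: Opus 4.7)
The plan is to invoke Observation~\ref{ops:perfect_matching} directly. For the even cycle $C_{2k} = (V,E)$ with $V = \{v_1, \ldots, v_{2k}\}$ and $E = \{\UEdge{v_i, v_{i+1}} : 1 \le i < 2k\} \cup \{\UEdge{v_{2k}, v_1}\}$, I would first exhibit the perfect matching $E' = \{\UEdge{v_{2i-1}, v_{2i}} : 1 \le i \le k\}$, which clearly covers every vertex exactly once and has $|E'| = k = |V|/2$.

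Next I would verify the two hypotheses of Observation~\ref{ops:perfect_matching}. The observation gives the lower bound $|V|/2 = k$ on the minimum cover size for free; for the matching upper bound, the set $\{v_{2i} : 1 \le i \le k\}$ covers every edge of $C_{2k}$ (since each edge is of the form $\UEdge{v_j, v_{j+1}}$ where exactly one index is even), so the minimum cover size is exactly $k = |V|/2$. Finally, $|E| = 2k > k = |E'|$ whenever $k \ge 2$ (indeed whenever $k \ge 1$, but the statement restricts to $k \ge 2$), and there is no restriction to check beyond these.

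Having checked all the hypotheses of Observation~\ref{ops:perfect_matching}, the observation directly concludes that $C_{2k}$ is \VCR{}, which is the claim.

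There is no real obstacle here: this is a short application of the previous observation; the only thing to make sure of is that the inequality $|E| > |E'|$ is strict in the regime $k \ge 2$, which is immediate since $2k > k$. The restriction $k \ge 2$ in the corollary avoids the degenerate case $k = 1$ (the multigraph/double edge on two vertices, which does not arise as a simple cycle anyway).
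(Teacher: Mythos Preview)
Your proposal is correct and follows exactly the approach the paper takes: the paper derives Corollary~\ref{cor:evencycles} as an immediate consequence of Observation~\ref{ops:perfect_matching} by noting that even cycles have a perfect matching, and you have simply spelled out the easy verification of the hypotheses (perfect matching, cover size exactly $|V|/2$, and $|E|>|E'|$).
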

%-----------------------------------------------------------------------

%-----------------------------------------------------------------------
\begin{observation}
  \label{ops:2endpoints}
  Let $G = (V,E)$ be a connected undirected graph and let $U\subset V$ be a
  minimum vertex cover for $G$. Assume that there exists an edge $e\in E$ such
  that both endpoints of $e$ are in the cover $U$. Then, either $G$ is
  \Reducible{} or there exists a minimum vertex cover $U'$ for $G$ such that
  only one of the endpoints of $e$ is in the cover $U'$.
\end{observation}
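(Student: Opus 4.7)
The plan is to study what removing $e=\{u,v\}$ does to the minimum cover size. Put $G':=(V,E\setminus\{e\})$ and write $\tau(H)$ for the size of an MVC of a graph $H$. Because $U$ remains a cover of $G'$ and removing an edge cannot raise $\tau$, we have $\tau(G')\le|U|=\tau(G)$, and the proof splits on whether this inequality is strict.

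If $\tau(G')<|U|$, I would take any MVC $W$ of $G'$. Since $|W|<|U|$ and $U$ is minimum, $W$ cannot already be a cover of $G$; the only edge of $G$ absent from $G'$ is $e$, so $W$ must fail to cover $e$, forcing $u,v\notin W$. Then $W\cup\{u\}$ is a cover of $G$ of size $|W|+1\le|U|$, and minimality of $U$ gives $|W|+1=|U|$. The set $U':=W\cup\{u\}$ is therefore an MVC of $G$ with $u\in U'$ and $v\notin U'$, yielding the second alternative of the statement.

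If instead $\tau(G')=|U|$, every MVC of $G'$ already has the optimal size for $G$. The target is then to conclude that $e$ itself witnesses reducibility in the sense of Definition~2.1: an MVC $W$ of $G'$ that also covers $e$ is automatically an MVC of $G$, and I would argue that one may always assume every such $W$ covers $e$, so every solution of MVC on $G'$ is a solution on $G$. That would give the first alternative of the observation.

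The main obstacle is the residual subcase $\tau(G')=|U|$ together with some MVC $W$ of $G'$ that misses $e$. In that subcase, $u,v\notin W$ forces $N_G(u)\setminus\{v\}\subseteq W$ and $N_G(v)\setminus\{u\}\subseteq W$ via the cover property on $G'$. I would then attempt a swap along the symmetric difference $U\triangle W$, replacing $v$ in $U$ by a suitable neighbor of $v$ drawn from $W\setminus U$, and show that this either contradicts the minimality of $|U|$ or produces an MVC of $G$ containing $u$ but not $v$, again delivering the second alternative. Verifying that this swap is always available is the delicate step that the whole argument hinges on.
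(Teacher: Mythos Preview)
Your first case, $\tau(G')<|U|$, is correct and is exactly the paper's argument: the paper assumes $G$ is $\alpha$-critical, which by the characterization stated immediately after Definition~2.1 (``deleting any edge reduces the minimum cover size'') gives $\tau(G')<\tau(G)=|U|$, and then runs precisely your construction $U'=W\cup\{u\}$.

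The gap is in your second case. Under that same characterization, $\tau(G')=|U|$ \emph{is} the assertion that $G$ is uncritical, so the first alternative of the observation holds immediately and nothing remains to prove. You instead try to verify Definition~2.1 literally for the edge $e$ and, in the residual subcase, fall back on a single-vertex swap along $U\triangle W$; that swap does not work in general. Take the bowtie on $\{1,\dots,5\}$ with triangles $123$ and $345$, let $e=\{2,3\}$, $U=\{2,3,4\}$, and $W=\{1,4,5\}$ (an MVC of $G-e$ avoiding both endpoints, with $\tau(G-e)=3=\tau(G)$). Here $W\setminus U=\{1,5\}$; replacing $3$ in $U$ by $1$ leaves the edge $\{3,5\}$ uncovered, and replacing it by $5$ leaves $\{1,3\}$ uncovered, so neither swap produces a cover of $G$ and no contradiction to minimality appears either. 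The ``delicate step'' you flag is therefore not merely delicate---it fails as stated. The repair is to drop the swap entirely and invoke the paper's working characterization of uncritical, after which your case split collapses to the paper's proof.
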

%-----------------------------------------------------------------------
\begin{proof}
Let $G=(V,E)$ be a connected undirected graph and let 
$U\subset V$ be a minimum
vertex cover for $G$. Let $e=\UEdge{v_1,v_2}\in E$ such that 
both endpoints of $e$
are in $U$. Assume that $G$ is \Irreducible. Then 
$G'=(V,E\setminus\{e\})$ has a
minimum vertex cover $U''$ such that non of the two endpoints 
of the deleted
edge $e$ is in the cover; otherwise, $U''$ has already been 
a cover for $G$ --
contradicting the assumption that $U$ is a minimum vertex cover.

Since $U''$ covers $G'$ both sets $U''\cup\{v_1\}$ 
and $U''\cup\{v_2\}$ cover
$G$ and include only one of the endpoints of $e$.
\end{proof}
%-----------------------------------------------------------------------

Given a graph $G=(V,E)$ and a minimum cover $U \subseteq V$, such that no
$\UEdge{u,v} \in E$ exists with $u,v \in U$, then the graph is bipartite. In
that case, the minimum cover is one side of the bipartite decomposition of the
vertex set.

Furthermore, one can show that \VCI graphs have to be $2$-connected.

%-----------------------------------------------------------------------
\begin{theorem}
  \label{th:2connected}
  Let $G = (V, E)$ be a graph with an articulation vertex $u$, then $G$ is \VCR.
\end{theorem}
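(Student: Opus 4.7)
The plan is a proof by contradiction: I assume $G$ is \VCI{} and produce a vertex cover of $G$ of size $\mu(G)-1$, where $\mu$ denotes the minimum vertex cover size. Since $u$ is an articulation vertex, $G-u$ has at least two components; I would group them into two nonempty parts $V_1, V_2$ with $V = V_1 \cup V_2 \cup \{u\}$, and set $G_i := G[V_i \cup \{u\}]$. Because $G$ is connected, $u$ has a neighbor $v_i \in V_i$ for each $i \in \{1,2\}$. I will also write $\mu_u(H)$, $\mu_{\bar u}(H)$ for the minimum cover size of a subgraph $H$ containing $u$ among covers that do, respectively do not, include $u$; the trivial bound $\mu_u(H) \le \mu_{\bar u}(H) + 1$ (add $u$ to any $u$-avoiding cover) will be the key inequality.

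First I would apply criticality to $e_1 := \UEdge{u, v_1}$. A minimum cover $U^{(1)}$ of $G \setminus \{e_1\}$ has size $\mu(G)-1$ and can contain neither $u$ nor $v_1$, since otherwise $U^{(1)}$ would already cover $e_1$ and hence $G$, contradicting criticality at $e_1$. Hence $U^{(1)} \cup \{u\}$ is a minimum cover of $G$ of size $\mu(G)$ containing $u$. Restricting it to each $V(G_j)$ must then yield a \emph{minimum} cover of $G_j$ containing $u$, by a standard swap argument: if the restriction to some $G_j$ were not minimum, replacing it by a smaller $u$-containing cover of $G_j$ would beat $\mu(G)$. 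This pins down $\mu(G) = \mu_u(G_1) + \mu_u(G_2) - 1$, and in addition $U^{(1)} \cap V(G_2)$ is a $u$-avoiding cover of $G_2$ of size $\mu_u(G_2) - 1$, so $\mu_{\bar u}(G_2) \le \mu_u(G_2) - 1$; combined with the trivial bound this forces $\mu_{\bar u}(G_2) = \mu_u(G_2) - 1$.

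By the symmetric argument with $e_2 := \UEdge{u, v_2}$, I obtain $\mu_{\bar u}(G_1) = \mu_u(G_1) - 1$. Now let $C_i$ be a minimum $u$-avoiding cover of $G_i$. Since $V(G_1) \cap V(G_2) = \{u\}$ and both $C_i$ avoid $u$, the union $C_1 \cup C_2$ is a cover of $G$ of size $\mu_{\bar u}(G_1) + \mu_{\bar u}(G_2) = \mu_u(G_1) + \mu_u(G_2) - 2 = \mu(G) - 1$, contradicting the minimality of $\mu(G)$. Hence $G$ must be \VCR{}.

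The step I expect to be trickiest is the bookkeeping around $u$: justifying via the swap argument that the restrictions of $U^{(1)} \cup \{u\}$ are \emph{minimum} covers of the $G_j$ containing $u$, and consistently applying the $-1$ correction for $u$ belonging to both sides. Once those are in place, the contradiction collapses into one line of arithmetic.
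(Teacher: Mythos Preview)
Your argument is correct. The swap argument in step~6 works exactly as you describe: if the restriction $A_j := (U^{(1)}\cup\{u\}) \cap V(G_j)$ were not a minimum $u$-containing cover of $G_j$, replacing it by a smaller $u$-containing cover $B_j$ yields $B_j \cup A_{3-j}$, which still covers $G$ (every edge of $G$ lies in $G_1$ or $G_2$) and has size $|B_j|+|A_{3-j}|-1 < \mu(G)$ because the overlap is exactly $\{u\}$. The rest of the arithmetic is clean, and your final cover $C_1\cup C_2$ really does cover $G$ since a $u$-avoiding cover of $G_i$ must contain every $G_i$-neighbor of $u$.

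The paper takes a related but structurally different route. It first proves as a separate lemma that if some minimum cover of $G_1$ contains $u$ then $G$ is uncritical, and from this (together with a second lemma on how minimum covers of $G$ restrict to the $G_i$) deduces the intermediate statement that in a \VCI{} graph with articulation vertex $u$, \emph{no} minimum cover of $G$ contains $u$. The contradiction then comes in one stroke: deleting an edge $\{u,v\}$ and adding $u$ back to the resulting smaller cover produces a minimum cover of $G$ containing $u$. In other words, the paper stops exactly where your step~5 produces $U^{(1)}\cup\{u\}$, because that already contradicts its prepared lemma. Your version is self-contained and bypasses those auxiliary lemmas by doing the explicit bookkeeping $\mu(G)=\mu_u(G_1)+\mu_u(G_2)-1$ and $\mu_{\bar u}(G_i)=\mu_u(G_i)-1$; the paper's version is more modular and yields reusable side statements (e.g., that all neighbors of an articulation vertex lie in every minimum cover of a \VCI{} graph). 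Both approaches hinge on the same first move: use criticality at an edge incident to $u$ to force a minimum cover of $G$ through $u$.
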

%-----------------------------------------------------------------------

To prove Theorem~\ref{th:2connected} we have to start with showing some 
useful lemmas.
Recall that a vertex is called an articulation vertex of a connected graph if its removal will disconnect the graph. 

%-----------------------------------------------------------------------
\begin{figure}[h]
  \begin{center}
    \scalebox{0.25}{
      \input{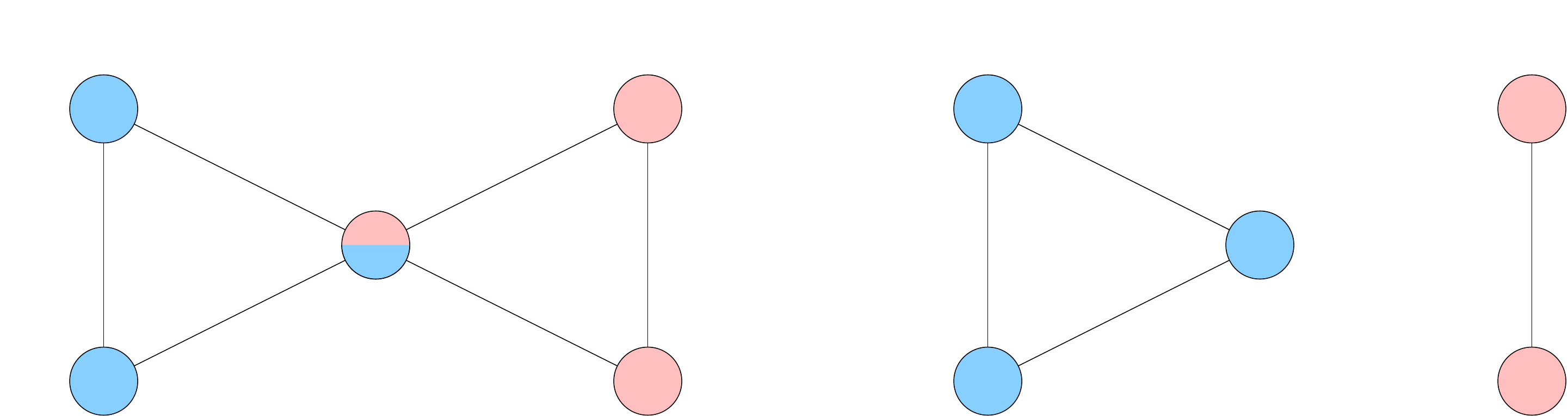_t}
    }
  \end{center}
  \vspace{-1em}
  \caption{Two graphs with a minimum vertex cover size of 3. Since the 
  second graph is a subgraph of the first one, the first is \VCR{}.}
  \label{fig:2x3Cycle}
\end{figure}
%-----------------------------------------------------------------------

\begin{lemma}
\label{l:articulationVertex01}
Let $G=(V,E)$ be a graph with a articulation vertex $u$ that subdivides $G$ into
two subgraphs $G_1$ and $G_2$ (both include $u$). For a vertex cover $C$ of $G$,
let $C_1$ denote the subset of $C$ that denotes a cover for $G_1$ and let $C_2$
denote the subset of $C$ that denotes a cover for $G_2$. If $C$ is a minimum
vertex cover for $G$ then either $C_1$ is a minimum vertex cover for $G_1$, or
$C_2$ is a minimum vertex cover for $G_2$, or both sets are minimum vertex
covers for the two corresponding subgraphs.
\end{lemma}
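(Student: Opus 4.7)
The plan is to proceed by contradiction: assume that \emph{neither} $C_1$ is minimum for $G_1$ \emph{nor} $C_2$ is minimum for $G_2$, and build a vertex cover of $G$ strictly smaller than $C$, contradicting the minimality of $C$.

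First, I would pin down the combinatorics of the split. Since $u$ is an articulation vertex with $V(G_1)\cap V(G_2)=\{u\}$, every edge of $G$ lies either entirely in $G_1$ or entirely in $G_2$. It follows that $C_1 := C\cap V(G_1)$ is a vertex cover of $G_1$ and $C_2 := C\cap V(G_2)$ is a vertex cover of $G_2$, and moreover $C = C_1\cup C_2$ with $C_1\cap C_2\subseteq\{u\}$. Hence
\[
|C| \;=\; |C_1| + |C_2| - |C_1\cap C_2| \;\ge\; |C_1|+|C_2|-1.
\]

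Next, under the contradiction hypothesis, pick minimum vertex covers $C_1^*$ of $G_1$ and $C_2^*$ of $G_2$ with $|C_1^*|\le |C_1|-1$ and $|C_2^*|\le |C_2|-1$. Set $C^* := C_1^*\cup C_2^*$. Since every edge of $G$ lies in $G_1$ or $G_2$, $C^*$ is a vertex cover of $G$. Because $C_1^*\cap C_2^*\subseteq V(G_1)\cap V(G_2)=\{u\}$, we get
\[
|C^*| \;\le\; |C_1^*| + |C_2^*| \;\le\; (|C_1|-1)+(|C_2|-1) \;=\; |C_1|+|C_2|-2 \;\le\; |C|-1,
\]
which contradicts the minimality of $C$ for $G$. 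Therefore at least one of $C_1$, $C_2$ must already be a minimum vertex cover of its subgraph, which is exactly the statement.

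The only delicate point, and what I would double-check carefully, is the bookkeeping around the articulation vertex $u$: whether $u\in C$ or not affects whether $C_1\cap C_2$ is empty or $\{u\}$, and whether $u$ lies in $C_1^*$ and/or $C_2^*$ affects the size of $C^*$. The inequality above is written so that it works in every case, because the worst case for $|C|$ (it could be as small as $|C_1|+|C_2|-1$ when $u\in C$) is still strictly larger than the best case for $|C^*|$ (it could be as large as $|C_1^*|+|C_2^*|$ when $u$ is in neither). This uniform bound is what makes the proof go through without case splits on the status of $u$.
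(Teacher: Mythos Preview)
Your proof is correct and follows essentially the same approach as the paper: assume for contradiction that neither restriction is minimum, take optimal covers $C_1^*,C_2^*$ of the two subgraphs, and show that $C_1^*\cup C_2^*$ is a vertex cover of $G$ of size at most $|C_1|+|C_2|-2\le |C|-1$, contradicting the minimality of $C$. Your write-up is in fact more explicit than the paper's about the role of the articulation vertex in the inequality $|C|\ge |C_1|+|C_2|-1$.
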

\begin{proof}
We prove this observation by a contradiction.
Let $C$ be an optimal vertex cover of $G$, $C_1$ be the subset 
of $C$ that denotes a cover for $G_1$, and $C_2$ be the subset 
of $C$ that denotes a cover for $G_2$. 

%-----------------------------------------------------------------------
\begin{figure}[h]
  \begin{center}
    \scalebox{.3}{
      \input{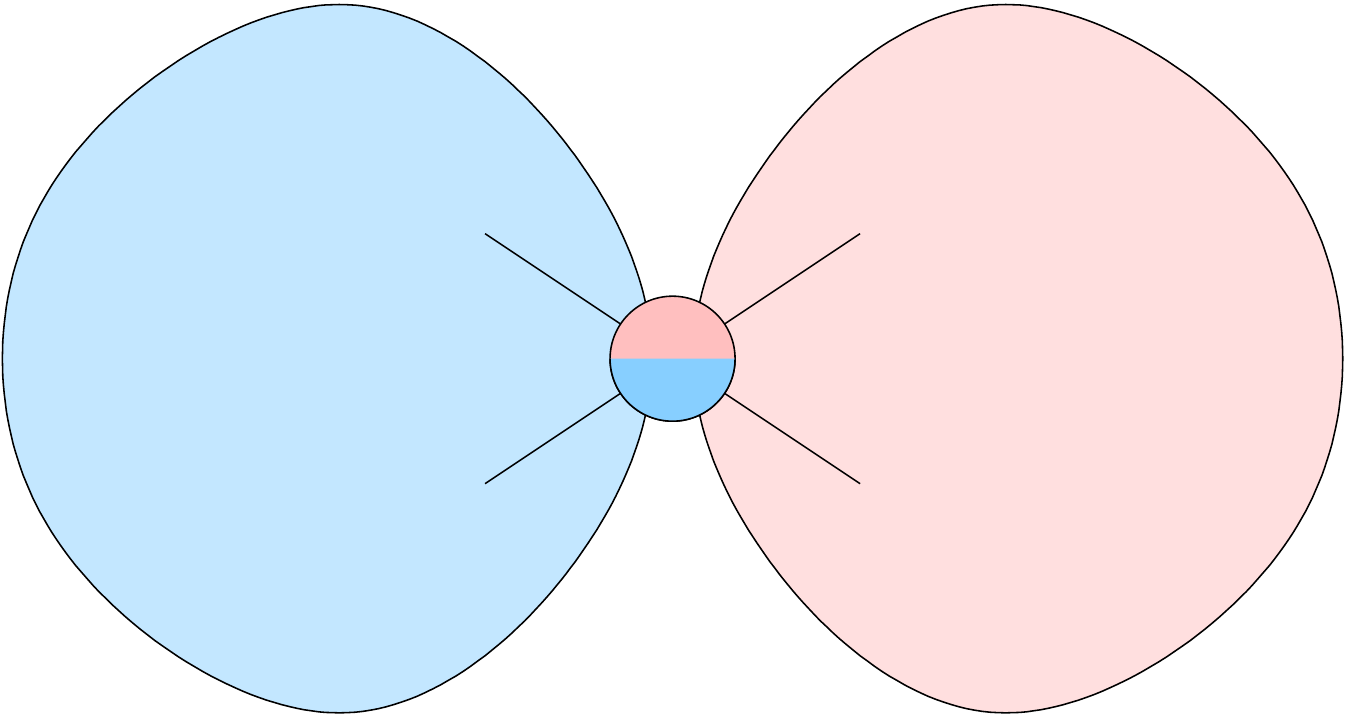_t}
    }
  \end{center}
  \vspace{-1em}
  \caption{Articulation vertex $u$ with two subgraphs $G_1$ and $G_2$.}
  \label{fig:articVert}
\end{figure}
%-----------------------------------------------------------------------

Assume that neither $C_1$ nor $C_2$
is an optimal vertex cover for $G_1$ or $G_2$, respectively.
Let $C_i'$ be an optimal vertex cover for $G_i$, then $C'=C_1'\cup C_2'$
is a vertex cover for $G$ of size $|C'|\le |C_1'|+|C_2'|$. According to our
assumption $C'$ is not an optimal vertex cover; thus, we have
\begin{eqnarray*}
|C_1'|+|C_2'| & \ge & |C'| \ \ > \ \ |C| \\ 
              & \ge & |C_1|+|C_2|-1 \ \ > \ \ |C_1'|+|C_2'|+1
\end{eqnarray*}
-- a contradiction.
\end{proof}

\begin{lemma}
\label{l:articulationVertex02}
Let $G=(V,E)$ be a graph with an articulation vertex $u$ that 
subdivides $G$ into two subgraphs $G_1$ and $G_2$ 
(both include $u$). If there exists a minimum vertex cover $C_1$ 
for $G_1$ with $u\in C_1$, then $G$ is \VCR.
\end{lemma}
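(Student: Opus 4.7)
The plan is to exhibit a single edge of $G$ whose removal preserves the minimum vertex-cover size; this directly witnesses that $G$ is \Reducible. Since $u$ is an articulation vertex, $u$ has at least one neighbor in $G_2$; pick any $v\in N_{G_2}(u)$ and set $f=\UEdge{u,v}$. Write $\tau(H)$ for the size of a minimum vertex cover of $H$, and let $\tau_u(H)$ (resp.\ $\tau_{\bar u}(H)$) denote the smallest size of a vertex cover of $H$ that contains (resp.\ omits) $u$. The goal is to prove $\tau(G-f)=\tau(G)$.

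The bound $\tau(G-f)\le\tau(G)$ is immediate, since any vertex cover of $G$ is also one of $G-f$. For the converse, let $C^{*}$ be any minimum vertex cover of $G-f$ and split it as $C_1^{*}=C^{*}\cap V(G_1)$ and $C_2^{*}=C^{*}\cap V(G_2)$. If $u\in C^{*}$ or $v\in C^{*}$, then $C^{*}$ already covers $f$ and hence is a vertex cover of $G$, giving $|C^{*}|\ge\tau(G)$.

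The crucial case is $u,v\notin C^{*}$. Here $C_1^{*}$ is a vertex cover of $G_1$ omitting $u$, so $|C_1^{*}|\ge\tau_{\bar u}(G_1)\ge\tau(G_1)$. For $C_2^{*}$, observe that both $C_2^{*}\cup\{u\}$ and $C_2^{*}\cup\{v\}$ are vertex covers of $G_2$ (each covers $f$ via its added vertex), whence
\[
|C_2^{*}|\ge\max\{\tau_u(G_2),\tau_{\bar u}(G_2)\}-1.
\]
Using the hypothesis $\tau_u(G_1)=\tau(G_1)$ together with Lemma~\ref{l:articulationVertex01} and the usual upper bounds obtained by gluing $u$-covers of the two sides at $u$, one derives $\tau(G)=\tau(G_1)+\tau(G_2)-1$ when $\tau_u(G_2)=\tau(G_2)$ and $\tau(G)=\tau(G_1)+\tau(G_2)$ otherwise. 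In the former sub-case $\tau_{\bar u}(G_2)\ge\tau(G_2)$ yields $|C^{*}|\ge\tau(G_1)+\tau(G_2)-1=\tau(G)$; in the latter $\tau_u(G_2)\ge\tau(G_2)+1$ gives $|C_2^{*}|\ge\tau(G_2)$, so again $|C^{*}|\ge\tau(G_1)+\tau(G_2)=\tau(G)$.

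The main obstacle is precisely the sub-case where neither endpoint of $f$ lies in $C^{*}$: the naive argument $|C^{*}\cup\{u\}|\ge\tau(G)$ only delivers $|C^{*}|\ge\tau(G)-1$, which is off by one. The trick is to recover the missing unit by taking the stronger of the two bounds obtained by augmenting $C_2^{*}$ with $u$ versus with $v$, and then to use the hypothesis on $G_1$ to pin down the exact value of $\tau(G)$ through the decomposition at the articulation vertex.
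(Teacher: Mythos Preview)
Your argument is correct, but it follows a different path from the paper's proof. The paper argues by contradiction: assuming $G$ is \Irreducible, it deletes \emph{all} edges between $u$ and $V(G_2)\setminus\{u\}$ at once to obtain $G'$ with $\tau(G')<\tau(G)$, then replaces the $G_1$-part of an optimal cover of $G'$ by the given $C_1$ (which contains $u$); the resulting set is no larger, still covers $G'$, and---because it now contains $u$---covers all deleted edges and hence $G$, a contradiction. Your proof instead removes a single edge $f=\{u,v\}$ and shows $\tau(G-f)=\tau(G)$ by a direct case analysis, the non-trivial case being $u,v\notin C^{*}$, which you handle via the bound $|C_2^{*}|\ge\max\{\tau_u(G_2),\tau_{\bar u}(G_2)\}-1$ together with the exact decomposition formula for $\tau(G)$. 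The paper's route is shorter and avoids the sub-case split entirely (the swap-in-$C_1$ trick does the work in one stroke), while your approach has the merit of making explicit the formula $\tau(G)=\tau(G_1)+\tau(G_2)-[\tau_u(G_2)=\tau(G_2)]$ as a byproduct. Note that your appeal to Lemma~\ref{l:articulationVertex01} is not strictly necessary for deriving this formula; the lower bounds follow directly by restricting an optimal cover of $G$ to $V(G_1)$ and $V(G_2)$ and distinguishing whether $u$ lies in it.
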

\begin{proof}
Assume that there exists a minimum vertex cover $C_1$ 
for $G_1$ with $u\in C_1$. Furthermore, assume that $G$ is \VCI. 

Let $G'$ be the subgraph of $G$ that does not contain any edge $\{u,v\}$ that
connects $u$ with another vertex of $G_2$. Since $G$ is \VCI, there exists
vertex cover $C'$ for $G'$ that is smaller than the minimum vertex cover size of
$G$. If we now replace the part of $C'$ that denotes a vertex cover for $G_1$ by
$C_1$, we get a vertex cover $C$ for $G'$ of the same size as $C'$ that contains
$u$. Thus, $C$ is also a vertex cover for $G$ -- contradicting our assumption
that the minimum vertex cover size of $G$ is larger than $|C'|=|C|$ and thus $G$
is \VCI.
\end{proof}

\begin{conclusion}
\label{conc:artic1}
Let $G=(V,E)$ be a \VCI{} graph with a articulation vertex $u$. Then every 
minimum vertex cover of $G$ does not include $u$.
\end{conclusion}

\begin{conclusion}
\label{conc:artic2}
Let $G=(V,E)$ be a \VCI{} graph with a articulation vertex $u$. Then every 
neighbor of $u$ in $G$ has to be an element of every 
minimum vertex cover of $G$.
\end{conclusion}

Now we can prove Theorem~\ref{th:2connected}:

\begin{proof}[of Theorem~\ref{th:2connected}]
For the contrary, assume that $G$ is \VCI.
Let $v$ be a neighbor of $u$ in $G$ and let $G'=(V,E\setminus\{\{u,v\}\})$ be the
subgraph of $G$ after removing the edge $\{u,v\}$, then we can distinguish
between the following two cases:
\begin{itemize}
\item The minimum vertex cover $C'$ of $G'$ is smaller than the minimum vertex cover for $G$. 
But this implies that $C'\cup\{u\}$ gives a vertex cover for $G$ of size $|C|$ -- contradicting 
conclusion~\ref{conc:artic1}.
\item The minimum vertex cover $C'$ of $G'$ is of the same size as 
the minimum vertex cover for $G$. 
But this implies that $G$ is \VCR.
\end{itemize}
\end{proof}

Another useful observation is the following:

%-----------------------------------------------------------------------
\begin{observation}
  \label{obs:elementsOfC}
  Let $G = (V,E)$ be a \VCI graph. Then, for every vertex $u \in V$, there
  exists a minimum vertex cover $C$ of $G$ with $u \in  C$.
\end{observation}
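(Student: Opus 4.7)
My plan is to exploit the defining property of \VCI graphs: removing any edge strictly decreases the minimum vertex cover size. Fix an arbitrary vertex $u \in V$. Since $G$ is connected and \VCI (so in particular $|E| \ge 1$ and $u$ cannot be isolated---otherwise $G$ would not be connected, and the case $|V|=1$ is trivial), there exists at least one edge $e = \{u,v\} \in E$ incident to $u$.

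Next, I would look at the graph $G' = (V, E \setminus \{e\})$. Because $G$ is \VCI, every minimum vertex cover $C'$ of $G'$ satisfies $|C'| < |\MVC(G)|$, in fact $|C'| = |\MVC(G)| - 1$ (as adding one well-chosen endpoint of $e$ always gives a cover of $G$). The key step is then to show that $u \notin C'$ and $v \notin C'$. Indeed, if either of $u$ or $v$ were in $C'$, then $C'$ would already cover $e$ and hence be a vertex cover of $G$ of size strictly below $|\MVC(G)|$---contradicting the minimality of $\MVC(G)$.

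With that established, the set $C := C' \cup \{u\}$ is a vertex cover of $G$: every edge of $E \setminus \{e\}$ is covered by $C'$, and $e$ is covered by $u$. Its size is $|C'| + 1 = |\MVC(G)|$, so $C$ is a minimum vertex cover of $G$ containing $u$, as required.

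I do not expect any serious obstacle here; the argument is essentially the same "swap a covered endpoint for the other endpoint" trick already used implicitly in Observation~\ref{ops:2endpoints} and in the proof of Theorem~\ref{th:2connected}. The only point that needs mild care is checking that $u$ actually has an incident edge to delete, which follows from connectedness of $G$ together with the (implicit) assumption that the \VCI graph under consideration is nontrivial.
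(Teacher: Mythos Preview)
Your argument is correct and follows the same approach as the paper: pick an edge $\{u,v\}$ incident to $u$, use the \VCI property to get a strictly smaller minimum cover $C'$ of $G'=(V,E\setminus\{\{u,v\}\})$, and observe that $C'\cup\{u\}$ is a minimum vertex cover of $G$ containing $u$. Your extra paragraph verifying $u,v\notin C'$ is not needed for the conclusion (since $C'\cup\{u\}$ covers $G$ regardless), but it does no harm.
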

%-----------------------------------------------------------------------

\begin{proof}
Let $v$ be a neighbor of $u$ in $G$ and let $G'=(V,E\setminus\{\{u,v\}\})$ be
the subgraph of $G$ after removing the edge $\{u,v\}$, then (because $G$ is
\VCI) the minimum vertex cover $C'$ of $G'$ is smaller than the minimum vertex
cover for $G$. Thus, $C'\cup\{u\}$ is a minimum vertex cover of $G$.
\end{proof}

%-----------------------------------------------------------------------
\section{Generating Larger Graphs from \CIrreducible Instances}
\label{sec:generating}
%-----------------------------------------------------------------------

One relevant application of critical graphs is the construction of larger
graphs. For this purpose, we need a (randomized) generation process which (1)
allows to construct all possible graphs, and (2) preserves the knowledge of a
hidden solution, i.e. about the minimum vertex cover. This section presents such
a generation process.

%-----------------------------------------------------------------------
\begin{definition}
  \label{def:random_processes}
  Let $\cB = \{B_1, B_2, \ldots\}$ be a set of graphs where each graph is given
  by a triple $B_i = (U, V, E)$ with two disjoint sets of vertices $U$ and $V$
  such that $U$ gives a minimum vertex cover for the graph $(U \cup V, E)$.
  Define the following random processes
  \begin{itemize}
    \item Given $\cB$ and $\ell, m, n \in \Nat$ for the vertex cover size
    $\ell$, an upper bound $m$ for the number of edges, and an upper bound $n$
    for the number of vertices, then $\cG_{\cB, \ell, m, n}^1$ is a random
    variable that uniformly at random gives a collection $S_1, \ldots, S_k$ of
    elements of $\cB$ (some elements of $\cB$ may repeat) with $S_i
    = (U_i, V_i, E_i)$ s.t.
    $$
      \left|\bigcup_{i = 1}^k U_i\right| = \ell \text{ and }
      \left|\bigcup_{i = 1}^k V_i\right|+\ell \le n\text{ and } 
      \left|\bigcup_{i = 1}^k E_i\right| \le m\ .
    $$
    Let $C(S_1, \ldots, S_k) = 
    (\bigcup_{i = 1}^k U_i, \bigcup_{i = 1}^k V_i, \bigcup_{i = 1}^k E_i)$.
    \item Given a triple $B = (U, V, E)$ with $|V|+|U|\le n$, then
    $\cG_{n}^2(B)$ will be the triple $(U, V \cup V', E)$ where $V'$ denotes a
    set of $n-|U|-|V|$ new vertices ($V'\cap (U \cup V) = \emptyset$).
    \item Given a triple $B = (U, V, E)$ with $|E| \le m$, let $\cG_{m}^3(B)$ be
    a random variable that uniformly at random adds $m - |E|$ new edges $E'
    \subset U \times (U \cup V)$ to $B$.
  \end{itemize}
\end{definition}
%-----------------------------------------------------------------------

Since none of the defined processes reduces the cover size, we can conclude:

%-----------------------------------------------------------------------
\begin{theorem}
  \label{the:randProc-1}
  Let $\cB = \{B_1, B_2, \ldots\}$ be a set of graphs, where each graph is given
  by a triple $B_i = (U, V, E)$ with two disjoint sets of vertices $U$ and $V$
  such that $U$ gives a minimum vertex cover for $(U\cup V, E)$. Then, for every
  random graph $(U', V', E')$ in the range of 
  $
    \cG_{m}^3(\cG_{n}^2(C(\cG_{\cB, \ell, m, n}^1))),
  $, $U'$ is a minimum vertex cover for $(U'\cup V', E')$ of size $\ell$, and
  the graph $(U'\cup V', E')$ has $n$ vertices and $m$ edges.
\end{theorem}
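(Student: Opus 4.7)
The plan is to track, through the three random processes in turn, the invariant that $U$ is a minimum vertex cover of the current graph of size exactly $\ell$. The cardinality facts are immediate from the defining constraints: $\cG_{\cB,\ell,m,n}^1$ enforces $\left|\bigcup_i U_i\right|=\ell$ together with the upper bounds $\left|\bigcup_i V_i\right|\le n-\ell$ and $\left|\bigcup_i E_i\right|\le m$; $\cG_n^2$ then pads to exactly $n$ vertices with isolated additions; and $\cG_m^3$ tops the edge set up to exactly $m$. Likewise, the cover property is clear at every stage: each $U_i$ already covers $E_i$, so $\bigcup_i U_i$ covers $\bigcup_i E_i$; isolated vertices carry no edges; and every edge added by $\cG_m^3$ lies in $U\times(U\cup V)$ and is therefore incident to $U$.

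The main content is to establish minimality. For the union graph produced by $C(\cG_{\cB,\ell,m,n}^1)$, I would take an arbitrary vertex cover $C$ of the union and restrict it to each component: $C\cap(U_i\cup V_i)$ is a cover of $B_i$, and minimality of $U_i$ in $B_i$ yields $|C\cap V_i|\ge |U_i\setminus C|$. Under the natural convention that each sampled copy contributes fresh $V$-vertices, so that the $V_i$'s are pairwise disjoint (and disjoint from every $U_j$ with $j\ne i$), summing these inequalities and using sub-additivity of the union gives
\[
 |C\cap V| \;=\; \sum_i |C\cap V_i| \;\ge\; \sum_i |U_i\setminus C| \;\ge\; \left|\bigcup_i (U_i\setminus C)\right| \;=\; |U\setminus C|.
\]
Adding $|C\cap U|$ to both sides yields $|C|\ge |U|$, so $U$ is a minimum cover of size $\ell$.

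The remaining two processes then preserve minimality by straightforward arguments. $\cG_n^2$ adds isolated vertices, which neither contribute to any cover nor affect the minimum cover size, so $U$ remains a minimum cover of size $\ell$. For $\cG_m^3$, if some cover $C'$ of the post-$\cG_m^3$ graph had $|C'|<|U|$, then $C'$ would also cover the pre-$\cG_m^3$ graph (whose edge set is a subset of the new one), contradicting the minimality of $U$ just established. The main obstacle is the first step's minimality argument: it rests on the implicit disjointness convention for the $V$-parts of distinct sampled copies, without which the union's minimum cover can drop below $\left|\bigcup_i U_i\right|$ (for instance when two copies share a $V$-vertex that is simultaneously adjacent to the two $U$-vertices they would otherwise contribute). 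Once that convention is made explicit, the sub-additivity inequality above closes the argument, and the remaining steps are routine.
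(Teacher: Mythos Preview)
Your proof is correct and follows essentially the same approach as the paper: the cardinality claims are read off from the definitions of $\cG_n^2$ and $\cG_m^3$, the cover property is immediate since every new edge is incident to $U$, and minimality is established by restricting a hypothetical smaller cover to each sampled subgraph $S_i$ and invoking the minimality of $U_i$. The paper's version simply asserts that the $S_i$ are vertex disjoint and then argues by pigeonhole that some $C\cap(U_i\cup V_i)$ would have size below $|U_i|$; your summation-and-subadditivity argument is a slightly more careful variant that only requires the $V_i$'s to be fresh, which is precisely the disjointness issue you flag.
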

\begin{proof}
The second claim follows directly from the definition 
of the two processes $\cG_{m}^3$ and $\cG_{n}^2$.

Furthermore, the process $\cG_{m}^3$, $\cG_{n}^2$, $C$, or 
$\cG_{\cB,\ell,m,n}^1$ adds an edge to the resulting graph 
that is not covered by the
cover sets that are given by the subgraphs 
$B_i=(U,V,E)\in \cB$ that are chosen by the
first process $\cG_{\cB,\ell,m,n}^1$. 
Thus, the covers are maintained.

Thus, it suffices to prove that the defined processes 
do not reduce the cover size.

Let $G=(V,E)$ be an arbitrary graph and let 
$G'=(V',E')$ be a subgraph of $G$,
i.e. $V'\subseteq V$ and $E'\subseteq E\cap(V'\times V')$. 
Furthermore, let $C$
be a vertex cover of $G$. Then $C\cap V'$ is a vertex cover of $G'$.

Now assume that there exists a graph $(U',V',E')$ in the range of
$$\cG_{m}^3(\cG_{n}^2(C(\cG_{\cB,\ell,m,n}^1)))$$ such
that $(U'\cup V', E')$ has a cover $C$ of size $<\ell$. 
Since the subgraphs
$S_i$ chosen by the first process are vertex disjoint 
there has to be such a
chosen subgraph $S_i=(U_i,V_i,E_i)$ such that 
$C_i=C\cap (U_i\cup V_i)$ is
strictly smaller than $|U_i|$ and $C_i$ is a vertex cover 
for $(U_i\cup V_i,E_i)$. This contradicts the assumption that 
$U_i$ is a minimum vertex cover
for $(U_i\cup V_i, E_i)$.
\end{proof}
%-----------------------------------------------------------------------

Thus, the three processes do not affect our a-priori knowledge of the minimum
cover size. It remains to show that any graph can be constructed by the three
processes. This observation follows by analyzing the reverse processes:

%-----------------------------------------------------------------------
\begin{theorem}
  \label{the:randProc-2}
  Let $\cB = \{B_1, B_2, \ldots\}$ with $B_i = (U, V, E)$ be a set of all
  \Irreducible{} graphs $G_i = (U \cup V, E)$, where $U$ and $V$ are disjoint
  and $U$ gives a minimum vertex cover for the graph $G_i$. Then the range of
  $\cG_{m}^3(\cG_{n}^2(C(\cG_{\cB, \ell, m, n}^1)))$ determines the set of all
  graphs of $n$ vertices, $m$ edges, and minimum vertex cover size $\ell$.
\end{theorem}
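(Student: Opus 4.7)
First I would note that by Theorem~\ref{the:randProc-1} the forward inclusion is already in hand: every graph produced by the composite process has exactly $n$ vertices, $m$ edges, and minimum cover size $\ell$. The plan therefore reduces to the reverse inclusion: for every graph $G=(W,F)$ with these parameters I must exhibit a choice of outcomes for $\cG_{\cB,\ell,m,n}^1$, $\cG_n^2$, and $\cG_m^3$ that reproduces $G$ exactly.

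Fix a minimum vertex cover $U\subseteq W$ of $G$, so $|U|=\ell$ and $V:=W\setminus U$ is an independent set. The central construction is a \emph{critical spanning subgraph} $G^*\subseteq G$: starting from $G$, iteratively delete any edge whose removal leaves the minimum cover size unchanged, until no such edge remains. This procedure terminates, yields a graph on the same vertex set $W$ with minimum cover size still $\ell$ (so $U$ is a minimum cover of $G^*$ as well), and has the property that removing any surviving edge strictly decreases the minimum cover size. In particular, each connected component of $G^*$ is \VCI{}, while trivial singleton components are vacuously so; every component therefore lies in $\cB$.

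Let $H_1,\ldots,H_k$ denote the components of $G^*$ and set $U_i:=U\cap V(H_i)$, $V_i:=V(H_i)\setminus U_i$. Two minimality checks are needed. First, no $u\in U$ can be an isolated vertex of $G^*$, since otherwise $U\setminus\{u\}$ would cover $G^*$, contradicting $|U|=\ell$; hence $U=\bigsqcup_i U_i$ with total size $\ell$. Second, each $U_i$ must be a minimum cover of $H_i$, for otherwise replacing $U_i$ inside $U$ by a strictly smaller cover of $H_i$ would yield a cover of $G^*$ of size below $\ell$. Consequently each $B_i:=(U_i,V_i,E(H_i))$ belongs to $\cB$, and choosing this collection as the output of $\cG_{\cB,\ell,m,n}^1$ produces, after $C$, the triple $(U,\bigsqcup_i V_i,E(G^*))$.

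The remainder is bookkeeping: $\cG_n^2$ supplies the $|V|-|\bigsqcup_i V_i|$ remaining vertices of $V$ as isolated vertices, bringing the vertex set up to $W$, and $\cG_m^3$ is asked to add the $|F|-|E(G^*)|$ edges of $F\setminus E(G^*)$. The latter is legal because $U$ covers $F$, so every edge of $F\setminus E(G^*)$ has at least one endpoint in $U$ and thus lies in $U\times(U\cup V)$, exactly as required. I expect the main obstacle to be the two minimality checks in the previous paragraph, i.e., arguing that the \emph{fixed} initial cover $U$ interacts cleanly with the components of $G^*$; both checks, however, reduce to one-line contradictions with $|U|=\ell$ being minimum on $G^*$.
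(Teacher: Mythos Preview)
Your proof is correct and follows essentially the same approach as the paper: iteratively delete non-critical edges until a critical spanning subgraph $G^*$ remains, then use its components as the output of $\cG^1$ and restore the deleted edges via $\cG^3$. Your presentation is more direct (constructive rather than by contradiction) and is more careful than the paper about verifying that the fixed cover $U$ restricts to a minimum cover on each component of $G^*$, but the underlying idea is the same.
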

\begin{proof}
For the contrary, assume that there exists a graph $G=(V,E)$ 
of $n$ vertices, $m$ edges, and minimum vertex cover size 
$\ell$ that is not in the range of
$\cG_{m}^3(\cG_{n}^2(C(\cG_{\cB,\ell,m,n}^1)))$. Let
$U\subset V$ be a vertex cover of $G$ of size $\ell$.
\begin{itemize}
  \item Then either $G$ is connected and 
  $G\not\in\cB$ and therefor $G$ is
  \Reducible{} or
  \item there exists a connected component $G'$ 
  of $G$ with $n'$ vertices, $m'$
  edges, and minimum vertex cover size $\ell'$ 
  where $G'$ is not in the range of
  $$\cG_{m'}^3(\cG_{n'}^2(C(\cG_{\cB,\ell',m',n'}^1)))$$ and
  therefore $G'\not\in\cB$ and thus $G'$ is \Reducible{}.
\end{itemize} 
In the following we will restrict ourselves to the first case. The second case
follows analogously by focusing on the components $G'$ that are not in the range
of $\cG_{m'}^3(\cG_{n'}^2(C(\cG_{\cB,\ell',m',n'}^1)))$. 

Since $G$ is \Reducible{}, there exists an edge $e$ that 
can be deleted from the
corresponding graph without reducing the minimum size of 
a vertex cover. Since
at least one the endpoints of $e$ is in $U$ 
this edge can be generated by the
process $\cG_{m}^3$. Hence, if $G$ is not in the range of
$$\cG_{m}^3(\cG_{n}^2(C(\cG_{\cB,\ell,m,n}^1)))$$ then
$G''=(V,E\setminus\{e\})$ is not in the range of
$$\cG_{m-1}^3(\cG_{n}^2(C(\cG_{\cB,\ell,m-1,n}^1)))\ .$$
This reduction step reduces the graph until we have 
deleted all edges from the
graph without changing the initial vertex cover $C$. 
Thus, $C$ has to be empty
and therefore also the initial set of edges has to be empty. 
Hence, $G$ has to
be a set of isolated vertices and $\ell=0$. But this graph 
is in the range of
$$\cG_{0}^3(\cG_{n}^2(C(\cG_{\cB,0,0,n}^1)))$$ and will be
generated by the subprocess $\cG_{\cB,0,0,n}^2$ -- a contradiction.
\end{proof}
%-----------------------------------------------------------------------

%-----------------------------------------------------------------------
\section{Circulant Graphs}
\label{sec:search}
%-----------------------------------------------------------------------

We will now discuss graph classes that can be seen as a
generalization of cycles and as extensions of cliques.

\begin{definition}
A circulant graph $\CirculantGraph(n, L)$ with $n\in \Nat$ and 
$L\subseteq \{1,\ldots,\rceil n/2\lceil\}$ is an undirected graph with $n$ 
vertices $v_0,\ldots,v_{n-1}$ where each vertex $v_i$ is adjacent to
both vertices $v_{(i+j)\bmod n}$ and $v_{(i-j)\bmod n}$ for all $j\in L$.
\end{definition}

To determine critical graphs, we analyzed circulant graphs of degree 4, 
i.e. $\CirculantGraph(n, L)$ with $n \in [2,80]$, $L=\{1,j\}$, and 
$j \in [2, 20]$. We identified $121$ critical graphs with degree $4$ 
in this range which are visualized in 
Figure~\ref{fig:critical-circulant-graphs-2}.
Furthermore, we implemented a search for all critical 
circulant graphs of degree 6, i.e. for $\CirculantGraph(n, L)$ 
with $|L|=3$. For $n \in [4,60]$, $L=\{1,i,j\}$, and $i, j \in [2, 20]$.
We determined in total $427$ critical graphs within this range. 
These are listed in Table~\ref{fig:critical-circulant-graphs-2} in the Appendix. 

%-----------------------------------------------------------------------
\begin{figure}[h]
  \begin{center}
    \scalebox{0.66}{
      \includegraphics{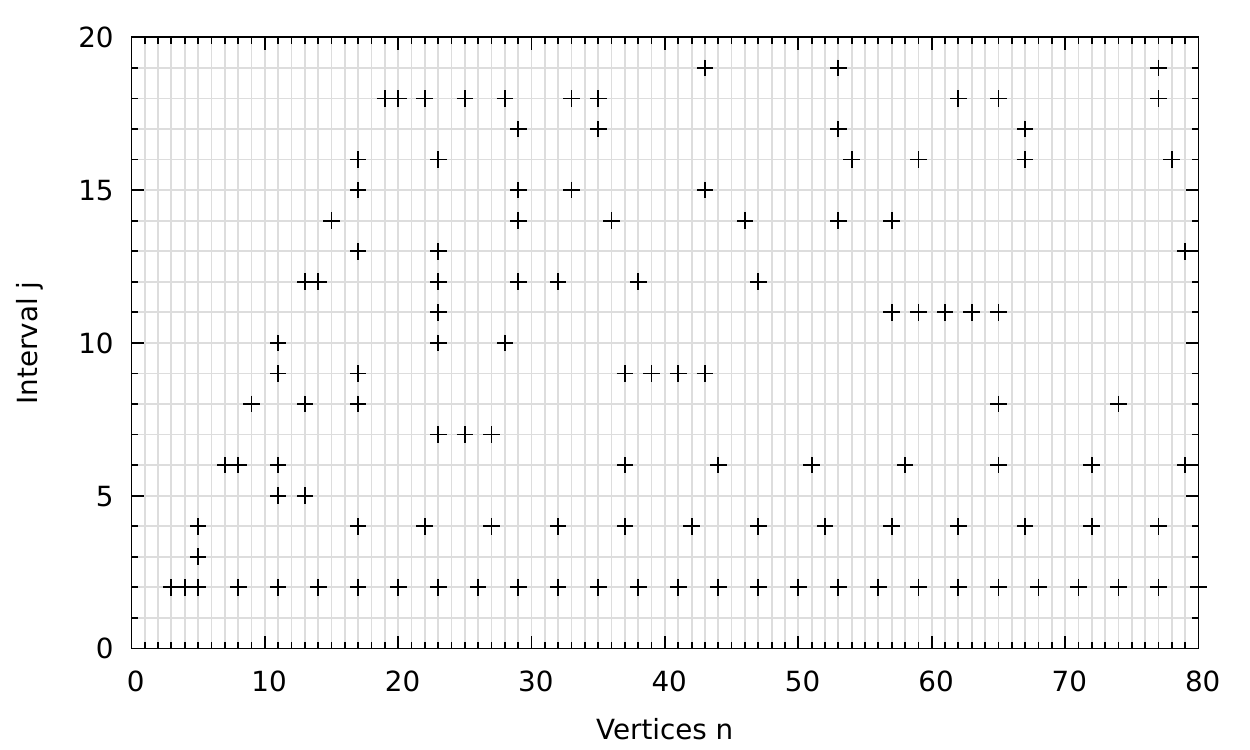}
    }
  \end{center}
  \vspace*{-1.0em}
  \caption{Visualization of critical circulant graphs of degree $4$.}
  \label{fig:critical-circulant-graphs-2}
\end{figure}
%-----------------------------------------------------------------------

To conclude this section we will present a general rule 
for a subset of circulant graphs to determine whether they are
critical or not. One can see that Fact~\ref{f:CycClique} 
is a conclusion of the following result.

%-----------------------------------------------------------------------
\begin{definition}
  For $n, d_h \in \Nat$, define the undirected graph $C_{n,d_h} = (V_{n, d_h},
  E_{n, d_h})$ by
  $V_{n,d_h} = \{u_0,\ldots,u_{n-1}\}$ and 
  $E_{n,d_h} = \ \bigcup_{j=0}^{n-1} \{\{u_j,v\} | v\in N^+_{j,n,d_h} 
                  \cup N^-_{j,n,d_h}\}$,
  where for all $i \in \{0,\ldots, n-1\}$, it holds
  \begin{align*}
    N^+_{i,n,d_h} &=  \{ u_{(i+1)\bmod n}, .., u_{(i+d_h)\bmod n}\}~\text{and} \\
    N^-_{i,n,d_h} &=  \{ u_{(i-1)\bmod n}, .., u_{(i-d_h)\bmod n}\}.
  \end{align*}
For $n, d_h, \delta,i \in \Nat$ and the graph $C_{n,d_h}=(V_{n,d_h},E_{n,d_h})$ 
define the sets of vertices
\begin{eqnarray*}
V_{i,n,d_h}^\delta & = & \{u_{(i\cdot (d_h+1)+\delta)\bmod n}, \ldots, \\
& & \qquad u_{((i+1)\cdot (d_h+1)+\delta-1)\bmod n}\}\ .
\end{eqnarray*}
\end{definition}
%-----------------------------------------------------------------------

Thus, $C_{n,d_h}=\CirculantGraph(n, L)$ where $L=[1,d_h]$.

%-----------------------------------------------------------------------
\begin{figure}[h]
  \begin{center}
    \scalebox{.25}{
      \input{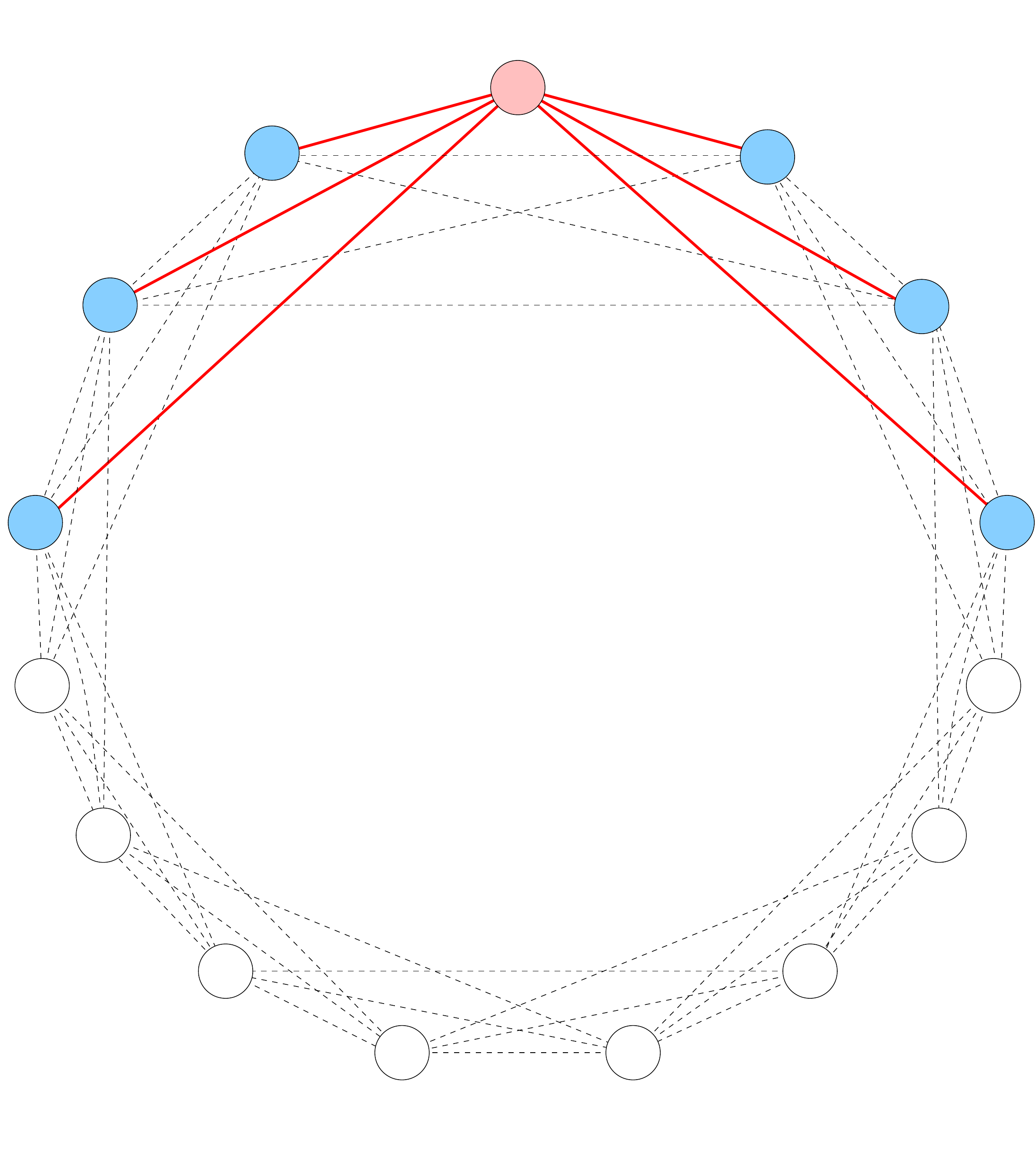_t}
    }
  \end{center}
  \vspace{-1em}
  \caption{The \Irreducible{} graph $C_{15, 3}$.}
  \label{fig:C15-3}
\end{figure}
%-----------------------------------------------------------------------

%-----------------------------------------------------------------------
\begin{lemma}
  \label{lem:C_nd-MVC}
  The minimum vertex cover of each graph $C_{n, d_h}$ is $n - \left \lceil
  \frac{n - d_h}{d_h + 1} \right\rceil$.
\end{lemma}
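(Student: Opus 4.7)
The plan is to translate the claim into a statement about a maximum independent set (MIS), using the complementary identity $\mathrm{MVC}(G) = |V(G)| - \mathrm{MIS}(G)$. Concretely, I will show that $\mathrm{MIS}(C_{n, d_h}) = \lfloor n/(d_h+1)\rfloor$ and then check the elementary identity $\lfloor n/(d_h+1)\rfloor = \lceil (n - d_h)/(d_h+1)\rceil$.

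First, I would unfold the definition of $E_{n, d_h}$ to record the basic fact that two distinct vertices $u_i$ and $u_j$ are non-adjacent in $C_{n, d_h}$ iff their circular distance $\min\{(i-j)\bmod n,\,(j-i)\bmod n\}$ is at least $d_h + 1$. This is the only structural input I will need.

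Next I prove the upper bound $\mathrm{MIS}(C_{n,d_h}) \le \lfloor n/(d_h+1)\rfloor$ by a circular-gap argument. Given an independent set $I = \{u_{i_1},\ldots,u_{i_k}\}$ with $0 \le i_1 < \cdots < i_k < n$, define the cyclic gaps $g_s = i_{s+1} - i_s$ for $1 \le s < k$ and $g_k = n - i_k + i_1$. Two cyclically consecutive members of $I$ have circular distance at most $g_s$, so independence (Step~1) forces $g_s \ge d_h+1$ for every $s$. Summing, $n = \sum_{s=1}^k g_s \ge k(d_h+1)$, hence $k \le \lfloor n/(d_h+1)\rfloor$. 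The degenerate case $k = 1$ is handled trivially (any single vertex is independent and $1 \le \lfloor n/(d_h+1)\rfloor$ whenever the graph is well defined).

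For the matching lower bound I exhibit an explicit independent set: with $k^\star = \lfloor n/(d_h+1)\rfloor$, take $I^\star = \{u_{j(d_h+1)} : 0 \le j < k^\star\}$. All inner gaps equal $d_h+1$ exactly, and the wrap-around gap is $n - (k^\star - 1)(d_h+1) \ge d_h+1$ by the definition of $k^\star$, so by the criterion of Step~1 the set $I^\star$ is independent.

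Finally I verify the ceiling/floor identity $\lfloor n/(d_h+1)\rfloor = \lceil (n-d_h)/(d_h+1)\rceil$ by writing $n = q(d_h+1) + r$ with $0 \le r \le d_h$: the left-hand side is $q$, while $(n-d_h)/(d_h+1) = q + (r - d_h)/(d_h+1)$ lies in $(q-1,\,q]$, so its ceiling is also $q$. Combining with $\mathrm{MVC}(C_{n,d_h}) = n - \mathrm{MIS}(C_{n,d_h})$ gives the stated formula.

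The main obstacle is not mathematical depth but bookkeeping in Step~2: verifying that the simple condition ``every cyclic gap is $\ge d_h+1$'' is both necessary and sufficient for independence (necessity covers cyclically adjacent pairs, while sufficiency for non-cyclically-adjacent pairs follows automatically because any two members of $I$ are separated by at least one full gap in each direction). Once that is in place, the counting and the arithmetic identity are routine.
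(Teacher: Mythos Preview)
Your argument is correct. Both you and the paper reduce to bounding the maximum independent set and exhibit essentially the same explicit independent set $\{u_{j(d_h+1)} : 0 \le j < \lfloor n/(d_h+1)\rfloor\}$, but the upper-bound arguments differ in presentation. The paper partitions the vertices into consecutive blocks $V_{i,n,d_h}^0$ of size $d_h+1$, notes that each block induces a clique (hence contributes at most one vertex to any independent set), and then handles the wrap-around remainder via the neighborhood $N^{-}_{0,n,d_h}$ after fixing $u_0$ in the independent set. Your cyclic-gap argument instead sums the $k$ gaps between consecutive independent-set vertices around the cycle, each of length at least $d_h+1$, to obtain $k(d_h+1)\le n$ directly; this sidesteps the separate remainder case and is somewhat cleaner for the circulant structure. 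Your explicit verification of the identity $\lfloor n/(d_h+1)\rfloor = \lceil (n-d_h)/(d_h+1)\rceil$ is also a worthwhile addition, since the paper states the result in the ceiling form without spelling out this equivalence.
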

\begin{proof}
One can easily verify that each subset $V_{i,n,d_h}^\delta$ forms a complete graph. Thus, 
every vertex cover of a graph $C_{n,d_h}$ has to include all vertices of every 
subset $V_{i,n,d_h}^\delta$ except of at most one vertex. W.l.o.g., we can assume that $u_0$ 
is not an element of the cover.
Thus, there exists at most one vertex that is not an element of the cover for each
subgraph $V_{i,n,d_h}^0$ for $i\in\{0,\ldots,\lceil (n-d_h)/(d_h+1)\rceil\}$. Since 
also all vertices of $N^-_{0,n,d_h}$ have to be in this cover.

A cover of size $n-\lceil (n-d_h)/(d_h+1)\rceil$ is given by $C=V\setminus \{\ u_{i\cdot(d_h+1)}\ |\ 
i\in\{0,\ldots,\lceil (n-d_h)/(d_h+1)\rceil\}\ \}$.
\end{proof}

%-----------------------------------------------------------------------

%-----------------------------------------------------------------------
\begin{theorem}
  \label{th:C_nd-MVC}
  $C_{n, d_h}$ is a connected \Irreducible{} graph iff either $n \le 2 d_h + 1$
  or $n - d_h$ is a multiple of $d_h + 1$.
\end{theorem}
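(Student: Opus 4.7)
The plan is to prove both directions using circulant symmetry, which reduces every edge to one of the form $\{u_0, u_s\}$ with $s \in \{1, \ldots, d_h\}$. For sufficiency I treat the two sub-cases of the disjunction. If $n \le 2d_h + 1$, then every pair of vertices has cyclic distance at most $\lfloor n/2 \rfloor \le d_h$, so $C_{n,d_h}$ is a clique and Fact~\ref{f:CycClique} gives the criticality directly. If instead $n = k(d_h+1) + d_h$ for some $k \ge 1$, Lemma~\ref{lem:C_nd-MVC} says the maximum independent set of $C_{n,d_h}$ has size exactly $k$; for each step $s$ I would exhibit the independent set
\[
I_s \;:=\; \{u_0\} \;\cup\; \{u_{s + j(d_h+1)} : 0 \le j \le k-1\}
\]
of size $k+1$ inside $C_{n,d_h} \setminus \{u_0, u_s\}$. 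All its indices stay in $[0, n - d_h - 1]$, consecutive members sit at cyclic distance exactly $d_h + 1$, and the wrap-around distance from the last index back to $u_0$ equals $2d_h + 1 - s \ge d_h + 1$; the only pair at cyclic distance below $d_h + 1$ is $\{u_0, u_s\}$ itself, whose edge has just been removed.

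For the necessary direction I would argue the contrapositive. Suppose $n > 2d_h + 1$ and $n \not\equiv d_h \pmod{d_h + 1}$, and write $n = k(d_h + 1) + r$ with $0 \le r \le d_h - 1$; Lemma~\ref{lem:C_nd-MVC} gives the maximum independent set size exactly $k$. I would single out the edge $\{u_0, u_{d_h}\}$ and show that its removal leaves this quantity unchanged. Assume, for contradiction, that an independent set $I$ of size $k+1$ existed in $C_{n,d_h} \setminus \{u_0, u_{d_h}\}$. The only new non-edge is $\{u_0, u_{d_h}\}$, so $I$ must contain both $u_0$ and $u_{d_h}$, for otherwise $I$ would be an independent set of the same size in the original graph. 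The remaining $k-1$ vertices of $I$ must therefore lie in the interval $[2d_h + 1,\, n - d_h - 1]$ of positions at cyclic distance at least $d_h + 1$ from both $u_0$ and $u_{d_h}$. Since this interval has width at most $n - 3d_h - 2$, the complementary distance $n - |i - j|$ between any two of its points is at least $3d_h + 2$, so the cyclic spacing constraint $\ge d_h + 1$ between the $k - 1$ chosen vertices reduces to the linear one $|i - j| \ge d_h + 1$. A standard packing count then requires $(k-2)(d_h + 1) \le n - 3d_h - 2$, which rearranges to $r \ge d_h$, contradicting $r \le d_h - 1$.

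The main obstacle will be the necessity direction: one has to choose the right representative edge (the longest chord $\{u_0, u_{d_h}\}$), delimit the exact admissible interval for the remaining $k - 1$ vertices, and reduce the cyclic spacing constraint to a linear one in order to obtain a clean packing inequality that pins down the divisibility condition. The sufficiency side is, by contrast, almost a one-liner once one notices that removing a single chord frees precisely one slot in an otherwise tight packing of the cycle.
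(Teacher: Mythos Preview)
Your argument is correct, and in fact cleaner than the paper's. Both proofs dispose of the clique case $n\le 2d_h+1$ via Fact~\ref{f:CycClique} and then use the circulant symmetry to reduce to edges of the form $\{u_0,u_s\}$. From there the two diverge. For sufficiency the paper normalises to the base case $n=3d_h+2$ and exhibits a three-element independent set $\{u_0,u_{d_h+1},u_{3d_h+2-k}\}$, invoking ``regularity'' to lift this to general $n$; you instead write down the full independent set $I_s$ of size $k+1$ directly, which avoids that reduction entirely. For necessity the paper takes an arbitrary minimum cover of $C_{n,d_h}^k$, normalises its complement so that all interior gaps equal $d_h+1$ and the extremal indices are pinned to $0$ and $n-k$, and then collapses to $\ell=3$ to squeeze out the bound $n=3d_h+2$; you bypass this by choosing the single tightest edge $\{u_0,u_{d_h}\}$ up front, observing that the remaining $k-1$ vertices are forced into the interval $[2d_h+1,\,n-d_h-1]$, and noting that on this interval the complementary arc is automatically at least $3d_h+2$, so the cyclic spacing constraint becomes the linear one $|i-j|\ge d_h+1$. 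The resulting packing inequality $(k-2)(d_h+1)\le n-3d_h-2$ is exactly what the paper's normalisation eventually yields, but you reach it in one step. The trade-off is that the paper's normalisation makes visible \emph{why} $s=d_h$ is the extremal chord, whereas you simply assert it; but since a single non-critical edge suffices for the contrapositive, your shortcut is entirely legitimate.
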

\begin{proof}
If $n\le 2d_h+1$ then $C_{n,d_h}$ is a clique and the claim follows directly. Thus it remains to 
show the claim for parameters $n$ and $d_h$ where $n > 2d_h+1$.

For $k\in\{1,\ldots,d_h\}$ 
let $C_{n,d_h}^k=(V_{n,d_h},E_{n,d_h}\setminus\{\{u_0,u_{n-k}\}\})$. 
Because of symmetry $C_{n,d_h}$ is \Irreducible{} iff for any $k\in\{1,\ldots,d_h\}$
the size of a minimum vertex cover has to be smaller than $n-\lceil (n-d_h)/(d_h+1)\rceil$.

Let us assume that for given parameters $n$ and $d_h$ with $n > 2d_h+1$, 
$C_{n,d_h}$ is \Irreducible{}. In the following, we will show that this implies that
$n-d_h$ is a multiple of $d_h+1$.

Let $C$ be a minimum vertex cover for $C_{n,d_h}^k$ and let $C=V_{n,d_h}\setminus\{u_{i_0},\ldots,u_{i_{\ell-1}}\}$
with $i_j<i_{j+1}$ for all $j\in\{0,\ldots,\ell-2\}$.
Given $n,d_h,$ and $k$, we will start to show that it is sufficient 
to focus on values of $n$ with $2d_h+1<n<3d_h+3$ and some specific values for $i_j$.

By the definition of $C_{n,d_h}^k$, we can conclude that $i_{j+1}-i_j\ge d_h+1$ for all $0\le j\le\ell-2$. 
Assume that for some $j$ with $0\le j\le\ell-3$, it holds that $i_{j+1}-i_j> d_h+1$ then we can decrease 
$i_{j+1}$ such that $i_{j+1}-i_j = d_h+1$ without decreasing $C$. Thus, we can assume in the following that 
$i_{j+1}-i_j = d_h+1$ for all $j$ with $0\le j\le\ell-3$. Furthermore, we can assume that $i_0\le 2d_0+1$, otherwise we can 
remove $u_{d_h}$ from $C$ to get a vertex cover of smaller size for $C_{n,d_h}^k$ -- thus, $C$ is not a minimum vertex cover 
for $C_{n,d_h}^k$. Analogously, it holds that $i_{\ell-1}-i_{\ell-2}\le 2d_h+1$ and $n-1-i_{\ell-1}\le 2d_h+1$.

Now, assume that $i_0>0$ or $i_{\ell-1}\not=n-k$ then $C$ is also a vertex cover for $C_{n,d_h}$ -- contradicting our 
assumption that the size of the minimum vertex cover of $C_{n,d_h}^k$ is smaller than the 
size of the minimum vertex cover of $C_{n,d_h}$ and $C$ is such a minimum vertex cover of $C_{n,d_h}^k$.
Thus, we assume that $i_0=0$ and $i_{\ell-1}=n-k$ in the following.

Finally if $(i_{\ell-1}-i_{\ell-2})+k-1\ge 2d_h+1$ then $(C\cup\{u_{i_{\ell-1}}\})\setminus \{u_{i_{\ell-2}+d_h+1}\}$
to get a vertex cover for $C_{n,d_h}^k$ and $C_{n,d_h}$ of size $\ell$ -- contradicting our assumption that
$C_{n,d_h}$ has a minimum vertex cover of size larger than the size of the minimum vertex cover of $C_{n,d_h}^k$.

Because the regularity of the graph as well as the regularity of the discussed
cover, we can assume that $\ell=3$ and therefore
\begin{equation}
\begin{array}[c]{rcl}
n & = & 1+d_h+1+(i_2-i_1-1)+1+(k-1) \\ 
 & = & d_h+k+(i_2-i_1)+1 \ \ < \ \ 3d_h+3
\end{array}
\label{eq:pr-th-C_nd-MVC-BoundForN}
\end{equation}
$$d_h \ \ \le \ \ i_2-i_1-1 \ \ < \ \ 2d_h+2\ .$$ 
Assume that for given parameters $n$ and $d_h$ the graph $C_{n,d_h}$ is
\Irreducible{}, then the minimum vertex cover size of $C_{n,d_h}^k$ for any
value $k\in\{1,\ldots,d_h\}$ has to be smaller than the size of the minimum
vertex cover of $C_{n,d_h}$. As shown above, it is sufficient to restrict our
analysis to $\ell=3$ with $i_0=0$ and $i_2=n-k$ and therefore $i_1\le n-d_h-1-k$
and $i_0\le n-2d_h-2-k$. If we consider the case where $k=d_h$, then this
implies that
$$
3d_h+2 \ \ \le \ \ n\ .
$$
Using Equation~\ref{eq:pr-th-C_nd-MVC-BoundForN} we get $n=3d_h+2$.
With respect to our simplifications above this implies:
\begin{quote}
If $C_{n,d_h}$ is \Irreducible{}, then $n-d_h$ is a multiple of $d_h+1$.
\end{quote}
It remains to show that $C_{n,d_h}$ is \Irreducible{}, if $n-d_h$ is a multiple of $d_h+1$.
Using our simplifications above it is sufficient to show:
\begin{quote}
The minimum vertex cover size of $C_{3d_h+2,d_h}^k$ for any value $k\in\{1,\ldots,d_h\}$
is $3d_h-1$. 
\end{quote}
More precisely, we will show that $\{u_0,u_{d_h+1},u_{3d_h+2-k}\}$ is an
independent set and therefore its complement is a vertex cover of the desired
size.

From our construction of $C_{3d_h+2,d_h}$ it follows that $u_{d_h+1}$ is not a neighbor 
of $u_0$. Furthermore, $C_{3d_h+2,d_h}^k$ is constructed by deleting the edge 
$\{u_0,u_{3d_h+2-k}\}$. Thus, it remains to show that $u_{d_h+1}$ and $u_{3d_h+2-k}$
are not directly connected. Note that
$$
(3d_h+2-k) - (d_h+1) \ = \ 2d_h+1-k \ > \ d_h\ .
$$
Hence $u_{3d_h+2-k}\not\in N^+_{d_h+1,3d_h+2,d_h}$. Hence, 
$C_{n,d_h}$ is \Irreducible{}, if $n-d_h$ is a multiple of $d_h+1$.
\end{proof}
%-----------------------------------------------------------------------

%-----------------------------------------------------------------------
\section{Extensions for the Efficient Generation of Graphs}
\label{sec:extensions}
%-----------------------------------------------------------------------

The problem of deciding whether a given graph is $\alpha$-critical is complete
for the second level of the Boolean hierarchy $D_P=\cBH_2$ (see
\cite{papa:1988}, or \cite{joret:2007} for a survey on $\alpha$-critical
graphs), where $\cBH_2$ is the class of languages that form the intersection of
an $\cNP$ language with a co-$\cNP$ language. Thus, finding $\alpha$-critical
graphs is a hard problem. Moreover, for the generation of diverse critical
graphs, one needs efficient methods and patterns. In 1970, Wessel
\cite{wessel:1970} introduced a technique to merge two $\alpha$-critical graphs:

\begin{definition}
For a graph $G=(V,E)$ and a vertex $u\in V$, let $N_G(u)$ denote the
neighborhood of $u$ in $G$.

Given two connected graphs $G_1$, $G_2$, a distinguished edge $\{u,v\}\in
E(G_1)$, and a vertex $w \in V(G_2)$ with degree at least two. Let $G$ be the
disjoint union of $G_1$ and $G_2$. Then, for every neighbor $x\in N_{G_2}(w)$,
choose one vertex $y$ from $\{u,v\}$ and add the edge $\{x,y\}$ to $G$. Ensure
that $u$ and $v$ are each chosen at least once. Finally, remove the vertex $w$
and the edge $\{u,v\}$ from $G$. The resulting graph $G$ is said to be
\emph{pasted together} from $G_1$ and $G_2$.
\end{definition}

By investigating the maximum independent sets, Wessel showed the following
theorem:

\begin{theorem}{\bf [Wessel]}
If $G$ is pasted together from two connected $\alpha$-critical graphs $G_1$ and
$G_2$ of size at least three, then $G$ is $\alpha$-critical. If $G$ is a
connected $\alpha$-critical graph of size at least four and if $G$'s (vertex)
connectivity is two, then there exist two connected $\alpha$-critical graphs
$G_1$ and $G_2$ each of size at least three, such that $G$ is pasted together
from $G_1$ and $G_2$. The size of the maximum independent set of $G$ is given by
the sum of the sizes of the maximum independent sets of $G_1$ and $G_2$.
\end{theorem}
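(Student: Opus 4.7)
The plan is to prove the three assertions in order, starting with the independence-number formula $\alpha(G)=\alpha(G_1)+\alpha(G_2)$, since it serves as the main tool for both directions of the criticality claim. Write $N_w = N_{G_2}(w)$, partitioned as $N_w=N_u\cup N_v$ according to whether each neighbor is joined to $u$ or to $v$ by the paste; both parts are nonempty by hypothesis. Two consequences of $\alpha$-criticality of each $G_i$ will be used throughout: (F1) for every edge $e\in E(G_i)$, $\alpha(G_i-e)=\alpha(G_i)+1$ and every maximum independent set of $G_i-e$ contains both endpoints of $e$; and (F2) every vertex of $G_i$ lies in some maximum independent set of $G_i$, obtained from (F1) applied to an incident edge by discarding one endpoint.

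For the formula, take a maximum independent set $I$ of $G$ and split it as $I_1 = I\cap V(G_1)$, $I_2 = I\cap (V(G_2)\setminus\{w\})$. If both $u,v\in I_1$ then $I_1$ is independent in $G_1-\{u,v\}$ with $|I_1|\le\alpha(G_1)+1$ by (F1), while $I_2$ must avoid $N_w$ (else some cross edge is violated), so $I_2\cup\{w\}$ is independent in $G_2$ and $|I_2|\le\alpha(G_2)-1$; otherwise $I_1$ is independent in $G_1$ and $I_2$ in $G_2-w$, giving the same total bound. For the matching lower bound, apply (F1) at $\{u,v\}$ to pick $I_1^*\subseteq V(G_1)$ of size $\alpha(G_1)+1$ with $u,v\in I_1^*$, apply (F2) to $w$ to pick a maximum IS $I_2^*$ of $G_2$ with $w\in I_2^*$, and observe that $I_1^*\cup(I_2^*\setminus\{w\})$ is independent in $G$ (the $G_2$-side avoids $N_w$ because $w\in I_2^*$) and has size $\alpha(G_1)+\alpha(G_2)$.

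For the forward direction I must show $\alpha(G-e)\ge\alpha(G)+1$ for every $e\in E(G)$, split by type. If $e=\{x,y\}$ is a cross edge with $x\in N_w$, $y\in\{u,v\}$, apply (F1) to the edge $\{w,x\}$ of $G_2$ to obtain an IS $L$ of $G_2-\{w,x\}$ of size $\alpha(G_2)+1$ with $w,x\in L$; the presence of $w$ forces $L\cap N_w=\{x\}$, so $L\setminus\{w\}$ meets $N_w$ only in $x$ whose sole cross edge in $G$ is the deleted $e$, and $I_1^*\cup(L\setminus\{w\})$ is the desired IS of $G-e$. If $e\in E(G_1)\setminus\{\{u,v\}\}$, apply (F1) to get an IS $J$ of $G_1-e$ with $|J|=\alpha(G_1)+1$ and $|J\cap\{u,v\}|\le 1$, and pair $J$ with a set $L\setminus\{w\}$ supplied by (F1) at some edge $\{w,x\}$ where $x\in N_w$ is chosen on the $N_u/N_v$ side opposite to $J\cap\{u,v\}$ (or arbitrarily when $J$ avoids both), killing all cross conflicts. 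The case $e\in E(G_2)$ with $w\notin e$ is treated analogously: apply (F1) to $e$ to obtain an IS $K$ of $G_2-e$ of size $\alpha(G_2)+1$ containing both endpoints of $e$, and if some such $K$ can be chosen with $w\in K$ then $K\cap N_w=\emptyset$ and $K\setminus\{w\}$ paired with $I_1^*$ is the desired IS; otherwise a swap argument is required that trades elements of $K\cap N_w$ for $w$, or pairs the result with a max IS of $G_1$ supplied by (F2) on the appropriate side of $\{u,v\}$.

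For the reverse direction, assume $G$ is connected and $\alpha$-critical, of order at least four, with connectivity exactly two, and let $\{p,q\}$ be a vertex cut. Group the components of $G-\{p,q\}$ into two nonempty vertex sets $A_1,A_2$, let $H_i=G[A_i\cup\{p,q\}]$, form $G_1$ by adjoining the edge $\{p,q\}$ to $H_1$ (if absent) with $\{u,v\}:=\{p,q\}$, and form $G_2$ by contracting $p,q$ in $H_2$ to a single vertex $w$, recording the $N_u/N_v$ partition from the pre-contraction neighborhoods. The paste of $G_1$ and $G_2$ along these data reproduces $G$; 2-connectivity of $G$ forces both $N_u$ and $N_v$ to be nonempty (otherwise one of $p,q$ would be an articulation vertex of $G$, contradicting Theorem~\ref{th:2connected}); and $|V(G)|\ge 4$ together with 2-connectivity yields $|V(G_i)|\ge 3$. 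The main obstacle is verifying that $G_1$ and $G_2$ are themselves $\alpha$-critical: for each edge $e'$ of $G_i$ I translate $e'$ to an appropriate edge deletion of $G$ (or, when $e'=\{u,v\}$ in $G_1$ or $e'$ is incident to $w$ in $G_2$, to a structural modification at the phantom edge or vertex introduced by inverting the paste), and then invoke $\alpha$-criticality of $G$ together with the already-proved $\alpha$-formula to transfer the required strict increase in $\alpha$ back to $G_i$. This last translation step, together with the swap argument in the forward sub-case for $e\in E(G_2)$, $w\notin e$, is the hardest part, because the pasting operation both adds cross edges and removes an edge and a vertex, so controlling how edge deletions in one piece correspond to modifications of $G$ requires careful bookkeeping via the $\alpha$-formula.
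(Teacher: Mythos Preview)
The paper does not actually supply a proof of this theorem: it is stated as Wessel's result and the reader is referred to \cite{wessel:1970,lovasz:1993} for the argument. So there is no in-paper proof to compare against; your outline is essentially the classical approach one finds in those references, namely establishing the independence-number formula first and then deriving both implications from it.

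That said, as a proof outline your proposal is not yet complete, and the gaps you flag are real. In the forward direction, the sub-case $e\in E(G_2)$ with $w\notin e$ is the crux. Your fallback ``swap argument'' is not spelled out, and the na\"{i}ve swap need not work: a maximum independent set $K$ of $G_2-e$ may avoid $w$ and simultaneously meet both $N_u$ and $N_v$, in which case you cannot pair it with any maximum independent set of $G_1$ (every such set contains at most one of $u,v$, so at least one family of cross edges is violated), and replacing a single $x\in K\cap N_w$ by $w$ fails whenever $|K\cap N_w|\ge 2$. The standard way through is to argue, using $\alpha$-criticality of $G_2$ and the fact $\alpha(G_2-w)=\alpha(G_2)$, that some maximum independent set of $G_2-e$ does contain $w$ (equivalently, that $\alpha(G_2-e-w)=\alpha(G_2)$, so the unique extra unit of independence in $G_2-e$ must be realizable through $w$); you should make this step explicit rather than leave it as a promised swap.

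In the reverse direction, two issues deserve attention. First, your construction adds the edge $\{p,q\}$ to $H_1$ ``if absent'', but the pasting operation always deletes $\{u,v\}$, so if $\{p,q\}$ happens to be an edge of $G$ the reconstruction does not return $G$; you need either to argue that an $\alpha$-critical graph of connectivity two always admits a non-adjacent $2$-cut, or to route that edge through the $G_2$ side. Second, your plan to verify $\alpha$-criticality of $G_1$ and $G_2$ by ``translating'' edge deletions back to $G$ via the $\alpha$-formula is circular as stated: the $\alpha$-formula was proved only under the hypothesis that both factors are $\alpha$-critical, which is precisely what you are trying to establish. The usual remedy is to prove a one-sided inequality $\alpha(\text{paste})\ge\alpha(G_1)+\alpha(G_2)$ that requires criticality of only one factor (namely $G_1$, via the set $I_1^*$ you already constructed), and then combine it with the trivial reverse inequality; this lets the translation go through without assuming what you want to conclude.
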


Following the proof of this obsevration (see e.g. \cite{wessel:1970,
lovasz:1993}), one can see that this method for extending $\alpha$-critical
graphs does not preserve the knowledge of a concrete maximum independent set or
a concrete minimum vertex cover in all cases. It preserves only the knowledge of
the corresponding cardinalities.

A second technique for enlarging an $\alpha$-critical graph can be found in
\cite{lovasz:1986}:

\begin{definition}\label{def:split}
Let $G$ be a $\alpha$-critical graph and $u$ be a vertex of $G$ with degree at
least two. The operation of splitting $u$ works as follows: Let $F$ denote a
strictly non-empty subset of the neighbors of $u$. Add two new vertices $v$ and
$w$ to $G$. Connect $v$ with $u$ and $w$. Connect all vertices of $F$ with $w$,
and delete the edges between $u$ and the vertices of $F$.

Let $\cG$ be a family of undirected graphs, then define $\splitExt{\cG}$ to be
the set of graphs that can be generated from any graph $G=(V,E)\in\cG$ by 
splitting any vertex of $G$. Let $\splitExtP{\cG}{*}$ denote the transitive closure 
of $\cG$
according to $\splitExtB$. If we apply the extension $k$ times, then the set of
resulting graphs is denoted by $\splitExtP{\cG}{k}$. If $\cG$ consists of a
single graph $G$, then we use the notions $\splitExt{G}$ and $\splitExtP{G}{*}$.
\end{definition} 

For the proof of the following theorem see e.g. \cite{lovasz:1986}:

\begin{theorem}\label{th:split-01}
Let $G=(V,E)$ be a connected $\alpha$-critical graph of size at least three.
Then, every vertex has degree at least two, and splitting any vertex $v \in V$
results in a new $\alpha$-critical graph $G'$ where the maximum independent set
and the minimum vertex cover is increased by one. Furthermore, if G' be a
connected $\alpha$-critical graph of size at least four. and if $v \in V(G')$
has exactly two neighbors $u, w$ in $G'$ where $u$ and $w$ are not adjacent;
then identifying $u$ with $w$ and deleting $v$ results in another
$\alpha$-critical graph $G$ where the maximum independent set and the minimum
vertex cover is decreased by one.
\end{theorem}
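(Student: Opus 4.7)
The plan is to prove the three claims in sequence, leveraging Observation~\ref{obs:elementsOfC} for the degree bound and switching between the cover and independent-set perspectives ($|C|+|I|=|V|$) for the criticality checks.

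\textbf{Minimum degree two.} Suppose $v\in V$ has degree one with sole neighbor $u$. Observation~\ref{obs:elementsOfC} yields a minimum cover $C$ of $G$ containing $u$; since $v$'s only incident edge is $\{u,v\}$, one may take $v\notin C$. Then $C$ still covers $G\setminus\{\{u,v\}\}$, contradicting criticality of $G$. Connectedness with $|V|\geq 3$ excludes isolated vertices.

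\textbf{Splitting raises $\MVC$ by one and preserves criticality.} Let $G'$ denote the split of $u$ along a non-empty proper $F\subseteq N_G(u)$; properness is forced because otherwise $u$ becomes a pendant in $G'$ and part~1 is violated. For $\MVC(G')=\MVC(G)+1$, the upper bound is obtained from any minimum cover $C$ of $G$ by adjoining $\{w\}$ if $u\in C$ and $\{v\}$ if $u\notin C$ (using $F\subseteq C$ in the latter). For the matching lower bound, given any cover $C'$ of $G'$ the path $u$--$v$--$w$ forces $v\in C'$ or $\{u,w\}\subseteq C'$, and a short three-case argument (typical reductions: $(C'\setminus\{v,w\})\cup\{u\}$, $C'\setminus\{v\}$, or $C'\setminus\{w\}$) extracts a cover of $G$ of size $|C'|-1$.

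I would establish criticality of $G'$ edge-by-edge in the independent-set formulation, producing for every $e'\in E(G')$ an IS of $G'\setminus e'$ of size $\alpha(G)+2>\alpha(G')$. For an old edge $e\in E(G)\cap E(G')$, criticality of $G$ supplies an IS $I$ of $G\setminus e$ of size $\alpha(G)+1$; then $I\cup\{w\}$ works when $u\in I$ (which forces $F\cap I=\emptyset$), and $I\cup\{v\}$ works when $u\notin I$. For the new edges $\{u,v\}$ and $\{w,x\}$ with $x\in F$, one uses an IS $I$ of $G\setminus\{u,x\}$ containing both $u$ and $x$ and adjoins $v$ or $w$ respectively. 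The delicate case is $\{v,w\}$: here one must invoke criticality of $G$ at an edge $\{u,y\}$ with $y\in N_G(u)\setminus F$ (non-empty precisely by properness of $F$) to obtain an IS $I'$ with $u,y\in I'$, and then take $(I'\setminus\{u\})\cup\{v,w\}$; the non-adjacency of $v$ to $I'\setminus\{u\}$ and the equality $F\cap I'=\emptyset$ verify the IS property.

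\textbf{Identification lowers $\MVC$ by one and preserves criticality.} For a critical $G'$ of size $\geq 4$ with $v$ having exactly two non-adjacent neighbors $u,w$, form $G$ by identifying $u$ and $w$ and deleting $v$. Then $G'$ is precisely the split of $G$ at the merged vertex along $F=N_{G'}(w)\setminus\{v\}$ (the non-adjacency of $u,w$ ensures no self-loops and that the inverse split is well-defined), so the forward direction's bijection between covers (respectively, independent sets) transports both the size decrement and the criticality back to $G$.

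The main obstacle will be the edge-$\{v,w\}$ case in the criticality check: unlike the other new edges, an IS built purely from criticality at an $F$-edge is not large enough, so one must exploit criticality at a non-$F$ edge at $u$. This is precisely the place where properness of $F$ is essential, and it is also the place where the non-adjacency hypothesis on $u,w$ in the reverse direction gets used when aligning the two operations.
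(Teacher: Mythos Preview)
The paper does not supply its own proof of this theorem; it simply cites Lov\'asz~\cite{lovasz:1986}. So there is no in-paper argument to compare against, and your proposal has to be assessed on its own merits.

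Your treatment of the minimum-degree claim and of the splitting direction is correct and is essentially the standard argument. In particular, the edge-by-edge criticality check for $G'$, with the separate handling of the edge $\{v,w\}$ via criticality of $G$ at an edge $\{u,y\}$ with $y\in N_G(u)\setminus F$, is exactly the right decomposition, and your observation that this is precisely where properness of $F$ is needed is on point.

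The identification direction, however, has a real gap. You assert that $G'$ is ``precisely the split of $G$ at the merged vertex along $F=N_{G'}(w)\setminus\{v\}$'', but this only holds if $u$ and $w$ have no common neighbor in $G'$ other than $v$. If some $z\neq v$ were adjacent to both $u$ and $w$, then identifying $u,w$ collapses the two edges $\{u,z\},\{w,z\}$ into one, and splitting the merged vertex along $F$ would leave $z$ adjacent to only one of the new copies---so the inverse operation does not recover $G'$. The non-adjacency of $u,w$ rules out self-loops but not this multi-edge phenomenon. The needed fact \emph{is} true for $\alpha$-critical $G'$, but it requires an argument you have not given: from criticality at $\{u,z\}$ one obtains an independent set $I$ of $G'\setminus\{u,z\}$ of size $\alpha(G')+1$ with $u,z\in I$ (hence $v,w\notin I$), and then $(I\setminus\{u\})\cup\{v\}$ is independent in $G'$ itself, contradicting $|I|>\alpha(G')$. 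Once this lemma is in hand, your ``transport back'' idea works, but that step also needs unpacking: each edge $e$ of $G$ corresponds to a single edge $e'$ of $G'$, the graph $G'\setminus e'$ is again a split of $G\setminus e$ (possibly along $F\setminus\{y\}$, which may be empty), and the size relation $\MVC(\text{split})=\MVC(\text{base})+1$---valid for arbitrary base graphs, not only critical ones---then transfers criticality of $G'$ down to $G$.
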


One can see that this method for extending $\alpha$-critical graphs
preserves the knowledge of a concrete maximum independent set and of a 
concrete minimum vertex cover.

\begin{observation}
Let $G=(V,E)$ be a connected $\alpha$-critical graph of size at least three
and let $v \in V$ be an arbitrary vertex of $G$. Let $U$ be an maximum 
independent set and $\overline{U}=V\setminus U$ be the corresponding 
minimum vertex cover. Let $v$ and $w$ denote the two new vertices, generated 
by the splitting operation of $v$ (see Definition~\ref{def:split}) and let
$G'$ be the resulting graph. Then
$$
U' \ = \ \left\{\begin{array}[c]{ll}
U\cup\{w\} & \text{if }u\in U\\
U\cup\{v\} & \text{if }u\not\in U
\end{array}\right.
\quad\text{and}\quad
\overline{U}' \ = \ \left\{\begin{array}[c]{ll}
\overline{U}\cup\{w\} & \text{if }u\in \overline{U}\\
\overline{U}\cup\{v\} & \text{if }u\not\in \overline{U}
\end{array}\right.
$$
determine an increased independent set and the corresponding vertex cover and
are therefore optimal.
\end{observation}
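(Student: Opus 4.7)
The plan is to use Theorem~\ref{th:split-01} as a black box to establish that $G'$ is again $\alpha$-critical and that the size of a maximum independent set of $G'$ is exactly $|U|+1$. With that in hand, it suffices to verify that the candidate sets $U'$ proposed in the statement are indeed independent in $G'$, since their size is $|U|+1$ and so any such candidate must be optimal. The corresponding vertex-cover statement then follows for free by taking complements in the new vertex set $V(G')=V\cup\{v,w\}$.

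The first concrete step is to read off the adjacencies of the new vertices from Definition~\ref{def:split}: the vertex $v$ has exactly the two neighbors $u$ and $w$ in $G'$, while the vertex $w$ has the neighbors $v$ and all elements of $F\subseteq N_G(u)$. Every edge of $G'$ that is not incident to $v$ or $w$ is an edge of $G$ with $u$'s endpoint possibly restricted to $N_G(u)\setminus F$, so independence of $U$ inside the old vertex set is preserved. Hence the only question is whether adjoining $w$ (resp.\ $v$) to $U$ creates a conflict.

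Now I would split into the two cases according to whether $u\in U$. If $u\in U$, independence of $U$ in $G$ gives $N_G(u)\cap U=\emptyset$, hence $F\cap U=\emptyset$; together with $v\notin U$ this shows that $w$ has no neighbor in $U\cup\{w\}$, so $U\cup\{w\}$ is independent in $G'$. If $u\notin U$, then the two neighbors of $v$ in $G'$, namely $u$ and $w$, both lie outside $U\cup\{v\}$, so this set is independent in $G'$. In each case the resulting set has cardinality $|U|+1$, which by Theorem~\ref{th:split-01} is the size of a maximum independent set of $G'$, so $U'$ is optimal.

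For the vertex cover, I would just observe that $\overline{U'}=V(G')\setminus U'$ and check that the two cases of the formula agree with this complement: if $u\in U$ then $U'=U\cup\{w\}$ gives $\overline{U'}=\overline{U}\cup\{v\}$, matching the branch $u\notin\overline{U}$; if $u\notin U$ then $U'=U\cup\{v\}$ gives $\overline{U'}=\overline{U}\cup\{w\}$, matching the branch $u\in\overline{U}$. Since $U'$ is a maximum independent set and minimum vertex covers are exactly complements of maximum independent sets, $\overline{U'}$ is a minimum vertex cover of $G'$. There is no real obstacle here; the only thing to be careful about is the naming clash in the statement (the split vertex is called $u$ in Definition~\ref{def:split} but the statement writes "$v\in V$"), so I would begin the proof by renaming the split vertex to $u$ to match the definition and avoid confusion with the new vertex also called $v$.
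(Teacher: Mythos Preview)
Your proposal is correct. The paper actually states this observation without proof, treating it as an immediate consequence of Theorem~\ref{th:split-01} together with Definition~\ref{def:split}; your write-up supplies precisely the verification one would expect: appeal to Theorem~\ref{th:split-01} for the target size $|U|+1$, then check directly from the adjacencies of the two new vertices that the proposed $U'$ is independent in each case, and obtain the vertex-cover claim by complementation. Your handling of the naming clash (the split vertex should be called $u$, not $v$) is appropriate and worth keeping, since the statement as printed is internally inconsistent on this point.
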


This section introduces a new method which can efficiently produce
critical graphs from extending cycles of odd length, called the
\emph{parallel extension} in (see Figure~\ref{fig:5Ext}a)).
Furthermore, we will discuss a restricted version of the
splitting operation, called \emph{chain
extension} (see Figure~\ref{fig:5Ext}b)) which had been used 
for our graph generator. 
Let us consider the
parallel extension first. Prior, we need a useful definition of what we call a
\VCO.

%-----------------------------------------------------------------------
\begin{figure*}[h]
  \begin{center}
    \scalebox{.35}{
      \input{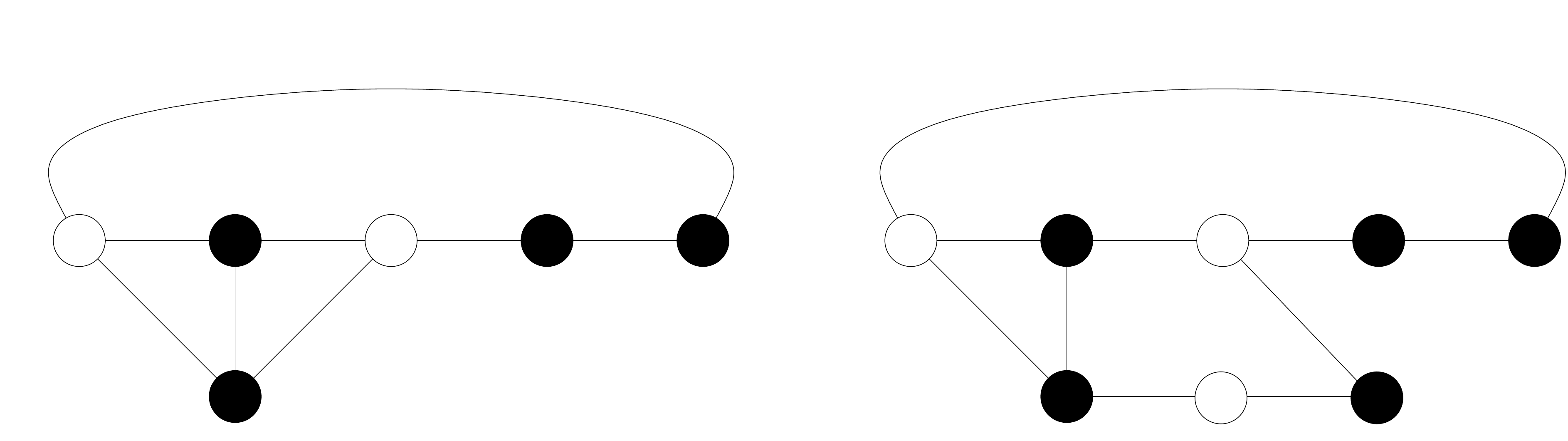_t}
    }
  \end{center}
  \vspace{-1em}
  \caption{Two \Irreducible graphs which are generated from a cycle of five
  vertices.}
  \label{fig:5Ext}
\end{figure*}
%-----------------------------------------------------------------------

%-----------------------------------------------------------------------
\begin{definition}
  \label{def:overlap}
  Let $G = (V,E)$ be an undirected connected graph and let $U \subseteq V$ be a
  subset of vertices of $G$. $U$ is called a \VCO{} iff for every minimum
  vertex cover $C$ of $G$, it holds that $U \not\subseteq C$. If additionally,
  for any vertex $u \in U$, there exists a minimum vertex cover $C_u$ of $G$ such
  that $u \not\in C_u$ and $U \setminus \{u\}\subseteq C$, then $U$ is a \VCOO.
  Let $u$ be a new vertex and $U \subseteq V$. Then, define the extension of $G$
  according to $u$ and $U$ as
  $$
    \Gamma(G, U, u) = (V \cup \{u\}, E \cup \{ \UEdge{u, v} | v \in U \}).
  $$
\end{definition}
%-----------------------------------------------------------------------

%-----------------------------------------------------------------------
\begin{observation}
  \label{obs:VCover01}
  Given a graph $G = (V, E)$, a \VCO $U \subseteq V$, and a new vertex
  $u$, then the size of a minimum vertex cover of $\Gamma(G, U, u)$ is given by
  the size of a minimum vertex cover of $G$ plus one.
\end{observation}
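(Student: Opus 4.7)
I would establish the equality by matching an upper bound with a lower bound on the minimum vertex cover size of $\Gamma(G,U,u)$. Let $k$ denote the size of a minimum vertex cover of $G$, and let $C \subseteq V$ be any such minimum cover.

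For the upper bound, I would exhibit $C \cup \{u\}$ as a vertex cover of $\Gamma(G,U,u)$ of size $k+1$: the set $C$ already covers every edge of $E$, and the vertex $u$ covers each of the newly added edges $\{u,v\}$ for $v \in U$. Hence the minimum vertex cover size of $\Gamma(G,U,u)$ is at most $k+1$.

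For the lower bound, I would take an arbitrary vertex cover $C'$ of $\Gamma(G,U,u)$ and split into two cases based on whether $u \in C'$. Observe first that $C' \cap V$ must cover every edge of $E$, so $|C' \cap V| \ge k$. If $u \in C'$, then $|C'| = |C' \cap V| + 1 \ge k+1$, which gives the desired bound immediately. If $u \notin C'$, then each of the edges $\{u,v\}$ with $v \in U$ must be covered from the other endpoint, forcing $U \subseteq C' \cap V$. The key step here is to apply the defining property of a \VCO: since no minimum vertex cover of $G$ contains all of $U$, the set $C' \cap V$, which is a vertex cover of $G$ containing $U$, cannot be minimum, so $|C' \cap V| \ge k+1$, giving $|C'| \ge k+1$ in this case as well.

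The only real subtlety, and the part I would be most careful about, is the case analysis for the lower bound: one must rule out the possibility of shaving off a vertex by leaving $u$ out of the cover, and this is exactly what the \VCO{} condition is designed to prevent. Combining the upper and lower bounds, the minimum vertex cover size of $\Gamma(G,U,u)$ equals $k+1$, as claimed.
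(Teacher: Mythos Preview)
Your proposal is correct and follows essentially the same approach as the paper: both give the upper bound via $C\cup\{u\}$ and the lower bound by a case split on whether $u$ lies in a vertex cover $C'$ of $\Gamma(G,U,u)$, invoking the \VCO{} property in the case $u\notin C'$ to rule out $C'\cap V$ being a minimum cover of $G$. The only cosmetic difference is that you argue for an arbitrary vertex cover $C'$ while the paper phrases it for a minimum one; the logic is the same.
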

\begin{proof}
Let $C$ be a minimum vertex cover of $G$, then $C\cup\{u\}$ is a vertex cover of 
$\Gamma(G,U,u)$. Consider now a minimum vertex cover $C'$ of $\Gamma(G,U,u)$, then
either $u\in C'$ or $N_{\Gamma(G,U,u)}(u)\subseteq C'$. 
In the first case $|C'|$ has to be larger than the minimum vertex cover size of
$G$ since $C'\setminus\{u\}$ is a vertex cover of $G$.
For the second case one can conclude that $|C'|$ is larger than the minimum
vertex cover size of $G$ since no minimum vertex cover of $G$ includes all
vertices of $U=N_{\Gamma(G,U,u)}(u)$.
\end{proof}

%-----------------------------------------------------------------------

%-----------------------------------------------------------------------
\begin{observation}
  \label{obs:VCover02}
  Given an \VCI graph $G = (V, E)$, a \VCO $U \subseteq V$, and a new vertex
  $u$. If $U$ is not a \VCOO, then $\Gamma(G, U, u)$ is \VCR.
\end{observation}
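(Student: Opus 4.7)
The plan is to exhibit an edge of $\Gamma(G, U, u)$ whose deletion does not decrease the minimum vertex cover size, and hence witnesses that $\Gamma(G, U, u)$ is \VCR. Since by Observation~\ref{obs:VCover01} the minimum vertex cover of $\Gamma(G, U, u)$ has size $\mathrm{mvc}(G) + 1$, it suffices to find an edge whose removal leaves a graph still admitting no cover smaller than $\mathrm{mvc}(G) + 1$.

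The key is to extract from the hypothesis ``$U$ is not a \VCOO'' the single bad vertex to target. By Definition~\ref{def:overlap}, $U$ being a \VCO but not a \VCOO means there exists some $u^* \in U$ such that no minimum vertex cover $C$ of $G$ simultaneously satisfies $U \setminus \{u^*\} \subseteq C$ and $u^* \notin C$. I would delete the edge $\{u, u^*\}$ and let $G' = (V \cup \{u\}, (E \cup \{\{u,v\} : v \in U\}) \setminus \{\{u, u^*\}\})$.

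The upper bound $\mathrm{mvc}(G') \leq \mathrm{mvc}(G) + 1$ is immediate since $G'$ is a subgraph of $\Gamma(G, U, u)$. For the lower bound, take an arbitrary vertex cover $C'$ of $G'$ and split into cases. If $u \in C'$, then $C' \setminus \{u\}$ covers $G$, giving $|C'| \geq \mathrm{mvc}(G) + 1$. If $u \notin C'$, then every remaining edge $\{u, v\}$ with $v \in U \setminus \{u^*\}$ forces $U \setminus \{u^*\} \subseteq C'$, and $C' \cap V$ is a vertex cover of $G$. Now either $u^* \in C'$, in which case $U \subseteq C' \cap V$ and the \VCO property of $U$ forces $|C' \cap V| > \mathrm{mvc}(G)$; or $u^* \notin C'$, and then $C' \cap V$ is a cover of $G$ containing $U \setminus \{u^*\}$ but avoiding $u^*$, which by the choice of $u^*$ cannot be of minimum size, so again $|C' \cap V| > \mathrm{mvc}(G)$. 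In either subcase $|C'| \geq \mathrm{mvc}(G) + 1$.

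Putting the bounds together gives $\mathrm{mvc}(G') = \mathrm{mvc}(G) + 1 = \mathrm{mvc}(\Gamma(G, U, u))$, so the edge $\{u, u^*\}$ witnesses that $\Gamma(G, U, u)$ is \VCR. The main obstacle is purely conceptual: correctly unpacking the negation of the \VCOO condition to isolate the pivot vertex $u^*$; once that vertex is in hand, the argument reduces to a clean two-way case split driven by whether the new vertex $u$ belongs to the cover.
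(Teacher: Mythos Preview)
Your proof is correct and follows the same route as the paper: isolate the ``bad'' vertex $u^*\in U$ witnessing the failure of the \VCOO condition and delete the edge $\{u,u^*\}$. The only difference is in packaging: the paper observes that $U\setminus\{u^*\}$ is itself a \VCO (your two subcases in the $u\notin C'$ branch are exactly the verification of this), and then invokes Observation~\ref{obs:VCover01} on $U\setminus\{u^*\}$ to get $\mathrm{mvc}(G')=\mathrm{mvc}(G)+1$ in one line, whereas you unroll that invocation by hand. Both arguments are the same at the level of ideas; the paper's version is just more compressed. As a side remark, neither your argument nor the paper's actually uses the hypothesis that $G$ is \VCI.
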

\begin{proof}
Since $U$ is a \VCO{} but not an \VCOO{} there exists a vertex $v\in U$ 
such that for every minimum vertex cover $C$ of $G$ with $v\not\in C$ there exists
a second vertex $v'\in U$ that is also not an element of $C$. Thus, $U\setminus \{v\}$
is a \VCO{} and the claim follows directly from Observation~\ref{obs:VCover01}.
\end{proof}

%-----------------------------------------------------------------------

%-----------------------------------------------------------------------
\begin{theorem}
  \label{th:VCover01}
  Given a graph $G = (V, E)$, a set $U \subseteq V$ with a \VCOO, and a
  new vertex $u$, then if $G$ is \VCI, then $\Gamma(G, U, u)$ is \VCI.
\end{theorem}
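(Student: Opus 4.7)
The plan is to show that deleting any edge of $\Gamma(G,U,u)$ strictly decreases the minimum vertex cover size, which will prove \VCI ness. By Observation~\ref{obs:VCover01}, the minimum vertex cover size of $\Gamma(G,U,u)$ equals $\mu+1$, where $\mu$ denotes the MVC size of $G$. So the task reduces to exhibiting, for each edge $e$ of $\Gamma(G,U,u)$, a vertex cover of the edge-deleted graph of size at most $\mu$.

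I would split the analysis by edge type. There are two kinds of edges in $\Gamma(G,U,u)$: the original edges $e \in E$, and the newly inserted edges $\{u,v\}$ with $v \in U$. For a deleted original edge $e \in E$, I would use that $G$ itself is \VCI: there is a vertex cover $C'$ of $(V, E\setminus\{e\})$ with $|C'| \le \mu - 1$. Then $C' \cup \{u\}$ covers every edge of $\Gamma(G,U,u) \setminus \{e\}$, since $u$ takes care of every new edge, and $C'$ covers all remaining original edges. This gives a cover of size $\le \mu$, strictly less than $\mu+1$.

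For a deleted new edge $\{u,v\}$ with $v \in U$, this is where the \VCOO assumption is crucial. By definition of \VCOO, there exists a minimum vertex cover $C_v$ of $G$ with $v \notin C_v$ and $U \setminus\{v\} \subseteq C_v$. I claim $C_v$ is a vertex cover of $\Gamma(G,U,u) \setminus \{\{u,v\}\}$: it covers every edge of $G$ (being a cover of $G$), and it covers every remaining new edge $\{u,w\}$ with $w \in U\setminus\{v\}$ because $w \in C_v$. So we obtain a cover of size $\mu$, again strictly less than $\mu+1$.

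Since every edge of $\Gamma(G,U,u)$ is critical in this sense, the graph is \VCI. The only non-routine step is Case 2, and the difficulty there is precisely the reason the \VCOO hypothesis (as opposed to mere \VCO) is needed: we must avoid having $u$ in the reduced cover, which forces all of $U$ into it, and the \VCOO property is exactly what supplies a minimum cover of $G$ that drops the specific vertex $v$ while still containing the rest of $U$ to absorb the other new edges.
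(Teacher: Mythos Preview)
Your proof is correct and follows essentially the same approach as the paper's: both split into the two cases of deleting a new edge $\{u,v\}$ versus an original edge $e\in E$, and in each case construct the same smaller cover ($C_v$ with $U\setminus\{v\}\subseteq C_v$ in the first case, $C'\cup\{u\}$ in the second). The only cosmetic difference is that the paper phrases it as a proof by contradiction while you argue directly, and you explicitly invoke Observation~\ref{obs:VCover01} to fix the target size $\mu+1$ at the outset.
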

\begin{proof}
For the contrary, assume that $\Gamma(G,U,u)$ is \VCR, then we can either delete
an edge connected to $u$ or an edge of $E$ without changing the minimum vertex
cover size of $\Gamma(G,U,u)$:
\begin{enumerate}
\item Assume that we can delete the edge $\{u,v\}$ with $v\in U$ from
$\Gamma(G,U,u)$ without reducing the minimum vertex cover size of the resulting
graph $G'$. Since $U$ is a \VCOO, there exists a minimum vertex cover for $G$
that contains all nodes of $U\setminus\{v\}$. This vertex cover is also a vertex
cover of $G'$. Hence, deleting $\{u,v\}$ from $\Gamma(G,U,u)$ reduces the
minimum vertex cover size of the resulting graph.
\item Assume that we can delete an edge $e$ of $G$ without reducing the minimum vertex cover size of the resulting subgraph $G'$ of $\Gamma(G,U,u)$. Since $G$ is \VCI, there exists a 
vertex cover $C$ of $G$ after deleting $e$ where $|C|$ is smaller than minimum vertex cover size 
of $G$. Since $C\cup\{u\}$ is a vertex cover of $G'$, deleting the edge $e$ from $\Gamma(G,U,u)$
reduces minimum vertex cover size of the resulting graph.
\end{enumerate}
Thus, $\Gamma(G,U,u)$ is \VCI.
\end{proof}

%-----------------------------------------------------------------------

Note that, if $U$ is given by a vertex of $G$ plus its neighborhood, then $U$ is
a \VCO. 
However, Theorem~\ref{th:VCover01} does not provide us with an efficient method
for constructing \Irreducible graphs. For this purpose, the following definition
simplifies the extension $\Gamma(G, U, u)$. 

%-----------------------------------------------------------------------
\begin{definition}[\bfseries Parallel Extension]
  \label{def:par-ext}
  Let $G = (V,E)$ be an undirected connected graph and let $v \in V$ be a vertex
  of $G$. Then, define $G \setminus v = (V \setminus \{v\}, E \setminus
  \{\UEdge{v, u} | u \in V\})$ to be the graph $G$ without $v$. For
  an undirected graph $G = (V, E)$ and a vertex $v \in V$, let $N_G(v) = \{u \in
  V | \{v, u\} \in E\}$ denote the neighborhood of $v$ in $G$.

  Let $e = \{u,v\}\in E$ be an edge of $G$. Then, we call $u$ and $v$ \NeEq{} iff
  $N_G(v)\setminus\{u\}=N_G(u)\setminus\{v\}$. Moreover, let $\cG$ be a family
  of undirected graphs, then define the set of parallel extensions
  $\parExt{\cG}$ of $\cG$ to be the set of graphs that can be generated from any
  graph $G=(V,E)\in\cG$ by adding a new node $u\not\in V$ and new edges $E'$
  such that for at least one vertex $v\in V$, it holds that
  $E'=\{\UEdge{u,v'}|v'\in N_G(v)\cup\{v\}\}$.  
  We call $u$ the \emph{parallel extension} of $v$.

  Define $\parExtP{\cG}{0}=\cG$ and $\parExtP{\cG}{k} =
  \parExtP{\parExt{\cG}}{k-1}$. We write 
  \begin{align*}
    \parExtP{\cG}{*} = \bigcup_{k \in \{0, \ldots, \infty\}} \parExtP{\cG}{k}
  \end{align*}
  for the transitive closure of $\cG$. If $\cG = \{G\}$ consists of a single
  $G$, we usually write $\parExt{G}$ and $\parExtP{G}{*}$ instead of
  $\parExt{\cG}$ and $\parExtP{\cG}{*}$.
\end{definition}
%-----------------------------------------------------------------------

A minimum vertex cover for any graph $G\in \parExt{G'}$
can easily be determined from a minimum vertex cover of $G'$.

%-----------------------------------------------------------------------
\begin{theorem}
  \label{th:equivSubGraphCover}
  Let $G=(V,E)$ be an undirected connected graph and let $e = \UEdge{u,v} \in E$
  such that $u$ and $v$ are \NeEq{}. Let $U$ be a minimum vertex
  cover for $G\setminus u$. Then, $U \cup \{u\}$ is a minimum vertex cover for
  $G$.
\end{theorem}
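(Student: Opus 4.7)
The plan is to establish two things: first that $U \cup \{u\}$ is a vertex cover of $G$, and second that it has minimum size. Feasibility is immediate, since $U$ already covers every edge of $G \setminus u$ (equivalently, every edge of $G$ not incident to $u$), while adding $u$ takes care of all remaining edges, namely those incident to $u$. So the real work is in the optimality argument.

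For optimality, I would proceed by contradiction: assume $C$ is a vertex cover of $G$ with $|C| \le |U|$ and derive a vertex cover of $G \setminus u$ that is strictly smaller than $U$. Split on whether $u \in C$. If $u \in C$, then $C \setminus \{u\}$ covers every edge of $G$ not incident to $u$, so it is a vertex cover of $G \setminus u$ of size $|C| - 1 \le |U| - 1$, contradicting the minimality of $U$. If $u \notin C$, then every neighbor of $u$ in $G$ must lie in $C$; in particular $v \in C$ and $N_G(u) \setminus \{v\} \subseteq C$. Here the neighbor-equivalence hypothesis gives $N_G(v) \setminus \{u\} = N_G(u) \setminus \{v\} \subseteq C$, and I would use this to argue that $C \setminus \{v\}$ still covers $G \setminus u$: the only edges that might have relied on $v$ are edges $\{v, w\}$ with $w \neq u$, but each such $w$ lies in $N_G(v) \setminus \{u\} \subseteq C$, hence in $C \setminus \{v\}$. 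This produces a vertex cover of $G \setminus u$ of size $|C| - 1 \le |U| - 1$, again contradicting minimality.

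The main conceptual obstacle, and the only place where the neighbor-equivalence assumption is actually used, is the second case: without neighbor-equivalence one cannot substitute $u$ for $v$ in the sense of removing $v$ from the cover, since edges incident only to $v$ could be left uncovered. The equivalence $N_G(u)\setminus\{v\} = N_G(v)\setminus\{u\}$ is exactly what ensures that every edge at $v$ (other than $\{u,v\}$) is also an edge at $u$ and hence was already forced to be covered from its other endpoint. Once this swap argument is in place, both cases yield the same contradiction and the bound $|C| \ge |U| + 1$ follows, proving $U \cup \{u\}$ is minimum.
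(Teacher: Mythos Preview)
Your proof is correct and follows essentially the same approach as the paper. The only cosmetic difference is that the paper, in the case $u\notin C$, first swaps $v$ for $u$ to obtain a same-size cover of $G$ containing $u$ (namely $\{u\}\cup C\setminus\{v\}$) and then removes $u$, whereas you remove $v$ directly and argue that $C\setminus\{v\}$ already covers $G\setminus u$; both routes exploit neighbor-equivalence in exactly the same way.
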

\begin{proof}
Let $G=(V,E)$, $u,v\in V$, and $U$ fulfill the preconditions of the 
theorem. If $U$ is a cover of size $m$ for $G\setminus u$ then 
$U\cup\{u\}$ will be a cover
for $G$ of size $m+1$. Thus, it remains to show that if $U$ is a 
minimum cover for $G\setminus u$, then $U\cup\{u\}$ will be a minimum 
cover for $G$.

Assume that $U\cup\{u\}$ is not a minimum vertex cover for $G$. 
Then there exists a vertex cover $U'$ for $G$ with $|U'|\le m$. 
Note that for every vertex
cover for $G$ either $u$ or $v$ has to be in the cover.

Assume that $u\not\in U'$, then also the set 
$\{u\}\cup U'\setminus\{v\}$
has to be a vertex cover for $G$ of the same size.

But if $u$ is an element of a vertex cover $U'$ for $G$, 
then $U'\setminus \{u\}$ will be a vertex cover for 
$G\setminus u$ of size $m-1$ -- contradicting
the condition that $U$ is a minimum vertex cover for 
$G\setminus u$ and $U\cup\{u\}$ is a minimum
vertex cover for $G$. 
\end{proof}
%-----------------------------------------------------------------------

%-----------------------------------------------------------------------
\begin{corollary}
  \label{co:irredGraphMVCsize}
  Let $G$ be a \Irreducible{} graph with minimum vertex cover size $m$. Then,
  the minimum vertex cover size of each graph of $\parExt{G}$ is $m + 1$. More
  general, the minimum vertex cover size of each graph in $\parExtP{G}{k}$ is
  $m + k$.
\end{corollary}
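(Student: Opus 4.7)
The plan is to derive this as a direct application of Theorem~\ref{th:equivSubGraphCover} followed by an induction on $k$. The key observation is that the parallel extension is constructed exactly so as to introduce a \NeEq{} pair, which is precisely the hypothesis needed to invoke Theorem~\ref{th:equivSubGraphCover}.

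First I would verify the base case (the statement about $\parExt{G}$). Let $G' \in \parExt{G}$ be obtained by adding a new vertex $u$ with $N_{G'}(u) = N_G(v) \cup \{v\}$ for some $v \in V$. By construction, $N_{G'}(u) \setminus \{v\} = N_G(v) = N_{G'}(v) \setminus \{u\}$, so the edge $\{u,v\}$ joins a \NeEq{} pair in $G'$. Moreover, $G' \setminus u = G$. Applying Theorem~\ref{th:equivSubGraphCover} to $G'$ with this edge yields that $U \cup \{u\}$ is a minimum vertex cover for $G'$ whenever $U$ is a minimum vertex cover of $G$; hence the minimum cover size of $G'$ is $m+1$.

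Next I would handle the general statement by induction on $k$. The case $k=0$ is trivial since $\parExtP{G}{0} = \{G\}$. For the inductive step, suppose the claim holds for $k$, and let $G'' \in \parExtP{G}{k+1} = \parExt{\parExtP{G}{k}}$. Then there exists $G' \in \parExtP{G}{k}$ with $G'' \in \parExt{G'}$; by the induction hypothesis the minimum vertex cover of $G'$ has size $m+k$, and by the base-case argument applied to the pair $(G', G'')$, the minimum vertex cover of $G''$ has size $(m+k)+1 = m+(k+1)$.

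There is no real obstacle here: the proof is essentially bookkeeping on top of Theorem~\ref{th:equivSubGraphCover}. The only point worth emphasizing is that neither \VCI-ness of $G$ nor of the intermediate graphs is invoked for the cardinality statement itself; \Irreducible-ness of $G$ is only needed insofar as it is part of the corollary's hypothesis, but Theorem~\ref{th:equivSubGraphCover} applies to any graph admitting a \NeEq{} pair, which the parallel extension guarantees at every step.
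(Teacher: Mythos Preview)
Your argument is correct and matches the paper's intent: the corollary is stated without its own proof precisely because it is the immediate consequence of Theorem~\ref{th:equivSubGraphCover} that you spell out, namely that a parallel extension creates a \NeEq{} pair $\{u,v\}$ with $G'\setminus u = G$, whence the cover size grows by exactly one, and then induction on $k$. Your closing remark that \Irreducible-ness is not actually used in the cardinality argument is also accurate and worth noting.
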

%-----------------------------------------------------------------------

We show that a parallel extension of any \Irreducible graph gives a new
\Irreducible graph.

%-----------------------------------------------------------------------
\begin{theorem}
  \label{the:irred-par-irred}
  If $G$ is a connected \Irreducible{} graph, then all graphs in $\parExt{G}$
  are connected and \Irreducible.
\end{theorem}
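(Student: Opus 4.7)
The plan is to handle connectedness and the \Irreducible{} property separately. Let $G = (V, E)$ be a connected \Irreducible{} graph with minimum vertex cover size $m$, and let $G' \in \parExt{G}$ be obtained by choosing some $v \in V$ and adding a new vertex $u$ with edges $\{u, v'\}$ for every $v' \in N_G(v) \cup \{v\}$. Connectedness is immediate, since $\{u, v\}$ is an edge of $G'$, so $u$ attaches to the already connected $G$.

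For criticality, I would invoke Corollary~\ref{co:irredGraphMVCsize}, which fixes the minimum vertex cover size of $G'$ at $m + 1$. Hence it suffices to exhibit, for every edge $e$ of $G'$, a cover of size at most $m$ for the graph obtained by deleting $e$. If $e \in E$, then since $G$ is \Irreducible{}, there exists a cover $C$ of $(V, E \setminus \{e\})$ with $|C| \le m - 1$, and $C \cup \{u\}$ covers the modified graph, because $u$ is an endpoint of every newly added edge.

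If instead $e$ is one of the new edges, I would pick a neighbor $w_0 \in N_G(v)$ of $v$ (which exists because $G$ is $2$-connected by Theorem~\ref{th:2connected}), choosing $w_0$ so that $e = \{u, w_0\}$ if the second endpoint of $e$ lies in $N_G(v)$, and arbitrarily otherwise (the case $e = \{u, v\}$). Again using that $G$ is \Irreducible{}, there is a cover $C$ of $(V, E \setminus \{\{v, w_0\}\})$ with $|C| \le m - 1$. The crucial observation is that $v \notin C$ must hold, since otherwise $C$ would already cover $G$, contradicting that the minimum vertex cover size of $G$ is $m$. This in turn forces $N_G(v) \setminus \{w_0\} \subseteq C$, because the edges from $v$ to those vertices remain in the modified graph and must be covered. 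If $e = \{u, v\}$, then $C \cup \{w_0\}$ covers the modified graph; if $e = \{u, w_0\}$, then $C \cup \{v\}$ does. In both sub-cases the resulting set has size at most $m$, the supplementary vertex covers $\{v, w_0\}$ in $G$, and every remaining edge $\{u, x\}$ is covered because $N_G(v) \subseteq C \cup \{v, w_0\}$.

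The main obstacle is the new-edge case, and in particular the forced exclusion $v \notin C$, which unlocks the structural description $N_G(v) \setminus \{w_0\} \subseteq C$. Once that step is in hand, the rest reduces to routine verification that the constructed sets cover every remaining edge of the modified graph.
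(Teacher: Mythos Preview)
Your proof is correct and shares the paper's overall strategy: fix the minimum cover size of $G'$ at $m+1$ via Corollary~\ref{co:irredGraphMVCsize}, then for each edge $e$ exhibit a cover of size at most $m$ for $G'-e$ using the criticality of $G$. The organization, however, is genuinely different. The paper splits into three cases according to whether $e$ is incident with neither, exactly one, or both of $u,v$; its first case invokes Theorem~\ref{th:equivSubGraphCover} (neighbor equivalence), its second case leans on the $u\leftrightarrow v$ symmetry of the extension, and its third case goes through Observation~\ref{ops:2endpoints} to locate a suitable neighbor $w$. Your two-case split (old edges versus new edges) is more economical: the old-edge case is dispatched in one line by adding $u$, and your unified treatment of all new edges via the critical edge $\{v,w_0\}$ in $G$ replaces both the symmetry argument and the appeal to Observation~\ref{ops:2endpoints}, the key being the clean forced-exclusion step $v\notin C$ (since otherwise $C$ would already cover $G$). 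One minor remark: your use of Theorem~\ref{th:2connected} to guarantee $N_G(v)\neq\emptyset$ tacitly assumes $|V|\ge 3$; for $G\in\{K_1,K_2\}$ the statement is trivial, and the paper's own proof glosses over this edge case as well.
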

\begin{proof}
The claim that any graph in $\parExt{G}$ is connected if 
$G$ is connected is obvious. Thus, we will focus on the claim
that any graph in $\parExt{G}$ is \Irreducible{} if 
$G$ is \Irreducible.

Let $G=(V_G,E_G)$ be a connected \Irreducible{} graph
with a minimum vertex cover of size $m$.
Assume that there exists a \Reducible{} graph 
$H=(V_H,E_H)\in\parExt{G}$. Furthermore, let
$v\in V_G$ and $u\in V_H$ such that $v,u$ 
defines the extension of $G$, i.e.
\begin{itemize}
  \item $v$ and $u$ are \NeEq{} for $H$,
  \item $u\not\in V_G$ and $G=H\setminus u$.
\end{itemize}
According to Corollary~\ref{co:irredGraphMVCsize} 
the minimum vertex 
cover size of $H$ is $m+1$. Since $H$ is \Reducible{}, some edge 
$e\in E_H$ exists, such that the minimum vertex cover size
of the smaller graph 
$H'=(V_H,E_H\setminus\{e\})$ 
is $m+1$, too. Let $U'$ be a minimum
vertex cover for $H'$. 

We distinguish three cases:
\begin{enumerate}
\item The edge $e$ is neither incident with $u$, nor with $v$. 
  In this case,
  $U'$ must contain $u$ or $v$ (or both).
  If $u \not\in U'$, then $v \in U'$, 
  and $\{u\} \cup U' \setminus \{v\}$ 
  would be another cover of $H'$ of the same size. 
  Thus, assume $u\in U'$.

  Since $u$ and $v$ are \NeEq{} for $H'$,
  Theorem~\ref{th:equivSubGraphCover} implies that 
  $U'\setminus \{u\}$ 
  determines a minimum vertex cover for 
  $G'=(V_G,E_G\setminus\{e\})$
  of size $m$. This is a contradiction to $G$ being \Irreducible. 
  
\item 
  The edge $e$ is incident with either $u$ or $v$ 
  (but not with both). By symmetry we can assume that $e$
  is incident with $v$.
  
  Since $G$ is irreducible, $G'$ has minimum vertex cover $U'$ of
  size $m-1$. On the other hand, all the  
  edges added to $G$ by the parallel extension are 
  incident with $u$; thus, $U'\setminus \{u\}$ determines a 
  minimum vertex cover for 
  $H'$ of size $m$ -- contradicting the assumption, that
  have the same 
  minimum vertex cover size. 

\item 
  The edge $e$ is incident with both $u$ and $v$, i.e., 
  $e=\UEdge{u,v}$. 
  
  Consider a minimum vertex cover $U$ 
  for $G$ with $|U|=m$.  
  If $v \not\in U$, then $U$ is a vertex cover of $H'$, 
  which would contradict the size $m+1$ for a 
  minimum vertex cover of $H'$. 
  Thus, assume $v \in U$. 
  
  Now, assume that a neighbor $w$ of $v$ is not in the cover $U$.
  By Observation~\ref{ops:2endpoints}, such a cover $U$ 
  exists, except 
  when $v$ is isolated. But then either $G$ is not connected 
  or $G$ consist of a single vertex.

  Now remove $\UEdge{w,v}$ from $G$ to get a smaller graph $G''$. 
  Since $G$ is \Irreducible{}, $G''$ has a vertex cover $U''$ 
  of size $m-1$. Note that $U''$ covers all edges in 
  in $H'$, except (possibly) $\UEdge{v, w}$ and the 
  connections between
  $u$ and the neighbors of $v$. If $v \in U''$, then 
  $U \cup \{u\}$ covers all edges in $H'$ (and even in $H$). 
  If $v \not \in U''$, all neighbors of $v$ 
  (except for $w$) would be in $U''$; thus, $U'' \cup \{w\}$ would 
  cover all edges in $H'$. 

  Since $|U'' \cup \{w\}| = m = |U'' \cup \{u\}|$, the size of our
  vertex cover for $H'$ contradicts the assumption of $H$ being 
  \Reducible.  
\end{enumerate}
\end{proof}

%-----------------------------------------------------------------------

%-----------------------------------------------------------------------
\begin{corollary}
  \label{co:irred-par-irred}
  Let $G$ be a \Irreducible{} graph. Then, all graphs of $\parExtP{G}{*}$ are
  \Irreducible.
\end{corollary}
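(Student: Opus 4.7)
The plan is a straightforward induction on the number $k$ of iterated parallel extensions, using Theorem~\ref{the:irred-par-irred} as the single-step engine. The intermediate bookkeeping is easier if I first lift Theorem~\ref{the:irred-par-irred} from a single graph to a family: if every member of a family $\cG$ of connected graphs is \Irreducible, then every member of $\parExt{\cG}=\bigcup_{G\in\cG}\parExt{G}$ is connected and \Irreducible. This is immediate from Theorem~\ref{the:irred-par-irred} applied to each $G\in\cG$ separately.

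With that lemma in hand, I would induct on $k\ge 0$ over the statement: for every family $\cG$ of connected \Irreducible{} graphs, every graph in $\parExtP{\cG}{k}$ is connected and \Irreducible. The base case $k=0$ is trivial since $\parExtP{\cG}{0}=\cG$ by definition. For the inductive step I use the recursive definition $\parExtP{\cG}{k+1}=\parExtP{\parExt{\cG}}{k}$: by the family-level lemma above, $\parExt{\cG}$ is again a family of connected \Irreducible{} graphs, and then applying the inductive hypothesis to this family yields that every member of $\parExtP{\parExt{\cG}}{k}=\parExtP{\cG}{k+1}$ is connected and \Irreducible.

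Specialising to $\cG=\{G\}$ for the given \Irreducible{} graph $G$ then shows that each $\parExtP{G}{k}$ consists of \Irreducible{} graphs, and taking the union over $k\in\{0,1,2,\ldots\}$ gives the claim for $\parExtP{G}{*}$. The only step that requires any care is formulating the induction at the level of families rather than single graphs, so that the recursive unfolding $\parExtP{\cG}{k+1}=\parExtP{\parExt{\cG}}{k}$ meshes cleanly with the inductive hypothesis; I expect this to be the main (and essentially only) subtlety, as everything substantive has already been done in Theorem~\ref{the:irred-par-irred}.
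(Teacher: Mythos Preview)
Your proposal is correct and matches the paper's approach: the paper states this as a corollary with no explicit proof, treating it as the obvious induction on $k$ from Theorem~\ref{the:irred-par-irred}. Your lifting to families is a clean way to make the recursion $\parExtP{\cG}{k+1}=\parExtP{\parExt{\cG}}{k}$ mesh with the inductive hypothesis, though one could equally well unfold the definition directly for the singleton $\{G\}$.
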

%-----------------------------------------------------------------------

We will now show an inverse observation to 
Corollary~\ref{co:irred-par-irred}. Analogously to Theorem~\ref{th:split-01}
our goal is to show that if an $\alpha$-critical graph $G$ is the parallel 
extension $G'$, then also $G'$ is $\alpha$-critical. 

\begin{theorem}
\label{th:irredSubGraphCover1}
Let $G=(V,E)$ be an \Irreducible{} connected 
graph and $e=\UEdge{u,v}\in E$ an
edge of $G$ such that $u$ and $v$ are \NeEq{}.
Then $H=G\setminus u$ is \Irreducible.
The size of a minimum vertex cover of $H$ is 
reduced by 1 according to the 
minimum vertex cover of $G$.
\end{theorem}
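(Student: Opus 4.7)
The plan is to first dispose of the cover-size claim, which is immediate from Theorem~\ref{th:equivSubGraphCover}: any minimum cover $U$ of $H=G\setminus u$ yields $U\cup\{u\}$ as a minimum cover of $G$, so writing $m$ for the cover size of $G$ gives cover size $m-1$ for $H$.

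For the main claim I will argue by contradiction. Suppose some edge $e'\in E(H)$ can be removed from $H$ without lowering its cover size below $m-1$. Note that $e'$ is not incident to $u$, since $u\notin V(H)$. Because $G$ is \VCI, removing the same edge $e'$ from $G$ strictly decreases the cover size, so there is a minimum cover $W$ of $G\setminus e'$ with $|W|\le m-1$; such a $W$ cannot cover $e'$, as otherwise $W$ would already be a cover of $G$ of size less than $m$.

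The remainder is a short case split on whether $u\in W$. The easy case is $u\in W$: then $W\setminus\{u\}$ covers $H\setminus e'$ (no edge of $H$ is incident to $u$) and has size at most $m-2$, contradicting the assumed cover size $m-1$ for $H\setminus e'$. The main obstacle is the case $u\notin W$, which forces all of $N_G(u)$ into $W$ in order to cover the $u$-edges of $G\setminus e'$, and in particular $v\in W$. The plan is to exploit the neighbor-equivalence $N_G(v)\setminus\{u\}=N_G(u)\setminus\{v\}$ via the swap $W'=(W\setminus\{v\})\cup\{u\}$; a quick edge-by-edge check (edges at $u$ are covered by $u$; edges at $v$ other than $\{u,v\}$ run to vertices in $N_G(u)\setminus\{v\}\subseteq W'$; all remaining edges involve neither $u$ nor $v$ and are still covered) confirms that $W'$ is a cover of $G\setminus e'$ of the same size as $W$, now containing $u$. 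This reduces to the easy case and yields the contradiction, completing the argument.
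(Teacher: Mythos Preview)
Your proof is correct and follows essentially the same route as the paper: use Theorem~\ref{th:equivSubGraphCover} for the cover-size claim, then for irreducibility take a minimum cover of $G\setminus e'$ (of size $m-1$), swap $v$ for $u$ if necessary using neighbor-equivalence, and drop $u$ to obtain a cover of $H\setminus e'$ of size $m-2$. The only cosmetic differences are that you phrase the argument by contradiction and spell out the swap more explicitly, whereas the paper argues directly and compresses the swap into the single remark that $N_{G'}(v)\subseteq N_{G'}(u)$ lets one assume $u$ is in the cover.
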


\begin{proof}
The second claim, that the size of a minimum vertex cover of 
$H=(V_H,E_H)$ is reduced by 1 according to the minimum vertex 
cover of $G$ follows directly from
Theorem~\ref{th:equivSubGraphCover} if $H$ is irreducible.

To show that $H$ is \Irreducible{} if $G$ is \Irreducible{}, 
we have transform the
minimum vertex cover of $G$ after deleting an edge 
$e'\in E\cap E_H$, which
occurs in both graphs, into a minimum vertex cover of $H$.

Assume that the size of a minimum vertex cover of $G$ is $m$. 
Since $G$ is \Irreducible{}, the size of the minimum vertex cover 
of $G'=(V,E\setminus\{e'\})$ is $m-1$. Since $e'\in E\cap E_H$ 
the edge $e'$ is not incident with $u$. Thus,
each vertex cover of $G'$ must contain either $u$ or $v$ or both. 
Since $N_{G'}(v)\subseteq N_{G'}(u)$, we can easily transform each 
vertex cover for $G'$ into a vertex cover for $G'$ that contains 
$u$ and has the same size. Hence,
such a minimum vertex cover for $G'$ gives us a minimum vertex 
cover for $G'\setminus u$ of size $m-2$. 
Since $G'\setminus u$ is isomorphic to the graph $H$
after deleting $e'$, $H$ is \Irreducible{}.
\end{proof}

Note that the observations on the size of a minimum 
vertex cover of the shrinked
graphs presented in
Theorem~\ref{th:irredSubGraphCover1} also work
for \Reducible{} graphs. Thus, we get:

\begin{observation}
\label{ob:shrinkSubGraphCover1}
Let $G=(V,E)$ be an undirected connected 
graph with minimum vertex cover size
$m$.
Let $e=\UEdge{u,v}\in E$ be an edge of $G$ 
such that $u$ and $v$ are \NeEq{}. 
Then size of a minimum vertex cover of 
$H=G\setminus u$ is $m-1$.
\end{observation}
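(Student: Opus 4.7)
The plan is to extract this observation as a direct consequence of Theorem~\ref{th:equivSubGraphCover}. A careful reading of that theorem shows that its conclusion — that extending any minimum vertex cover of $G \setminus u$ by $u$ yields a minimum vertex cover of $G$ — was derived using only connectedness of $G$ and the neighbor-equivalence of $u$ and $v$, never the \Irreducible{}ness of $G$. Hence the same reasoning remains valid in the more general setting considered here, and no new ideas are needed.

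Concretely, I would let $m'$ denote the minimum vertex cover size of $H = G \setminus u$, fix any minimum vertex cover $U$ of $H$, and note that $u \notin U$ (since $u \notin V(H)$), so $|U \cup \{u\}| = m' + 1$. Applying Theorem~\ref{th:equivSubGraphCover} shows that $U \cup \{u\}$ is a minimum vertex cover of $G$, yielding $m = m' + 1$ and thus $m' = m - 1$, as claimed.

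If one preferred a self-contained argument, the two inequalities can be checked separately: every minimum cover $U$ of $H$ extends to a cover $U \cup \{u\}$ of $G$, giving $m \le m' + 1$; conversely, any minimum cover $W$ of $G$ must hit the edge $\UEdge{u, v}$, and because $N_G(u) \setminus \{v\} = N_G(v) \setminus \{u\}$ one may replace $v$ by $u$ in $W$ without increasing its size, so we may assume $u \in W$, whereupon $W \setminus \{u\}$ is a cover of $H$ of size $m - 1$, giving $m' \le m - 1$. Since the statement is essentially a relabeling of Theorem~\ref{th:equivSubGraphCover}, I anticipate no real obstacle; the only point worth emphasising is that Theorem~\ref{th:irredSubGraphCover1} combined this size equality with a separate preservation-of-irreducibility claim, and it is useful to record that the size equality alone holds without any irreducibility hypothesis.
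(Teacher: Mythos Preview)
Your proposal is correct and follows the same approach as the paper: the paper justifies this observation by noting that the size argument in Theorem~\ref{th:irredSubGraphCover1} also works for \Reducible{} graphs, and that argument in turn rests on Theorem~\ref{th:equivSubGraphCover}, exactly as you do. Your write-up is in fact more explicit than the paper's, which merely asserts that the relevant step carries over.
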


Let $\cC$ denote the family of all cliques and let $\cOC$ denote the family of
all cycles of odd length. For easier notion, we assume that a graph that
consists of only a single node is in $\cC$ and in $\cOC$. Furthermore, we assume
that a graph which consists of only two connected nodes is in $\cC$. Then, we
have:

%-----------------------------------------------------------------------
\begin{observation}
  \label{ob:start_with_3cyc}
  $\cC \subset \parExtP{\cOC}{*}$ and $\cC = \parExtP{G_1}{*}$ where 
  $G_1 = (\{v\}, \emptyset)$. Moreover, if 
  $G_2 = (\{u, v\}, \{\UEdge{u, v}\})$ and if $G_3$
  denotes the cycle of length three, then we have 
  $\cC \setminus \{G_1\} = \parExtP{G_2}{*}$ and 
  $\cC \setminus \{G_1,G_2\} = \parExtP{G_3}{*}$.
\end{observation}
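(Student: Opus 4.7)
The plan is to reduce all four assertions to a single elementary observation about the action of parallel extension on cliques. Concretely, I would first verify that for every $n\ge 1$ we have $\parExt{K_n}=\{K_{n+1}\}$. This is immediate from Definition~\ref{def:par-ext}: every vertex $v$ of $K_n$ satisfies $N_{K_n}(v)\cup\{v\}=V(K_n)$, so the new vertex $u$ is forced to be adjacent to all $n$ old vertices, producing $K_{n+1}$; and since all vertices of $K_n$ are interchangeable, the choice of pivot does not matter. This symmetry argument is the heart of the proof and the only step that really needs care; I expect it to be the main (and rather small) obstacle.

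With this base fact in hand, a straightforward induction on $k$ yields
\begin{align*}
\parExtP{G_i}{k}\;=\;\{K_{i+k}\}\qquad\text{for }i\in\{1,2,3\}\text{ and }k\ge 0,
\end{align*}
using $G_1=K_1$, $G_2=K_2$, and $G_3=K_3$ as base cases (the last because the triangle coincides with $K_3$). Taking unions over all $k$ then gives the three identities $\parExtP{G_1}{*}=\cC$, $\parExtP{G_2}{*}=\cC\setminus\{G_1\}$, and $\parExtP{G_3}{*}=\cC\setminus\{G_1,G_2\}$ at once, with both inclusions built into the inductive formula (every $K_n$ is reached, and no other graph is ever produced).

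For the inclusion $\cC\subset\parExtP{\cOC}{*}$, I would invoke the stated convention that the single-vertex graph lies in $\cOC$: then $G_1\in\cOC$ yields $\cC=\parExtP{G_1}{*}\subseteq\parExtP{\cOC}{*}$. Strictness is witnessed by any odd cycle of length at least five, which lies in $\cOC\subseteq\parExtP{\cOC}{*}$ but is not a clique. So after the base lemma on $\parExt{K_n}$, everything else is bookkeeping.
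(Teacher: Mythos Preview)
Your proof is correct. The paper states this observation without proof, so there is nothing to compare against; your argument via $\parExt{K_n}=\{K_{n+1}\}$ and induction is exactly the natural verification the authors presumably had in mind, and your handling of the strict inclusion using the convention $G_1\in\cOC$ together with $C_5\in\cOC\setminus\cC$ is also fine.
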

%-----------------------------------------------------------------------

Thus, simple edges and cycles of length three are interesting candidates to
start with a generation process for \Irreducible graphs. Based on the following
observation, one can show that the relation of neighbor-equivalence can be used
to define equivalence classes of vertices.

%-----------------------------------------------------------------------
\begin{observation}
  \label{obs:closureEquiv1}
  Given an undirected connected graph $G = (V, E)$. Assume that $u$ is \NeEq{}
  to $v$ and to $w$, then $v$ is \NeEq{} to $w$.
\end{observation}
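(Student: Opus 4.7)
The plan is to unpack the definition of \NeEq{} (which requires both an edge and an equality of the ``remaining'' neighborhoods) and then pivot both hypotheses through $N_G(u)$. From $u$ and $v$ being \NeEq{} we have $\UEdge{u,v}\in E$ together with $N_G(u)\setminus\{v\}=N_G(v)\setminus\{u\}$, and similarly $\UEdge{u,w}\in E$ together with $N_G(u)\setminus\{w\}=N_G(w)\setminus\{u\}$. Two things need to be verified: that $\UEdge{v,w}\in E$, and that $N_G(v)\setminus\{w\}=N_G(w)\setminus\{v\}$.

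First I would dispose of the edge. Assuming $v\neq w$ (otherwise the conclusion is vacuous), the element $w$ lies in $N_G(u)\setminus\{v\}$ and hence, by the first hypothesis, in $N_G(v)\setminus\{u\}$, so $\UEdge{v,w}\in E$. For the neighborhood equality, I would introduce shorthands $A = N_G(v)\setminus\{u,w\}$ and $B = N_G(w)\setminus\{u,v\}$. Using $v,w\in N_G(u)$, both hypotheses can then be rewritten as
\[
  N_G(u) \;=\; A \cup \{v,w\} \quad\text{and}\quad N_G(u) \;=\; B \cup \{v,w\},
\]
from which $A=B$ follows immediately (note $v,w\notin A\cup B$ by construction, so no cancellation issues arise). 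Consequently
\[
  N_G(v)\setminus\{w\} \;=\; A\cup\{u\} \;=\; B\cup\{u\} \;=\; N_G(w)\setminus\{v\},
\]
which together with $\UEdge{v,w}\in E$ is exactly the definition of $v$ and $w$ being \NeEq.

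The whole argument is essentially bookkeeping about which of the three distinguished vertices $u,v,w$ belongs to which neighborhood; the only point to watch is the simple-graph convention ($v\notin N_G(v)$, etc.) so that the set differences behave as expected. I do not anticipate any genuine obstacle.
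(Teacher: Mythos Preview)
Your argument is correct and follows the same route as the paper: first deduce $\UEdge{v,w}\in E$ from $w\in N_G(u)\setminus\{v\}=N_G(v)\setminus\{u\}$, and then obtain $N_G(v)\setminus\{u,w\}=N_G(u)\setminus\{v,w\}=N_G(w)\setminus\{u,v\}$ (your $A=B$ via the common value $N_G(u)$), from which $N_G(v)\setminus\{w\}=N_G(w)\setminus\{v\}$ follows by reinserting $u$. The only difference is that you spell out the set manipulations and the simple-graph conventions in more detail than the paper does.
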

\begin{proof}
Since $u$ is equivalent to $v$ and to $w$ the two nodes 
$v$ and $w$ have to be
adjacent. Furthermore, by definition we have
$N_G(w)\setminus\{u,v\}=N_G(u)\setminus\{v,w\}=N_G(v)\setminus\{u,w\}$.
Hence, $N_G(w)\setminus\{v\}=N_G(v)\setminus\{w\}$ and therefore 
$v$ is equivalent to $w$.
\end{proof}

\begin{definition}
Let $G=(V,E)$ be an undirected connected graph and $u\in V$. 
By 
$\parCla{u}$
we denote the set of all vertices that are \NeEq{} to $u$ 
plus $u$ itself. 
We call the different sets $\parCla{u}$ of $G$ the node 
equivalence classes of $G$.
\end{definition}

\begin{observation}
\label{obs:equivClCyc}
Every cliques has only one node 
equivalence class. For $i\ge 4$ every cycle of 
length $i$ has exactly $i$
node equivalence classes.
\end{observation}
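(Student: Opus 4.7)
My plan is to verify both claims by a direct inspection of neighborhoods, relying on the fact that the relation behind $\parCla{\cdot}$ is defined only for adjacent pairs (so two non-adjacent vertices are automatically not \NeEq{}).

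For the clique claim, I take any clique $G=(V,E)$ and any two vertices $u,v\in V$. Since $G$ is complete, $\UEdge{u,v}\in E$, and the neighborhoods are $N_G(u)=V\setminus\{u\}$ and $N_G(v)=V\setminus\{v\}$. Hence $N_G(u)\setminus\{v\}=V\setminus\{u,v\}=N_G(v)\setminus\{u\}$, so $u$ and $v$ are \NeEq. Repeating for every pair (or invoking Observation~\ref{obs:closureEquiv1} inductively) shows that all vertices lie in a single class $\parCla{u}=V$.

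For the cycle claim, fix $i\ge 4$ and label the vertices of the cycle as $v_0,\ldots,v_{i-1}$ with edges $\UEdge{v_j,v_{(j+1)\bmod i}}$. Since \NeEq{} is defined only for endpoints of an edge, two non-adjacent vertices cannot be equivalent, so it suffices to rule out equivalence for each edge $\UEdge{v_j,v_{j+1}}$. For such an edge, each endpoint has degree two, so $N_G(v_j)\setminus\{v_{j+1}\}=\{v_{(j-1)\bmod i}\}$ and $N_G(v_{j+1})\setminus\{v_j\}=\{v_{(j+2)\bmod i}\}$. These singletons are equal iff $(j-1)\equiv (j+2)\pmod{i}$, i.e., iff $i$ divides $3$; for $i\ge 4$ this fails. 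Thus no edge connects two \NeEq{} vertices, every class $\parCla{v_j}$ is the singleton $\{v_j\}$, and the cycle has exactly $i$ node equivalence classes.

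No step here is subtle; the only pitfall is to remember that the \NeEq{} relation, as defined in the paper, presupposes the existence of an edge between the two vertices, which is exactly what excludes cycles of length~$3$ (a triangle) from behaving like the longer cycles and makes the threshold $i\ge 4$ tight.
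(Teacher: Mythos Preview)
Your proof is correct. The paper states this observation without proof, so there is nothing to compare; your direct neighborhood check is exactly the routine verification one would expect, and your remark that the paper's \NeEq{} relation is only defined for adjacent vertices is the key point that makes the cycle argument work (in particular for $i=4$, where opposite vertices share the same open neighborhood but are not adjacent).
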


From Observation~\ref{obs:closureEquiv1} we conclude:

\begin{corollary}
\label{obs:closureEquiv2}
Let $G=(V,E)$ be an undirected connected graph and $u\in V$. 
Then for all $v\in \parCla{u}$
we have $\parCla{u}=\parCla{v}$.
\end{corollary}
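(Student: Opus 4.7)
The plan is to prove the two inclusions $\parCla{u} \subseteq \parCla{v}$ and $\parCla{v} \subseteq \parCla{u}$ for every $v \in \parCla{u}$, thereby establishing set equality. By symmetry of the argument, it suffices to describe one direction carefully.

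First I would fix an arbitrary $v \in \parCla{u}$. By the definition of $\parCla{u}$, either $v = u$ (in which case the claim is trivial) or $v$ is \NeEq{} to $u$. The nontrivial case is the latter, so I would assume $v \ne u$ and $v$ is \NeEq{} to $u$. I would then take an arbitrary $w \in \parCla{u}$ and aim to show $w \in \parCla{v}$. Again, if $w = v$ the claim is immediate, and if $w = u$, then $u \in \parCla{v}$ follows because the \NeEq{} relation is symmetric by its definition via $N_G(v)\setminus\{u\}=N_G(u)\setminus\{v\}$. Thus the interesting subcase is $w \notin \{u,v\}$.

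For this subcase, I would apply Observation~\ref{obs:closureEquiv1} directly: since $u$ is \NeEq{} to $v$ and $u$ is \NeEq{} to $w$, the observation yields that $v$ is \NeEq{} to $w$, which is exactly $w \in \parCla{v}$. This gives $\parCla{u} \subseteq \parCla{v}$. The reverse inclusion follows by an entirely symmetric argument, swapping the roles of $u$ and $v$ and again invoking Observation~\ref{obs:closureEquiv1}; note that for the reverse direction we need $v$ \NeEq{} $u$, which we have by symmetry of the \NeEq{} relation.

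I do not expect a main obstacle here: the whole argument is just the standard derivation that a reflexive, symmetric, transitive-like relation partitions the vertex set. The only point requiring a little care is treating the degenerate cases $w = u$ and $w = v$ separately, since Observation~\ref{obs:closureEquiv1} is formulated for three distinct vertices; once these are dispatched by reflexivity and symmetry of \NeEq{}, the rest is a one-line application of the observation.
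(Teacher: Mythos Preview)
Your proposal is correct and follows exactly the approach the paper intends: the corollary is stated without proof, merely as a direct consequence of Observation~\ref{obs:closureEquiv1}, and your argument is precisely the routine verification (handling the degenerate cases by reflexivity and symmetry of \NeEq{}, and the generic case by the observation) that the paper leaves implicit.
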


\begin{theorem}
\label{th:GammaAndEquivalenceClasses}
Let $G$ be an undirected connected graph of $k$ node equivalence 
classes. Then every graph in $\parExtP{G}{*}$ has exactly
$k$ node equivalence classes.
\end{theorem}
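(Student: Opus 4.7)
The plan is to induct on the number of parallel extension steps, so it suffices to prove that a single step $G\to H$ preserves the number of node equivalence classes. Fix $v\in V_G$ and a new vertex $u\not\in V_G$, so $H=(V_G\cup\{u\},\,E_G\cup\{\UEdge{u,v'}\mid v'\in N_G(v)\cup\{v\}\})$. I would establish two statements: (A) inside $H$, the equivalence class of $u$ is exactly $\parCla{v}\cup\{u\}$, where $\parCla{v}$ refers to the class of $v$ in $G$; and (B) for every pair of distinct $w,w'\in V_G$, $w$ and $w'$ are \NeEq{} in $H$ iff they are \NeEq{} in $G$. Combined, (A) and (B) show that the classes of $H$ are those of $G$ with $u$ absorbed into $\parCla{v}$, so the number of classes is unchanged.

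The workhorse observation I would use is that $N_H(w)=N_G(w)\cup\{u\}$ when $w\in N_G(v)\cup\{v\}$, $N_H(w)=N_G(w)$ otherwise, and $N_H(u)=N_G(v)\cup\{v\}$. For part (B), if $\UEdge{w,w'}\notin E_G$ then $\UEdge{w,w'}\notin E_H$ and neither pair is \NeEq{}. If $\UEdge{w,w'}\in E_G$ and $w,w'\neq v$, the definition of \NeEq{} in $G$ forces $v\in N_G(w)\iff v\in N_G(w')$, so $u\in N_H(w)\iff u\in N_H(w')$; removing this common element collapses the $H$-neighbor-equivalence condition to its $G$-counterpart. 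The remaining case where one of $w,w'$ equals $v$ is handled analogously, using that a $G$-neighbor-equivalent partner of $v$ must lie in $N_G(v)$.

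For part (A), $u$'s neighbors in $H$ are exactly $N_G(v)\cup\{v\}$. A direct expansion yields $N_H(u)\setminus\{v\}=N_G(v)=N_H(v)\setminus\{u\}$, so $u$ is \NeEq{} to $v$ in $H$; and for $w'\in N_G(v)\setminus\{v\}$ the equality $N_H(u)\setminus\{w'\}=N_H(w')\setminus\{u\}$ simplifies to $N_G(v)\setminus\{w'\}=N_G(w')\setminus\{v\}$, which is precisely the $G$-neighbor-equivalence condition between $v$ and $w'$. Hence the $H$-partners of $u$ are exactly the members of $\parCla{v}$. The main subtlety I expect is the ``both-or-neither'' phenomenon used in (B): two $G$-equivalent vertices either both lie in $N_G(v)$ or neither does, which is what prevents distinct classes that meet $N_G(v)$ from being accidentally merged through the shared new neighbor $u$. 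Everything else reduces to bookkeeping supported by Observation~\ref{obs:closureEquiv1} and Corollary~\ref{obs:closureEquiv2}.
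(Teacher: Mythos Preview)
Your proposal is correct and follows essentially the same approach as the paper: reduce by induction to a single parallel-extension step, show that the new vertex $u$ is \NeEq{} to $v$ in the extended graph, and verify that \NeEq{} relations among old vertices are unchanged. Your argument is in fact more carefully structured than the paper's---in particular, you make explicit the ``both-or-neither'' phenomenon (two $G$-equivalent vertices $w,w'\neq v$ either both lie in $N_G(v)$ or neither does), which the paper's proof only treats by an informal remark about the special case where the extended vertex is one of the two.
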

\begin{proof}
Since every graph $G'\in\parExtP{G}{*}$ can be generated 
by sequence of steps
where in each step a new node is added (together with some 
edges connecting this
node to the rest of the graph) to the graph, it sufficient 
to prove that such a
step does not change the number equivalence classes. Thus, 
we have to analyze
such a step:

Let $G'=(V',E')$ be an arbitrary graph, $u$ be a new node and 
$V''\subseteq V'$ be a subset of nodes such that for at least 
one vertex $v\in V''$ it holds that $N_G(v)=V''\cup\{v\}$.
Then the new graph is given by
$G'''=(V'\cup\{u\}, E'\cup\{\UEdge{u,v'}|v'\in V''\})$.

From the construction it follows that $u$ and $v$ are equivalent 
in $G'''$; thus, it remains to show that for any pair of nodes 
$x,y$ of $G'$ it holds that
\begin{enumerate}
\item if $x$ and $y$ are not equivalent in $G'$ then they 
are not equivalent in $G'''$, and
\item if $x$ and $y$ are equivalent in $G'$ then they are 
equivalent in $G'''$.
\end{enumerate}
Since our step to extend $G'$ only adds new edges to the 
graph that are incident
with the new node, this step does not extend the 
neighborhood of an old node according to the node in $V'$. 
Thus, if two nodes are not equivalent in $G'$ then they are not
equivalent in $G'''$.

Now assume that $x$ and $y$ are equivalent in $G'$. 
If the new node is a parallel extension of $x$ then this extension 
adds the new node also
to the neighborhood of $y$. 
Thus, $x$ and $y$ remain equivalent in $G'''$.
\end{proof}

From Observation~\ref{obs:equivClCyc} and 
Theorem~\ref{th:GammaAndEquivalenceClasses}
we can directly conclude that:

\begin{corollary}
Let $G_i$ be a cycle of length $2i+1$ and let $k,m\ge 1$ with $k\not=m$ the
$\parExtP{G_k}{*}\cap \parExtP{G_m}{*}=\emptyset$.
\end{corollary}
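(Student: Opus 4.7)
The plan is to combine Observation~\ref{obs:equivClCyc} with Theorem~\ref{th:GammaAndEquivalenceClasses}, using the number of node equivalence classes as an invariant that parallel extension preserves and that distinguishes the starting cycles.

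First, I would compute the number of node equivalence classes of each $G_k$. By Observation~\ref{obs:equivClCyc}, for $k\ge 2$ the cycle $G_k$ has length $2k+1\ge 5$, hence exactly $2k+1$ equivalence classes (one per vertex). The edge case is $k=1$, where $G_1$ is the triangle $K_3$, which is a clique and therefore has a single equivalence class.

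Next, by Theorem~\ref{th:GammaAndEquivalenceClasses}, every graph in $\parExtP{G_k}{*}$ has the same number of equivalence classes as $G_k$. So the sets $\parExtP{G_k}{*}$ and $\parExtP{G_m}{*}$ can overlap only if $G_k$ and $G_m$ have the same number of equivalence classes. I would then go through the cases to check that $k\neq m$ forces different counts: if $k,m\ge 2$, then the counts are $2k+1\neq 2m+1$; if exactly one of them, say $k$, equals $1$, then the counts are $1\neq 2m+1\ge 5$. In every case the numbers differ, so the two transitive closures must be disjoint.

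There is no real obstacle to overcome here — the only subtlety is that $G_1$ is a clique rather than a ``genuine'' cycle, which makes Observation~\ref{obs:equivClCyc} apply via the clique clause instead of the cycle clause. Once that edge case is handled separately, the argument is a direct application of the two quoted results.
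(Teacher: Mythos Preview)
Your proposal is correct and follows exactly the paper's approach: the paper simply states that the corollary follows directly from Observation~\ref{obs:equivClCyc} and Theorem~\ref{th:GammaAndEquivalenceClasses}, which is precisely the invariant argument you spell out. Your explicit treatment of the $k=1$ triangle case is a welcome addition, since the paper leaves that edge case implicit.
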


%-----------------------------------------------------------------------

The analysis of the equivalence classes leads to the observation that the sets
of graphs which can be generated by parallel extensions from two different
cycles of different odd length $i, j \ge 3$ are always distinct. Thus, we have to 
use for a second type of extension that helps us to generate cycles of odd length.
Consider the \Irreducible{} graph of Figure~\ref{fig:5Ext}.b) which cannot be
generated by parallel extensions of a cycle. This graph yields a second type
of extension, the splitting extension (see Definition~\ref{def:split}).

%-----------------------------------------------------------------------

Investigating the vertex covers of cycles of odd length, one can see
that every minimum cover contains two adjacent vertices. Investigating 
bipartite graphs, one can see that this is not a general behaviour of
all graphs. In the following we will investigate whether this
behaviour will be preserved by parallel and splitting extensions:

%-----------------------------------------------------------------------
\begin{definition}
  \label{def:double_cover_condition}
  An edge $e = \UEdge{u,v}$ of a graph $G$ fulfills the double-cover condition
  if there exists a minimum vertex cover $U$ of $G$ with $u,v\in U$. A graph $G$
  fulfills the double-cover condition if any edge of $G$ fulfills it.
\end{definition}
%-----------------------------------------------------------------------

\begin{lemma}
\label{lem:chainExt_with_2endpoints_in_cover_02}
Let $G=(V,E)$ be an \Irreducible{} graph such that 
for every edge $e=\UEdge{u,v}$ there
exists a minimum vertex cover $U$, such 
that both endpoints of $e$ are in the
cover $U$. Then for every graph 
$G'=(V',E')\in \splitExt{G}\cup \parExt{G}$ and
for every edge $e'=\UEdge{u',v'}$ there exists a 
minimum vertex cover $U'$, such
that both endpoints of $e'$ are in the cover $U'$.
\end{lemma}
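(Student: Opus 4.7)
The plan is to treat the two families $\parExt{G}$ and $\splitExt{G}$ separately. For the parallel case, let $x$ be the new vertex added as the parallel copy of some $s\in V$, so that $x$ and $s$ are \NeEq{} in $G'$. Theorem~\ref{th:equivSubGraphCover} then gives that any minimum vertex cover $U$ of $G=G'\setminus x$ lifts to a minimum vertex cover $U\cup\{x\}$ of $G'$. Hence, for every edge $e'\in E'$, it suffices to produce a minimum cover of $G$ containing the $V$-endpoints of $e'$: if $e'\in E$, the hypothesis on $G$ supplies such a $U$ directly, and if $e'=\UEdge{x,a}$ with $a\in N_G(s)\cup\{s\}$, Observation~\ref{obs:elementsOfC} supplies $U$ with $a\in U$. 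In either case, $U\cup\{x\}$ is the desired cover.

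For the split case, let $u$ be the split vertex, $F\subsetneq N_G(u)$ its distinguished (proper, non-empty) set of neighbors, and $v,w$ the two new vertices; write $m$ for the min-cover size of $G$, so that $G'$ has min-cover size $m+1$ by Theorem~\ref{th:split-01}. The plan extracts two auxiliary covers from $G$ via $\alpha$-criticality. Picking any $y\in N_G(u)\setminus F$, the graph $(V,E\setminus\{\UEdge{u,y}\})$ admits a minimum cover $C_y$ of size $m-1$; it must avoid both $u$ and $y$, so $F\subseteq N_G(u)\setminus\{y\}\subseteq C_y$, and $\overline{U}_y:=C_y\cup\{u\}$ is a minimum cover of $G$ containing $\{u\}\cup F$. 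Symmetrically, picking $f\in F$ yields a minimum cover $C_f$ of $(V,E\setminus\{\UEdge{u,f}\})$ of size $m-1$ with $u\notin C_f$ and $N_G(u)\setminus F\subseteq C_f$.

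With these, the double-cover condition on $G'$ is verified edge by edge. For an edge $\UEdge{a,b}\in E\cap E'$, the hypothesis on $G$ yields a minimum cover $\overline{U}$ with $a,b\in\overline{U}$, and by the observation following Theorem~\ref{th:split-01}, $\overline{U}\cup\{w\}$ (if $u\in\overline{U}$) or $\overline{U}\cup\{v\}$ (if $u\notin\overline{U}$, which forces $F\subseteq\overline{U}$) is a minimum cover of $G'$ containing $a,b$. For $\UEdge{u,v}$, the cover $\overline{U}_y\cup\{v\}$ works, using $F\subseteq\overline{U}_y$ to cover the $\UEdge{w,f}$-edges. For $\UEdge{v,w}$, the cover $C_f\cup\{v,w\}$ works, since $N_G(u)\setminus F\subseteq C_f$ covers the surviving $u$-edges. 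For $\UEdge{w,f}$ with $f\in F$, applying the double-cover hypothesis on $G$ to $\UEdge{u,f}$ yields $\overline{U}$ with $u,f\in\overline{U}$, and then $\overline{U}\cup\{w\}$ contains $w$ and $f$. The hard part will be the split case, particularly the edges $\UEdge{u,v}$ and $\UEdge{v,w}$: neither family of min covers from the observation following Theorem~\ref{th:split-01} contains both endpoints of these edges, so the auxiliary covers $\overline{U}_y\cup\{v\}$ and $C_f\cup\{v,w\}$ are essential; their construction is where the proper-subset assumption $F\subsetneq N_G(u)$ and the $\alpha$-criticality of $G$ become crucial.
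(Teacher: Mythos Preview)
Your argument is correct and shares the paper's overall structure: handle $\parExt{G}$ and $\splitExt{G}$ separately; in the split case lift covers of old edges via the observation following Theorem~\ref{th:split-01}, handle each $\UEdge{w,f}$ via the double-cover hypothesis on the deleted edge $\UEdge{u,f}$, and handle $\UEdge{v,w}$ by deleting some $\UEdge{u,f}$ with $f\in F$ and using $\alpha$-criticality to get a size-$(m-1)$ cover avoiding $u$. The one substantive difference is your treatment of the edge $\UEdge{u,v}$. The paper, taking $U_e$ from the hypothesis for $e=\UEdge{u,f}$ with $f\in F$, adds the new vertex $x$ (your $v$) to $U_e$ and asserts the result covers $G'$; but when $|F|\ge 2$ this overlooks that edges $\UEdge{w,f'}$ for $f'\in F\setminus\{f\}$ need not be covered, since nothing forces $f'\in U_e$. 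Your auxiliary cover $\overline{U}_y$, built by deleting $\UEdge{u,y}$ for some $y\in N_G(u)\setminus F$, forces all of $F$ into $C_y$ and hence into $\overline{U}_y$, so $\overline{U}_y\cup\{v\}$ genuinely covers $G'$; this is precisely where the strict-subset requirement $F\subsetneq N_G(u)$ is used. In the parallel case you invoke Observation~\ref{obs:elementsOfC} to place a given vertex in a minimum cover, whereas the paper appeals to the double-cover hypothesis for the same purpose; either works.
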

\begin{proof}
Assume that $G'\in \splitExt{G}$ and assume that 
$G'=(V',E')$ can be generated by splitting $u''\in V$, i.e.
there exists a subset $F\subset N_{G}(u'')$ and $x,y\in
V'\setminus V$ such that
\begin{eqnarray*}
V' & = & V\cup\{x,y\}\\
E' & = & \{\UEdge{u'',x},\UEdge{x,y}\} \cup \{\UEdge{y,v''}|v''\in F\}
\cup E\setminus\{\UEdge{u'',v''}|v''\in F\}\ .
\end{eqnarray*}
For every edge $e$ let $U_e$ denote a minimum vertex cover 
of $G$ that contains both endpoints of $e$.

Now assume that $e\not\in \{\UEdge{u'',v''}|v''\in F\}$. Since either $u''$ or all
vertices in $F$ have to be in $U_e$, we
can extend the cover $U_e$ to a vertex cover 
for $G'$ by adding either $x$ or
$y$ to $U_e$. The resulting set is a minimum 
vertex cover for $G'$ that covers
both endpoints of $e$.

Next assume that $e\in \{\UEdge{u'',v''}|v''\in F\}$. 
Since both nodes $u''$ and $v''$ are in 
$U_e$, we can
extend the cover $U_e$ to a vertex cover for $G'$ by adding either 
$x$ or $y$ to
$U_e$. Depending on the added node we get a minimum vertex 
cover for $G'$ that
covers either both endpoints of $\UEdge{u'',x}$ or of $\UEdge{y,v''}$
for all $v''\in F$.

To get a minimum vertex cover that contains $x$ and $y$, 
let us focus on a
minimum vertex cover $U$ for $G$ after deleting an edge 
$\UEdge{u'',v''}$ for $v''\in F$. Since $G$ is $\alpha$-critical,
the maximal independent set of the resulting graph will be increased 
and includes both vertices $u''$ and $v''$. This implies that
the minimum vertex cover $U$ will be reduced by 1 (compared to the
minimum vertex cover of $G$) and neither contains $u''$ nor $v''$. 
Thus, adding $x$ and $y$
to $U$ gives us a minimum vertex cover for $G'$ that covers 
both
endpoints of $\UEdge{x,y}$.

Let us now focus on graphs $G'\in \parExt{G}$. 
Assume that $G'=(V',E')$ and let
$v\in V$ and $u\in V'$ such that $u$ is the parallel
extension of $v$, i.e.
\begin{itemize}
  \item $v$ and $u$ are \NeEq{} for $G'$,
  \item $u \not\in V$ and $G = H\setminus u$.
\end{itemize}
Since every minimum vertex cover $U$ for $G$ can be transformed 
into a minimum vertex cover for $G'$ by adding $u$ to $U$ (see
Theorem~\ref{th:equivSubGraphCover}), 
the claim holds for all edges of $G'$ that
are already in $G$. Since $G$ fulfills the double cover
condition, there exists a minimum vertex cover $U$ for
each vertex $v'$ that contains $v'$. Since $U\cup\{u\}$ 
is a minimum vertex cover for $G'$, the new edge $\UEdge{u,v'}$
fulfills the double cover property.
\end{proof}

This lemma implies:

%-----------------------------------------------------------------------
\begin{theorem}
  \label{th:chainIrred01}
  Let $\cG$ be a family of \Irreducible{} undirected graphs. If $\cG$ neither
  contains the graph with a single node nor the graph with a single edge and if
  each graph of $\cG$ fulfils the double-cover condition, then all graphs of
  $\splitExt{\cG}$ and $\parExt{\cG}$ are \Irreducible and fulfil the 
  double-cover condition.
\end{theorem}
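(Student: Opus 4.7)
The plan is to split the theorem into its two assertions: (a) the resulting graphs are \Irreducible, and (b) they still satisfy the double-cover condition. The second assertion is immediate from Lemma~\ref{lem:chainExt_with_2endpoints_in_cover_02}, whose hypotheses match ours exactly (each graph of $\cG$ is \Irreducible{} and satisfies the double-cover condition, so every graph in $\splitExt{\cG}\cup\parExt{\cG}$ inherits the property). So the real work is assertion~(a).

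For graphs of $\parExt{\cG}$, I would simply invoke Theorem~\ref{the:irred-par-irred}: given any $G\in\cG$, which is \Irreducible{} by hypothesis, every graph in $\parExt{G}$ is \Irreducible, hence the union over $G\in\cG$ gives \Irreducibility{} for all of $\parExt{\cG}$.

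For graphs of $\splitExt{\cG}$, I would appeal to Theorem~\ref{th:split-01} (Lov{\'a}sz). That theorem requires the input graph to be connected, \Irreducible, and of size at least three, and requires the vertex being split to have degree at least two. The first condition is given. The size condition is exactly why the hypothesis excludes the single-vertex and single-edge graphs: by that same Theorem~\ref{th:split-01}, any connected \Irreducible{} graph of size at least three automatically has minimum degree at least two, so every admissible vertex can be split. Applying Theorem~\ref{th:split-01} to an arbitrary $G\in\cG$ and an arbitrary splittable vertex of $G$ yields a new \Irreducible{} graph, which establishes irreducibility for every graph in $\splitExt{\cG}$.

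The main obstacle I expect is merely the bookkeeping of making sure both of the size restrictions and degree restrictions of the cited theorems are satisfied by the hypothesis; no new combinatorial argument is needed beyond the previously proven results. Once those hypotheses are verified, the theorem follows by assembling Theorem~\ref{the:irred-par-irred}, Theorem~\ref{th:split-01}, and Lemma~\ref{lem:chainExt_with_2endpoints_in_cover_02}.
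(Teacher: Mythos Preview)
Your proposal is correct and matches the paper's approach: the paper simply writes ``This lemma implies'' before stating the theorem, leaving implicit exactly the assembly you spell out---Lemma~\ref{lem:chainExt_with_2endpoints_in_cover_02} for the double-cover condition, and Theorems~\ref{the:irred-par-irred} and~\ref{th:split-01} for \Irreducible{}ity under parallel extension and splitting, respectively. Your bookkeeping on the size and degree hypotheses is also in line with how the paper uses those results.
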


\begin{corollary}
  \label{coro:chainIrred01}
  Let $\cG$ be a family of \Irreducible{} undirected graphs without the
  single-node graph and without the single-edge graph, then any graph of
  $\splitExtP{\cG}{*}$ is \Irreducible. If in addition all graphs of $\cG$
	fulfil the double-cover condition then all graphs of $\splitExtP{\cG}{*}$
	fulfil the double-cover condition.
\end{corollary}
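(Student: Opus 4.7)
The plan is to prove both claims by induction on the iteration count $k$ in the definition $\splitExtP{\cG}{*} = \bigcup_{k \ge 0}\splitExtP{\cG}{k}$. Since every graph in $\splitExtP{\cG}{*}$ belongs to some $\splitExtP{\cG}{k}$, it suffices to show that both properties are preserved when passing from $\splitExtP{\cG}{k}$ to $\splitExt{\splitExtP{\cG}{k}}=\splitExtP{\cG}{k+1}$.

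For the first claim, the base case $k=0$ is the hypothesis that every graph of $\cG$ is \Irreducible. For the inductive step, I first note that $\cG$ contains no single-node and no single-edge graph, so every graph in $\cG$ has at least three vertices; since a split adds two new vertices $v,w$ (Definition~\ref{def:split}), every graph in $\splitExtP{\cG}{k}$ also has at least three vertices. By the inductive hypothesis every such graph is connected and \Irreducible, so Theorem~\ref{th:split-01} applies and guarantees that splitting any vertex yields a new \Irreducible graph. Hence every graph in $\splitExtP{\cG}{k+1}$ is \Irreducible, closing the induction.

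For the second claim, assume additionally that every graph of $\cG$ fulfils the double-cover condition. The base case is again immediate. For the inductive step, by the first claim together with the inductive hypothesis, every graph in $\splitExtP{\cG}{k}$ is \Irreducible and fulfils the double-cover condition; moreover, as observed above, no such graph is the single-node or single-edge graph. Therefore $\splitExtP{\cG}{k}$ satisfies all hypotheses of Theorem~\ref{th:chainIrred01}, whose conclusion delivers exactly what is needed: every graph in $\splitExt{\splitExtP{\cG}{k}}=\splitExtP{\cG}{k+1}$ is \Irreducible and fulfils the double-cover condition. This completes the induction and hence the proof.

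There is no real obstacle; the proof is a straightforward induction that lifts Theorem~\ref{th:split-01} (for the first claim) and Theorem~\ref{th:chainIrred01} (for the second claim) from a single application of $\splitExtB$ to its transitive closure. The only point worth explicit mention is that the side condition ``no single-node and no single-edge graph'' is automatically maintained along the induction because splitting strictly increases the vertex count, so the preconditions of Theorem~\ref{th:chainIrred01} never break down.
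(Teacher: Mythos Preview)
Your argument is correct and coincides with the paper's intended (but unwritten) proof: the corollary is placed immediately after Theorem~\ref{th:chainIrred01} and Theorem~\ref{th:split-01} precisely because it follows by the straightforward induction on the number of splitting steps that you spell out. Your separation of the two claims---using Theorem~\ref{th:split-01} for the first (which does not assume the double-cover condition) and Theorem~\ref{th:chainIrred01} for the second---is exactly right, as is your remark that the ``no single-node, no single-edge'' side condition persists along the induction.
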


To combine splitting and parallel extensions we define: 

%-----------------------------------------------------------------------
\begin{definition}
  Let $\cG$ be a family of undirected graphs, then define 
  $$
    \splitParExt{\cG} \ \ = \ \ \splitExt{\cG}\cup\parExt{\cG}\ .
  $$ 
  Let $\splitParExtP{\cG}{*}$ denote the
  transitive closure of $\cG$ according to $\splitParExtB$. If we apply the
  extension $k$ times, then the set of resulting graphs is denoted by
  $\splitParExtP{\cG}{k}$. If $\cG$ consists of a single graph $G$, we use the
  notions $\splitParExt{G}$ and $\splitParExtP{G}{*}$.
\end{definition}
%-----------------------------------------------------------------------

There are connected \Irreducible{} graphs that cannot be generated by splitting and
parallel extensions from a cycle of length three. An example is illustrated in
Figure~\ref{fig:IrreducibleNotPC2}.

%-----------------------------------------------------------------------
\begin{figure}[h]
  \begin{center}
    \scalebox{.3}{
      \input{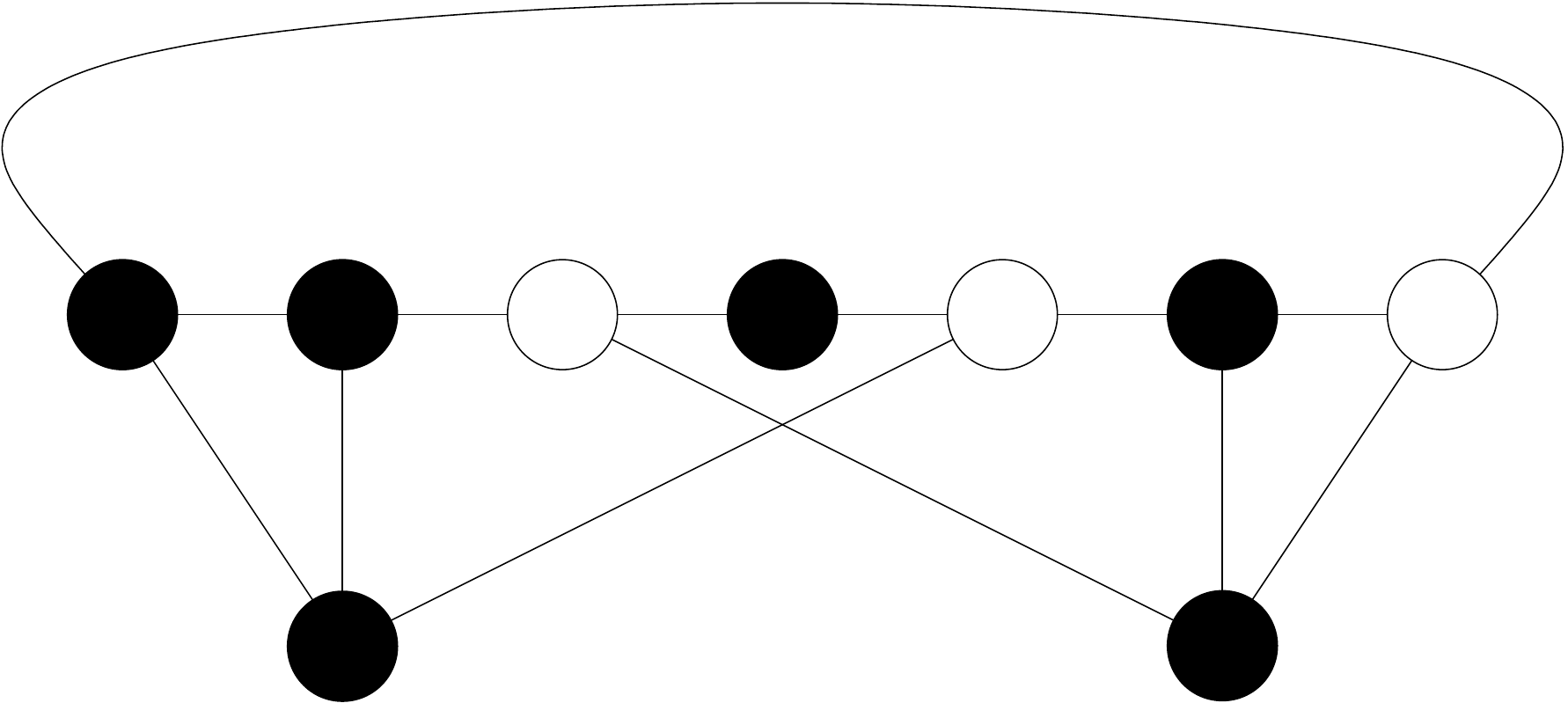_t}
    }
  \end{center}
  \vspace{-1em}
  \caption{A \Irreducible{} graph that cannot be generated by splitting and
  parallel extensions from a cycle of length three.}
  \label{fig:IrreducibleNotPC2}
\end{figure}
%-----------------------------------------------------------------------

For our experimental benchmark test we have restricted ourself a restricted 
version of our the splitting operations. More general experimental benchmark test
will be appear within a future work.

\begin{definition}[\bfseries Chain Extension]
  \label{def:chainExt01}
  Let $G = (V,E)$ be an undirected connected graph and let $e=\UEdge{u,v}\in E$
  be an edge of $G$. Let $x,y\not\in V$ denote two new nodes. Then
  the graph $G'=(V\cup\{x,y\},
  \{\UEdge{u,x},\UEdge{x,y},\UEdge{y,v}\}\cup E\setminus\{e\})$ 
  is called a \emph{chain extension} of $G$.

  Let $\cG$ be a family of undirected graphs, then define $\chaExt{\cG}$ to be
  the set of graphs that can be generated from any graph $G=(V,E)\in\cG$ by one
  chain extension. Let $\chaExtP{\cG}{*}$ denote the transitive closure of $\cG$
  according to $\chaExtB$. If we apply the extension $k$ times, then the set of
  resulting graphs is denoted by $\chaExtP{\cG}{k}$. If $\cG$ consists of a
  single graph $G$, then we use the notions $\chaExt{G}$ and $\chaExtP{G}{*}$.
	Analogously we define $\chaParExt{\cG}$, $\chaParExtP{\cG}{*}$, $\chaParExtP{\cG}{k}$,
	$\chaParExt{G}$, and $\chaParExtP{G}{*}$.
\end{definition}

Since the splitting operation is a generalization of 
the chain extension, can conclude the the chain extension
also preserves the $\alpha$-criticality of a graph and 
the knowledge of known minimum vertex covers and maximum 
independent sets. Furthermore, we can conclude that these 
properties are also preserved for the inverse 
chain extension, i.e. for the corresponding shrinkage 
operation. For the shrinkage of \Reducible{} we can 
show a similar result. This result follows directly from 
the following proof the properties of shrinkage 
of $\alpha$-critical graphs:

\begin{theorem}
\label{th:irredSubGraphCover2}
Let $G=(V,E)$ be an \Irreducible{} connected graph 
and let $x,y$ be two adjacent
nodes of $G$ of degree 2. Furthermore, let $u,v$ 
denote the two remaining
neighbors of $x$ and $y$ then the graph 
$H=(V\setminus\{x,y\}, \{\UEdge{u,v}\}\cup
E\setminus (\{x,y\}\times V))$ is \Irreducible{}. 
The size of a minimum vertex
cover of $H$ is reduced by 1 according to the 
minimum vertex cover of $G$.
\end{theorem}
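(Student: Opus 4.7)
The plan is to prove the theorem in two steps. First, I would show that the minimum vertex cover size of $H$ equals $m-1$, where $m$ denotes the minimum vertex cover size of $G$; this is a general statement about suppressing two adjacent degree-two vertices and does not yet rely on $G$ being \VCI. Second, I would deduce that $H$ is \VCI{} by reducing each single-edge-deletion test on $H$ to a deletion in $G$ and invoking the \VCI{}ness of $G$.

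For the size claim, both inequalities are elementary. For the upper bound, I would take any minimum vertex cover $C$ of $G$ and perform a case analysis on $C\cap\{u,v\}$: since the chain edges $\UEdge{u,x}$, $\UEdge{x,y}$, $\UEdge{y,v}$ must all be covered, in every case one can remove $x,y$ from $C$ (and, when neither $u$ nor $v$ lies in $C$, add $u$ back in) to obtain a vertex cover of $H$ of size at most $m-1$; the new edge $\UEdge{u,v}$ is then automatically covered because at least one of $u,v$ is present. For the lower bound, any vertex cover $C''$ of $H$ must contain $u$ or $v$ to cover $\UEdge{u,v}$; WLOG $u\in C''$, and then $C''\cup\{y\}$ covers all three chain edges of $G$, giving a vertex cover of $G$ of size $|C''|+1$, whence $|C''|\ge m-1$.

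For the irreducibility of $H$, I would fix an arbitrary edge $e'\in E(H)$ and show that $H\setminus\{e'\}$ admits a vertex cover of size at most $m-2$. If $e'\neq\UEdge{u,v}$, then $e'$ is an edge of $G$ not incident to $x$ or $y$, and a direct bookkeeping check shows that $H\setminus\{e'\}$ is precisely the suppression of the chain through $x,y$ inside $G\setminus\{e'\}$. By \VCI{}ness of $G$, $G\setminus\{e'\}$ has minimum cover size $m-1$, and applying the first step to $G\setminus\{e'\}$ yields minimum cover size $(m-1)-1=m-2$ for $H\setminus\{e'\}$, as required.

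The main obstacle is the remaining case $e'=\UEdge{u,v}$, for which $H\setminus\UEdge{u,v}$ is $G$ with both vertices $x,y$ (and all their incident edges) removed, leaving no chain-inverse structure to invoke. Here I would instead use \VCI{}ness of $G$ to delete the edge $\UEdge{u,x}$, obtaining a graph with a minimum cover $D$ of size $m-1$. Since $x$ now has degree one (its only remaining edge being $\UEdge{x,y}$), one may choose $D$ with $x\notin D$, which forces $y\in D$. The key observation is that $y$'s only edges in $G$ are $\UEdge{x,y}$ and $\UEdge{y,v}$, both of which are absent from $H\setminus\UEdge{u,v}$; therefore $D\setminus\{y\}$ has size $m-2$ and still covers every edge of $H\setminus\UEdge{u,v}$, completing the proof.
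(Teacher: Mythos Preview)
Your proof is correct and follows essentially the same approach as the paper's: both establish the size claim via the same cover transformations (dropping $x,y$ and possibly reinserting $u$ in one direction, adding $y$ in the other), and both handle irreducibility by the same two-case split, with the $e'=\UEdge{u,v}$ case treated in both by deleting $\UEdge{u,x}$ from $G$ and stripping $x,y$ from the resulting size-$(m-1)$ cover. Your write-up is slightly more modular---you package the size claim as a standalone lemma and reuse it verbatim for the case $e'\neq\UEdge{u,v}$, whereas the paper redoes the cover transformation inline---but the underlying argument is the same.
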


\begin{proof}
Let us start to show the second claim. Let $m$ denote the size 
of a minimum vertex cover of $G$. By contradiction assume that 
there exists minimum vertex cover $U$ for $H=(V_H,E_H)$ of size 
$m-2$. Since $u$ and $v$ are adjacent, we can transform this vertex 
cover of $H$ into a vertex cover for $G$ by either
adding $x$ or $y$ to $U$. The resulting vertex cover has size $m-1$ --
contradicting our assumption the exists minimum vertex cover 
size of $G$ is $m$. Thus, the minimum vertex cover size is at least
$m-1$. On the other hand every vertex cover of $G$ can easily be 
transformed into a vertex cover for $H$, where the size of the
cover is reduced by one.

To show that $H$ is \Irreducible{} if $G$ is \Irreducible{}, 
we have to consider two cases:
\begin{enumerate}
  \item the deleted edge occurs in both graphs, 
          i.e. $e'\in E\cap E_H$,
  \item the deleted edge $e'\in E$ only occurs in $E$. 
\end{enumerate}
In both cases we transform the minimum vertex cover of $G$ 
after deleting $e'\in E$ into a minimum vertex cover of $H$ 
after deleting either $e'$ or, if
$e'\not\in E_H$, after deleting $\UEdge{u,v}$.

W.l.o.g.,\ assume that $u$ and $x$ are adjacent.

We start with the first case: 
Since $G$ is \Irreducible{}, the size of the minimum
vertex cover of $G'=(V,E\setminus\{e'\})$ is $m-1$. 
Since $e'\in E\cap E_H$ the
edge $e'$ is not incident with $x$ or $y$. Thus, each 
vertex cover of $G'$ must
contain at least two of the vertices $u,v,x,y$, and at least one 
of the vertices $x,y$. Thus, every the minimum vertex cover 
$U$ of $G'$ can be transformed into a
minimum vertex cover of $H'=(V_H,E_H\setminus\{e'\})$ such that
\begin{itemize}
  \item if the cover of $G'$ contains either $u$ or $v$, 
  then a cover of $H'$ is
  given by $U\setminus\{x,y\}$,
  \item if the cover of $G'$ contains neither $u$ nor 
  $v$, then it has to
  contain $x$ and $y$ and a cover of $H'$ is given by $\{u\}\cup
  U\setminus\{x,y\}$.
\end{itemize}
In both cases the size of the constructed cover for $H'$ is $m-2$.

Now, let us consider the case, that the deleted edge $e'$ only 
occurs in $E$, but not in $E_H$. For this case it suffice, if we 
restrict our selves to the case where $e'=\{u,x\}$. As in the 
previous case the size of the minimum vertex cover of 
$G'=(V,E\setminus\{e'\})$ is $m-1$. Furthermore, each minimum vertex
cover of $G'$ either contains $x$ or $y$. Thus, this vertex cover 
gives us directly a vertex cover for 
$H'=(V_H,E_H\setminus\{\UEdge{u,v}\})$ of size $m-2$.

Summarizing, our two cases implies that the vertex covers of $G'$ 
of size $m-1$ can be transformed into a vertex cover of $H'$ of size 
$m-2$. Since our analysis takes every edge of $H$ into account, $H$ 
is \Irreducible{}.
\end{proof}

Note that the observations on the size of a minimum 
vertex cover of the shrunk
graphs in
Theorem~\ref{th:irredSubGraphCover2} also work
for \Reducible{} graphs. Thus, we get:

\begin{observation}
\label{ob:shrinkSubGraphCover2}
Let $G=(V,E)$ be an undirected connected 
graph with minimum vertex cover size
$m$.
Let $x,y$ be two adjacent nodes of $G$ of degree 2 and let 
$u,v$ denote the two remaining neighbors of $x$ and $y$. Then the 
size of a minimum vertex cover of the
graph $H'=(V\setminus\{x,y\}, 
\{\UEdge{u,v}\}\cup E\setminus (\{x,y\}\times V))$ is
$m-1$.
\end{observation}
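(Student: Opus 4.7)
The plan is to observe that the cardinality statement is exactly the portion of the proof of Theorem~\ref{th:irredSubGraphCover2} that never actually invokes $\alpha$-criticality. So I would simply isolate that part and present it independently, establishing the two inequalities $|\MVC(H')| \le m-1$ and $|\MVC(H')| \ge m-1$.

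For the upper bound, I start from an arbitrary minimum vertex cover $C$ of $G$ with $|C|=m$. The three edges $\UEdge{u,x}, \UEdge{x,y}, \UEdge{y,v}$ of the induced path $u\text{-}x\text{-}y\text{-}v$ must be covered by $C$, so in particular at least one of $x,y$ lies in $C$. I split into two cases. If neither $u$ nor $v$ belongs to $C$, then both $x$ and $y$ must lie in $C$, and I set $C' = (C\setminus\{x,y\})\cup\{u\}$, yielding $|C'|\le m-1$. Otherwise, at least one of $u,v$ already lies in $C$, and I simply set $C'=C\setminus\{x,y\}$, whose size is $m-|C\cap\{x,y\}|\le m-1$. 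In either case $C'$ covers every edge in $E\setminus(\{x,y\}\times V)$, and it also covers the newly added edge $\UEdge{u,v}$ because at least one of $u,v$ lies in $C'$.

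For the lower bound, suppose for contradiction that $H'$ admits a vertex cover $C'$ with $|C'|\le m-2$. Since $C'$ must cover the edge $\UEdge{u,v}\in E(H')$, at least one of $u,v$ lies in $C'$; by symmetry assume $u\in C'$. Then I claim that $C'\cup\{y\}$ is a vertex cover of $G$: every edge of $G$ not incident with $x$ or $y$ lies in $H'$ and is therefore covered by $C'$, while $\UEdge{u,x}$ is covered by $u$ and $\UEdge{x,y}, \UEdge{y,v}$ are both covered by $y$. This produces a vertex cover of $G$ of size at most $m-1$, contradicting the hypothesis that $m$ is the minimum vertex cover size of $G$.

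The only subtlety — and the thing to keep straight throughout — is the case analysis over which of the four special vertices $u,v,x,y$ lie in which cover; in particular, the verification in the upper-bound step that swapping $\{x,y\}$ out for $\{u\}$ (or nothing) never leaves an edge of $H'$ uncovered. Since the $\alpha$-criticality of $G$ is never used anywhere in either direction of the argument, no further ideas beyond those already appearing in the proof of Theorem~\ref{th:irredSubGraphCover2} are required.
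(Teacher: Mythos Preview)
Your proposal is correct and follows exactly the approach the paper intends: you isolate the cardinality argument from the proof of Theorem~\ref{th:irredSubGraphCover2}, noting that neither direction uses $\alpha$-criticality, and your upper-bound case split merely spells out what the paper dismisses with ``can easily be transformed.'' The lower-bound argument (assume a cover of size $m-2$, use adjacency of $u,v$ in $H'$, and add the appropriate one of $x,y$) is verbatim the paper's.
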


For generating our benchmark graphs, we focus on the set of base graphs $\cB$
that can be generated by parallel and chain extensions from a cycle of length
three. To efficiently generate such graphs, we used several bounds on the number
of edges according to the number of vertices and the size of a minimum vertex
cover. For a given graph $G = (V, E)$, we define $n(G) = |V|$ for the number of
vertices of $G$, $m(G) = |E|$ for the number of edges of $G$, $c(G)$ for the
size of the minimum vertex cover of $G$, and $\overline{c}(G) = n(G)-c(G)$ for
the number of vertices not in the minimum cover. 

In the following we will focus on some bounds for the number 
of edges $m(G)$ of a graph $G$, if the number of vertices 
$n(G)$ and the minimum 
cover size $c(G)$ are given.
We will first establish some general bounds for $m(G)$ according 
to $n(G)$ and
$c(G)$.
Since there are no edges
between two vertices that are not in the cover, we can conclude:

%-----------------------------------------------------------------------
\begin{observation}
  \label{obs:bound01} For every graph $G$, it holds that 
\begin{eqnarray*}
m(G) & \le & \frac{c(G)\cdot (c(G)-1)}{2} + c(G)\cdot (n(G)-c(G)) \\ 
     & = & c(G)\cdot 
\left(n(G)-\frac{c(G)+1}{2}\right)\ .
\end{eqnarray*}
\end{observation}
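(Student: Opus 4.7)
The plan is to exploit the defining property of a vertex cover: every edge must have at least one endpoint in the cover. First I would fix any minimum vertex cover $C$ of $G$, so that $|C| = c(G)$ and $|V \setminus C| = n(G)-c(G) = \overline{c}(G)$. Then I would classify the edges of $G$ according to how their endpoints split between $C$ and $V\setminus C$.

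Next I would observe that no edge can have both endpoints in $V\setminus C$, since otherwise $C$ would fail to cover that edge, contradicting that $C$ is a vertex cover. This leaves only two possible types of edges: those with both endpoints in $C$, and those with exactly one endpoint in $C$ and one in $V\setminus C$. Since $G$ is simple, the first type contributes at most $\binom{c(G)}{2} = c(G)(c(G)-1)/2$ edges (the number of unordered pairs of vertices inside $C$), and the second type contributes at most $c(G)\cdot (n(G)-c(G))$ edges (each vertex of $C$ can be joined to each of the $n(G)-c(G)$ vertices outside $C$ by at most one edge).

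Finally, I would sum these two bounds to obtain the first expression in the statement, and then perform a routine algebraic simplification
$$
\frac{c(G)(c(G)-1)}{2} + c(G)(n(G)-c(G)) = c(G)\left(\frac{c(G)-1}{2} + (n(G)-c(G))\right) = c(G)\left(n(G) - \frac{c(G)+1}{2}\right)
$$
to arrive at the second form. The argument is essentially pure counting and I do not foresee any real obstacle; the only subtlety worth noting is that equality would require $G[C]$ to be a clique and $(C, V\setminus C)$ to induce a complete bipartite graph, so this bound is typically far from tight for \VCI{} instances and should be viewed only as a sanity ceiling on $m(G)$.
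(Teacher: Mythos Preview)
Your argument is correct and is exactly the approach the paper takes: the paper's entire justification is the single remark that there are no edges between two vertices outside the cover, from which the bound follows by the same edge-count you carry out in detail. Your write-up simply makes explicit the two edge types and the algebraic simplification that the paper leaves implicit.
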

%-----------------------------------------------------------------------

Now we would like to establish a lower bound for the number of 
edges $m(G)$ for a graph if the number $n(G)$ of vertices of $G$ 
and the size of the minimum vertices cover $c(G)$ is given.

For a clique $K_i$ of $i$ vertices let 
$k_i=\frac{i\cdot (i-1)}{2}$ denote the
number of edges of $K_i$.

Given two vectors $\vec{u} = (u_1, \ldots, u_t)$ and $\vec{v} = (v_1, \ldots,
v_t)$, then we call $\vec{u}$ lexicographically smaller than $\vec{v}$, denoted
by $\vec{u} \lelex \vec{v}$, if either both vectors are equal or if there exists
an index $i \in \{1, \ldots, t\}$ such that $u_i < v_i$ and $u_j = v_j$ for all
$j \in \{1, \ldots, i - 1\}$.

\begin{definition}
Given two values $n$ and $c$ with $n>c$ define $\CoV(n,c)$ denote 
the set of vector 
$\vec{\alpha}=(\alpha_{n},\alpha_{n-1},\ldots,\alpha_{1})$ such that
$$
n = \sum_{i=1}^{n} i\cdot \alpha_i
\qquad\text{and}\qquad
c = \sum_{i=1}^{n} (i-1)\cdot \alpha_i\ .
$$
\end{definition}

Let us analyze the vectors
$\vec{\alpha}\in \CoV(n,c)$. 
We can easily see that:
\begin{itemize}
\item $n-c=\sum_{i=1}^{n}\alpha_i$; thus, the sum of all entries of 
  the vectors in $\CoV(n,c)$ is determined by the parameters $n$ and 
  $c$.
\item We will use the following interpretation of the vectors 
  $\vec{\alpha}=(\alpha_{n(G)},\alpha_{n(G)-1},\ldots,\alpha_{1})\in \CoV(n,c)$:
  To generate a graph of $n=n(G)$ vertices and of 
  minimum vertex cover size $c=c(G)$ we can use a collection of 
  cliques, where for
  $\alpha_i$
  determines the number cliques of size $i$.
\item If $\vec{\alpha}=(\alpha_{n},\alpha_{n-1},\ldots,\alpha_{1})$ 
  is a solution for
  $$
  n = \sum_{i=1}^{n} i\cdot \alpha_i
  \qquad\text{and}\qquad
  c = \sum_{i=1}^{n} (i-1)\cdot \alpha_i\ .
  $$
  and if $\alpha_i>0$ and $\alpha_j>0$ ($i\not=j+1$) then also the 
  vector
  $\vec{\alpha}_{i,j}=(\alpha_{n}',\alpha_{n-1}',\ldots,\alpha_{1}')$ 
  with
  $$\alpha_{k}'\ \ = \ \ \left\{\begin{array}[c]{ll}
  \alpha_{k} & \text{for }k\not\in\{i,i-1,j+1,j\}\\
  \alpha_{k}-1 & \text{for }k\in\{i,j\}\\
  \alpha_{k}+1 & \text{for }k\in\{i-1,j+1\}
  \end{array}\right.$$
  is a solution for the two equations.
\item If the vector $\vec{\alpha}_{i,j}$ is generated form 
  $\vec{\alpha}$ as described above and if $i>j+1$, then
  \begin{eqnarray*}
   & & \sum_{k=1}^{n} \frac{k\cdot (k-1)}{2}\cdot \alpha_k'\\
   & = &
  -\frac{i\cdot (i-1)}{2}+\frac{(i-1)\cdot (i-2)}{2}+\frac{(j+1)\cdot j}{2}\\
	& & \qquad -\frac{j\cdot (j-1)}{2} + \sum_{k=1}^{n} \frac{k\cdot (k-1)}{2}\cdot \alpha_k\\
  & = & -\frac{(i-i+2)\cdot (i-1)}{2}+\frac{(j+1-j+1)\cdot j}{2}\\
	& & \qquad + \sum_{k=1}^{n} \frac{k\cdot (k-1)}{2}\cdot \alpha_k\\
  & = & -i+1+j+ \sum_{k=1}^{n} \frac{k\cdot (k-1)}{2}\cdot \alpha_k\\
  & < & \sum_{k=1}^{n} \frac{k\cdot (k-1)}{2}\cdot \alpha_k
  \end{eqnarray*}
  Thus, the corresponding value of the sum of edges in the collection 
  of cliques is smaller for $\vec{\alpha}_{i,j}$ than for 
  $\vec{\alpha}$ if $i>j+1$.
\item Hence, the lexicographical minimum vector results in a 
  collection of cliques with the minimum number of edges.
\item Moreover the lexicographical minimum vector includes at most 
  two positions $i,i+1$ where the entries are greater than $0$ (the 
  values on all other positions are 0).
\end{itemize}

%-----------------------------------------------------------------------
\begin{lemma}{\bf [Lower-Bound Lemma]}
  \label{lem:bound01}
  Let $G = (V, E)$ be a graph generated according to
  Definition~\ref{def:random_processes}, where $\cB$ is given by the graphs in
  $\chaParExtP{K_3}{*} \cup \{K_1, K_2\}$. Let
  $(\alpha_{n(G)}, \alpha_{n(G)-1}, \ldots, \alpha_{1})\ \ \in\ \ \Nat^{n(G)}$
  denote the lexicographically minimal vector that fulfills the following two
  equations
  $$
    n(G) = \sum_{i=1}^{n(G)} i\cdot \alpha_i
    \qquad\text{and}\qquad
    c(G) = \sum_{i=1}^{n(G)} (i-1)\cdot \alpha_i\ .
  $$
  Then, we have
  $$
    m(G) \ \ \ge \ \ \sum_{i=1}^{n(G)} 
    \frac{i\cdot (i-1)}{2}\cdot \alpha_i\ .
  $$
\end{lemma}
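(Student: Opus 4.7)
The plan is to exhibit some clique-decomposition vector $\vec\gamma \in \CoV(n(G), c(G))$ whose total edge count is at most $m(G)$, and then invoke the minimality of the lex-minimum $\vec\alpha$ over $\CoV(n,c)$, which was established in the bulleted observations preceding the lemma.

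First, I unpack the generation of $G$. By Definition~\ref{def:random_processes}, $G$ is assembled from base graphs $S_1,\dots,S_k \in \cB = \chaParExtP{K_3}{*} \cup \{K_1, K_2\}$ with $S_j = (U_j, V_j, E_j)$; then $\cG_{n}^2$ adds $n - \sum_j(|U_j|+|V_j|)$ isolated vertices, and $\cG_{m}^3$ adds extra edges. Since $\cG_{m}^3$ only adds edges, $m(G) \ge \sum_j |E_j|$, and $\sum_j |U_j| = c(G) = c$ by Theorem~\ref{the:randProc-1}.

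Second, for each base graph $S_j$ I produce a decomposition $\vec\gamma^{(j)} \in \CoV(|U_j|+|V_j|,\,|U_j|)$ with $\sum_i\binom{i}{2}\gamma^{(j)}_i \le |E_j|$. For $K_1$ and $K_2$ the trivial single-clique decomposition is exact. For $S_j \in \chaParExtP{K_3}{*}$ I argue by induction on the number of chain and parallel extensions from $K_3$. The base case $K_3$ uses the single $K_3$ clique with $3$ edges. A chain extension from $S$ to $S'$ changes $(n_S, c_S, |E(S)|)$ by $(+2,+1,+2)$; extending the previous decomposition by one additional $K_2$ clique contributes $(+2,+1,+1)$ and stays within budget. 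A parallel extension at a vertex $v$ changes these quantities by $(+1,+1,\deg_S(v)+1)$ with $\deg_S(v)+1 \ge 3$; here I promote a clique $K_j$ of the previous decomposition to $K_{j+1}$, costing $j$ edges. I strengthen the inductive hypothesis to ensure that the decomposition always contains some clique of size at most $\min_v \deg_S(v)+1$, so the promotion fits the budget. This invariant is preserved because chain extensions introduce new $K_2$ cliques, and the minimum-degree bound of Theorem~\ref{th:split-01} guarantees a budget of at least $3$.

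Third, I combine the per-base decompositions: set $\vec\gamma := \sum_j \vec\gamma^{(j)}$ and then increment $\gamma_1$ by $n - \sum_j(|U_j|+|V_j|)$ to account for the isolated vertices from $\cG_n^2$. The resulting vector lies in $\CoV(n, c)$ and satisfies
\[
\sum_i\binom{i}{2}\gamma_i \;=\; \sum_j\sum_i\binom{i}{2}\gamma^{(j)}_i \;\le\; \sum_j|E_j| \;\le\; m(G).
\]
Since $\vec\alpha$ minimises $\sum_i\binom{i}{2}\beta_i$ over $\vec\beta\in\CoV(n,c)$, the desired inequality $\sum_i\binom{i}{2}\alpha_i \le m(G)$ follows.

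The main obstacle is sustaining the strengthened invariant in the parallel-extension step: when $\deg_S(v)+1$ is small, one must ensure the inductive decomposition has a correspondingly small clique available to promote, which requires a careful case analysis of how clique sizes evolve relative to the minimum degree of $S$. An alternative and considerably shorter proof bypasses this difficulty by applying Turán's theorem to $\overline G$: since $\alpha(G) = n - c$, Turán gives $m(G) \ge \binom{n}{2} - |E(T(n, n-c))|$, and the right-hand side equals $\sum_i\binom{i}{2}\beta_i$ for the vector $\vec\beta \in \CoV(n, c)$ whose cliques have the sizes of the balanced Turán partition; minimality of $\vec\alpha$ then closes the argument directly.
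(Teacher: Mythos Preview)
Your primary inductive approach has a genuine gap. The invariant ``the decomposition contains a clique of size at most $\min_v\deg_S(v)+1$'' is not preserved under parallel extensions, regardless of which clique you choose to promote, unless you supply the promised case analysis---and the natural choices fail. Concretely: take $K_4$, chain-extend an edge $\{a,b\}$ (decomposition $K_4,K_2$), then parallel-extend successively at the chain vertices and their copies, always promoting the smallest clique. After four such parallel steps you reach a $10$-vertex graph with decomposition $K_5,K_5$, while the two original $K_4$-vertices $c,d$ not adjacent to the chain region still have degree~$3$; now the smallest clique has size $5>3+1$. At the next parallel extension at $c$ (budget $4$) no clique of size $\le 4$ is available to promote. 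The overall edge inequality still happens to hold here because of accumulated slack, but your invariant no longer witnesses it, so the induction as written breaks down. You correctly flag this step as the obstacle, but you do not resolve it.

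Your Tur\'an alternative, on the other hand, is correct and is in fact the cleanest route. Since $\alpha(G)=n-c$, the complement $\overline G$ is $K_{n-c+1}$-free, so Tur\'an gives $|E(\overline G)|\le |E(T(n,n-c))|$ and hence $m(G)\ge \binom{n}{2}-|E(T(n,n-c))|=\sum_{\text{parts}}\binom{|\text{part}|}{2}$. The balanced $(n-c)$-partition is exactly the lex-minimal vector (the paper's bulleted observations show that the lex-minimum has at most two consecutive nonzero entries and minimises the edge sum), so this equals $\sum_i\binom{i}{2}\alpha_i$.

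This is genuinely different from the paper's proof, which argues by minimal counterexample: it takes a smallest violating component $G'\in\chaParExtP{K_3}{*}$, locates the last chain extension in its generating sequence, replaces that step by adjoining a disjoint $K_2$, and shows that all subsequent parallel extensions add no more edges in the modified sequence than in the original; this produces a strictly smaller violating component. Your Tur\'an argument is both shorter and strictly more general---it applies to \emph{any} graph with minimum cover size $c$, not only to graphs produced by the generation process---whereas the paper's argument stays entirely within the combinatorics of the extension sequence and does not invoke extremal graph theory.
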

\begin{proof}
Let $G=(V,E)$ be a graph generated by according to
Definition~\ref{def:random_processes} where 
$\cB$ is given by the graphs in
$\chaParExtP{K_3}{*}\cup\{K_1,K_2\}$.
Furthermore, assume that $G$ is not \Irreducible{}. 
Then there exists at least one edge that has be
added to $G$ by the process $\cG_{m}^3(\cdot)$, such
that removing this edge from $G$ does not reduce 
the minimum size of a cover and
the number of vertices of the resulting graph $G'$. 
Thus, the lower bound on the
number of edges of this claim for $G$ and $G'$ are equal. 
In the following we
will assume that $G$ is \Irreducible{} and 
generated by according to
Definition~\ref{def:random_processes} where 
$\cB$ is given by the graphs in
$\chaParExtP{K_3}{*}\cup\{K_1,K_2\}$.

Note that if the connected components of $G$ are cliques, 
then these components
determine a vector from $\CoV(n,c)$. 
Thus, the claim follows directly.

Now consider the case that not all the connected components of 
$G$ are cliques
and assume that for the lexicographically minimal vector
$\vec{\alpha}\in\CoV(n(G),c(G))$ we have
$$
m(G) \ \ < \ \ \sum_{i=1}^{n(G)} \frac{i\cdot (i-1)}{2}\cdot \alpha_i\ .
$$
By focusing on the connected components $G'$ of $G$, 
we can conclude that there has to exist 
such a component
$G'\in \chaParExtP{K_3}{*}$ which is not a clique and that 
for the lexicographically minimal vector 
$\vec{\alpha}'\in\CoV(n(G'),c(G'))$
we have
$$
m(G') \ \ < \ \ \sum_{i=1}^{n(G')} \frac{i\cdot (i-1)}{2}\cdot \alpha_i'\ .
$$
We will now show that such a graph $G'$ does not exist. 

For the contrary, assume that such a connected \Irreducible{}
graph exists in $\chaParExtP{K_3}{*}$
and that $G'$ denotes such a graph of minimum size.

Since $G'$ is not a clique and since there exists a sequence of 
graph extensions
$\chaExt{\cdot}$, $\parExt{\cdot}$ for generation $G'$ 
from $K_3$, this sequence contains at least one chain 
extensions $\chaExt{\cdot}$.
Let us fix
such a sequence $\Gamma_1,\ldots,\Gamma_k$ such that 
$G'=\Gamma_k(\cdots
\Gamma_1(K_3)\cdots)$. 

One can even show that this sequence contains 
exactly $n(G')-c(G')-1$
chain extensions $\chaExt{\cdot}$.

Let $G_i'=\Gamma_i(\cdots \Gamma_1(K_3)\cdots)$ denote the 
intermediate graphs
on the construction of $G'$. And assume that $\Gamma_\ell$ 
had been the last
chain extensions within the sequence $\Gamma_1,\ldots,\Gamma_k$, 
i.e. all
$\Gamma_h$ with $\ell<h\le k$ are parallel extensions. 

We will now compare $G'$
and $G''=\Gamma_k(\cdots \Gamma_{\ell+1}(G_{\ell-1}'+K_2)\cdots)$ where
$G_{\ell-1}'+K_2$ denotes the graph $G_{\ell-1}'$ plus an separate component
$K_2$ where the two nodes of $K_2$ will have the same name labels than the two
new nodes that are added to $G_{\ell-1}'$ by performing the chain extension
$\Gamma_\ell$. Since $\Gamma_h$ with $\ell<h\le k$ are parallel extensions we
can conclude that
\begin{eqnarray*}
n(G')-c(G') & = & n(G_\ell')-c(G_\ell') \\ 
& = & 
n(G_{\ell-1})-c(G_{\ell-1})-1 \\ 
& = & n(G_{\ell-1}'+K_2)-c(G_{\ell-1}'+K_2) \\ 
& = & n(G'')-c(G'')\ .
\end{eqnarray*}
Since $n(G')=n(G'')$ we have $c(G')=c(G'')$. Furthermore, since 
$\Gamma_h$ with $\ell<h\le k$ are parallel extensions we
can conclude that $G''$ consists of two connected components,
where at least the components that has been constructed form $K_2$
is a clique.

Let $G_i''=\Gamma_i(\cdots \Gamma_{\ell+1}(G_{\ell-1}'+K_2)\cdots)$ denote the
intermediate graphs on the construction of $G''$ after step $\ell$ (the addition
of $K_2$). By induction one can show that for every parallel extension
$\Gamma_{\ell+j}$ ($1\le j\le k-\ell$) the degree of the duplicated vertex of
$G_{\ell+j-1}''$ is smaller or equal than the degree of the corresponding
duplicated vertex of $G_{\ell+j-1}'$. Hence, $m(G'')\le m(G')$ and therefore
$$
m(G'') \ \ < \ \ m(G') \ \ < \ \ \sum_{i=1}^{n(G')} \frac{i\cdot (i-1)}{2}\cdot \alpha_i'\ .
$$
Again, by focusing on the connected components $G'''$ of $G''$, 
we can conclude that there has to exist 
such a $G'''\in \chaParExtP{K_3}{*}$ which is not a clique and that 
for the lexicographically minimal vector $\vec{\alpha}'''\in\CoV(n(G'''),c(G'''))$
we have
$$
m(G''') \ \ \le \ \ \sum_{i=1}^{n(G''')} \frac{i\cdot (i-1)}{2}\cdot \alpha_i'''\ .
$$
Since $G'''$ is a proper subgraph of $G''$ we can conclude that the size of 
$G'''$ is strictly smaller than size of $G'$ -- a contradiction to our choice
of $G'$.
\end{proof}

%-----------------------------------------------------------------------

We conclude this section with the observation that the vectors $\vec{\alpha} =
(\alpha_{n},\ldots,\alpha_{1})$ fulfill
$$
  n = \sum_{i=1}^{n} i\cdot \alpha_i
   \qquad\text{and}\qquad 
  c = \sum_{i=1}^{n} (i-1)\cdot \alpha_i\ .
$$
One can show that if for $\vec{\alpha}=(\alpha_{n},\ldots,\alpha_{1})$ there are
two indices $i,j$ with $\alpha_i,\alpha_j>0$ and $i\not\in\{j,j+1\}$ then also
the vector $\vec{\alpha}_{i,j}=(\alpha_{n}',\ldots,\alpha_{1}')$ with
$$
  \alpha_{k}'\ \ = \ \ \left\{\begin{array}[c]{ll}
  \alpha_{k} & \text{for }k\not\in\{i,i-1,j+1,j\}\\
  \alpha_{k}-1 & \text{for }k\in\{i,j\}\\
  \alpha_{k}+1 & \text{for }k\in\{i-1,j+1\}
  \end{array}\right.
$$
is a solution for the two equations on $c$ and $n$. 
On the other hand, if 
$i>j+1$, then
$$
  \sum_{k=1}^{n} \frac{k\cdot (k-1)}{2}\cdot \alpha_k'
  \quad < \quad
  \sum_{k=1}^{n} \frac{k\cdot (k-1)}{2}\cdot \alpha_k\ .
$$ 
Thus, we can conclude that the lexicographical minimal vector $\vec{\alpha}$ as
at most two non-zero entries. And if it has two non-zero entries, then one can
find these entries at two consecutive positions $h, h+1$. Hence, the
lexicographical minimal vector can be determined as follows
\begin{quote}
  Let $h = \lceil c/(n - c) \rceil + 1$ and $g = \lfloor c/(n-c) \rfloor + 1$.
  If $h = g$ choose $\alpha_h = (n - c)$ and $\alpha_i=0$ for all $i\not=h$. If
  $h \not= g$ choose $\alpha_h = c \bmod (n - c)$, $\alpha_g = n - c -
  \alpha_h$, and $\alpha_i = 0$ for all $i \not\in \{h, g\}$.
\end{quote}
Note that a cycle of odd length can be generated by a sequence of chain
extensions of an $K_3$. 

For our graph generator we use a process that generates 
random graphs from $\chaParExtP{K_3}{*}$ for $\cG_{\cB,\ell,m,n}^1$.
Since it might be possible that this process generates 
huge components, such that these components can not be 
combined with earlier chosen components 
from $\chaParExtP{K_3}{*}$ it will be useful to 
see whether an parallel or an chain extension of an interim
graph leads to a component that cannot used within the 
further construction process. 

One idea for such a decision process will be to investigate 
the vectors $(\alpha_{n},\alpha_{n-1},\ldots,\alpha_{1})$ 
for the remaining unused vertices 
after each chain or parallel extension \chaExtB\ or \parExtB.
In the following we will investigate the changes of 
$(\alpha_{n},\alpha_{n-1},\ldots,\alpha_{1})$ after such a 
step. This will allow us to determine the modified 
bound after each extension very efficiently.

\begin{itemize}
  \item Note that a parallel extension reduces the total set 
  of unused available
  vertices and the set of unused available vertices 
  within the cover by 1. This change
  results in the following modification of the vector
  $(\alpha_{n},\alpha_{n-1},\ldots,\alpha_{1})$: 
  Let $h$ be the maximum value of
  $i$ such that $\alpha_i>0$, then one can 
  compute the modified vector
  $\vec{\alpha}'=(\alpha_{n}',\alpha_{n-1}',\ldots,\alpha_{1}')$ 
  by
  $$\alpha_{k}'\ \ = \ \ \left\{\begin{array}[c]{ll}
  \alpha_{k} & \text{for }k\not\in\{h,h-1\}\\
  \alpha_{k}-1 & \text{for }k=h\\
  \alpha_{k}+1 & \text{for }k=h-1\ .
  \end{array}\right.$$
  \item Note that a chain extension reduces the total set of 
  unused available vertices
  by 2 and the set of unused available vertices within the 
  cover by 1. This change
  results in the following modification of the vector
  $(\alpha_{n},\alpha_{n-1},\ldots,\alpha_{1})$:

  \begin{itemize}
    \item
    Let $h$ be the maximum 
    value of $i$ such that $\alpha_i>0$
    and let $g$ be the minimum value of $i$ such that $\alpha_i>0$.
    Note that either $h=g$ or $h=g+1$.
    \item To compute the modified vector 
    $\vec{\alpha}'=(\alpha_{n}',\alpha_{n-1}',\ldots,\alpha_{1}')$ 
    we have to reduce $\alpha_g$ by 1 and shift $g-2$ ones to 
    a higher position. This can be done by the following algorithm:

    \begin{enumerate}
      \item Choose the temporary values $t_i=\alpha_i$ for all $i\not=g$, and
      $t_g=\alpha_g-1$.
      \item Set $s=g-2$.
      \item If $t_g> s$ set $t_g=t_g-s$, $t_{g+1}=t_{g+1}+s$, $s=0$.
      \item If $t_g\le s$ set $t_{g+1}=t_{g+1}+t_g$, $t_g=0$, $s=s-t_g$, $g=g+1$.
      \item If $g=h$ and $s\ge t_g>0$ set $d=\lfloor s/t_g\rfloor$, $t_{g+d}=t_g$, $t_g=0$, 
      $s=s-d\cdot t_g$, $g=g+d$, $h=h+d$.
      \item If $g=h$ and $0<s<t_g$ set $t_{g+1}=s$, $t_{g}=t_g-s$, $h=h+1$, $s=0$.
      \item Finally set $\alpha_{i}'=t_i$ for all $i$.
    \end{enumerate}

    \item Since during this computation at most two temporary variables $t_k$
    are greater than 0, we can implement the functionality of this algorithm by
    using only 6 variables.

  \end{itemize}

  \item Since there are at most 2 entries of $\vec{\alpha}$ (and
  $\vec{\alpha}'$) that are different from 0, we only need to store these two
  values and the corresponding coordinates. Thus, four variables suffice to
  implement the lexicographical minimum vector $\vec{\alpha}$.
\end{itemize}

%-----------------------------------------------------------------------
\section{Application Example: Generating Hard Instances for Benchmarks}
\label{sec:results}
%-----------------------------------------------------------------------

One application of our \Irreducible graphs and generation processes based on
them is the construction of graphs that are difficult to solve for algorithms
focused on graph problems, e.g. the minimum vertex cover problem. Therefore, 
so-generated graphs can have the potential use of serving as benchmarks for
challenging current and future algorithms. This section shows the results of an
examplary test on a state-of-the-art heuristic and Naive Greedy as reference on
our graphs.

%-----------------------------------------------------------------------
\textbf{Generating Hard Instances.}
%-----------------------------------------------------------------------
To find graphs that are hard to solve within acceptable time, we
used a restricted version of the randomized graph generator described in
Definition~\ref{def:random_processes}. By generating several graphs, we could
verify that choosing two \Irreducible graphs within the first phase by
$\cG_{\cB,\ell,m,n}^1$ is sufficient for finding hard graphs. Moreover, our
experiments led us to the intuition that hard graphs can be found if we choose
the cover size $\ell$ a little bit larger than $n/2$. Our results for the
modification of $\cG_{\cB,\ell,m,n}^1$ are given in the following:
\begin{itemize}
  \item We generate two randomized \Irreducible graphs from
  $\cB=\chaParExtP{K_3}{*}$, such that each of their accumulated minimal vertex
  cover size is approximatively $n/4$ (the process generates such a graph with
  minimum cover size of $n/4$ and stops its extension randomly).
  \item Thereupon, we add the remaining vertices to get a graph of size $n$.
  \item Then, we add random edges to obtain a graph with the desired number of
  edges, ensuring that no two vertices that are not in the optimal cover share
  an edge.
  \item Finally, we permute the order of the vertices.
\end{itemize} 
Throughout all steps, we keep track of one optimal vertex cover of the graph.

% %-----------------------------------------------------------------------
% \textbf{Test Environment.}
% %-----------------------------------------------------------------------
We analyzed the performance of the updated version of NuMVC (v2015.8), where
we used the original code provided by its
authors\footnote{\url{http://lcs.ios.ac.cn/\~caisw/MVC.html}}.

%-----------------------------------------------------------------------
\begin{table}[h]
  \small
  %---------------------------------------------------------------------
  \begin{center}
  \scalebox{0.8}{
    \begin{tabular}{rr@{\hspace{0.5em}}rrr@{\hspace{0.5em}}rrrr}
      \toprule
        % \hline
          \multicolumn{3}{c}{Graphs}
          & \multicolumn{3}{c}{Distance of NuMVC} 
          & \multicolumn{3}{c}{Distance of Naive Greedy} \\
        \cmidrule(lr@{0.5em}){1-3}
        \cmidrule(lr@{0.5em}){4-6}
        \cmidrule(lr@{0.5em}){7-9}
        $n(G)$ & $k$ & Avg. Opt.
          & \#Opt. & Avg. & Max. 
          & \#Opt. & Avg. & Max. \\
        \midrule
        % \hline
        % \hline
        1500 & 1.1 &  752.7 & 10 &   0.0 &   0 &  0 &  24.3 &  28 \\
        1500 & 1.3 &  811.1 & 10 &   0.0 &   0 &  0 &  10.5 &  25 \\
        1500 & 1.5 &  834.9 &  9 &   0.1 &   1 &  0 &  26.9 &  51 \\
        1500 & 1.7 &  793.1 &  9 &  11.3 & 113 &  0 & 154.6 & 188 \\
        1500 & 1.9 &  918.3 & 10 &   0.0 &   0 & 10 &   0.0 &   0 \\
        \midrule
        % \hline
        2500 & 1.1 & 1270.6 & 10 &   0.0 &   0 &  0 &  41.0 &  52 \\
        2500 & 1.3 & 1372.6 & 10 &   0.0 &   0 &  0 &  28.0 &  54 \\
        2500 & 1.5 & 1363.1 &  9 &   0.1 &   1 &  0 &  51.1 &  87 \\
        2500 & 1.7 & 1431.7 &  8 &  29.4 & 175 &  0 & 176.0 & 264 \\
        2500 & 1.9 & 1418.8 & 10 &   0.0 &   0 & 10 &   0.0 &   0 \\
        \midrule
        % \hline
        3500 & 1.1 & 1797.7 & 10 &   0.0 &   0 &  0 &  59.9 &  91 \\
        3500 & 1.3 & 1884.0 &  2 &   5.4 &  16 &  0 &  38.9 &  56 \\
        3500 & 1.5 & 1975.5 & 10 &   0.0 &   0 &  0 &  62.3 &  96 \\
        3500 & 1.7 & 2009.6 &  4 & 154.0 & 203 &  0 & 189.1 & 336 \\
        3500 & 1.9 & 1955.8 & 10 &   0.0 &   0 & 10 &   0.0 &   0 \\
        % \hline
      \bottomrule
    \end{tabular}
  }
  \end{center}
  \vspace{-1.0em}
  \caption{Results for running NuMVC and Naive Greedy for a maximum of
  \unit[$1500$]{seconds} per run on graphs constructed with our approach. $n(G)$
  and $m(G) = n^k$ denote the number of vertices and edges, respectively. We
  constructed ten graphs for each combination of $n$ and $k$.
  \#Opt. denotes the number of runs where the respective algorithm found the
  optimal solution. Avg. and Max. distance denote the average and maximal number
  of vertices above the optimum, respectively.}
  \label{tab:varying-density-1}
\end{table}
%-----------------------------------------------------------------------

%-----------------------------------------------------------------------
\textbf{Results for NuMVC.}
%-----------------------------------------------------------------------
Tables~\ref{tab:varying-density-1} summarizes our
experimental results for NuMVC for varying number of vertices $n$ and number of
edges $m$, where $m = n^k$ is determined by a polynomial function on $n$.
The table presents the quality of the found vertex
cover, where the column \emph{Avg. Opt.} provides the average optimum cover
size, \emph{\#Opt.} the number of instances for which NuMVC found an optimal
solution, and the remaining two columns the average and maximal distance between
the vertex cover found by NuMVC and the minimum vertex cover. One can observe
that the worst behavior of NuMVC take place for $k=1.7$. For sparse or dense
graphs, i.e. $k=1.1$ or $k=1.9$, NuMVC always found an optimal solution. 
Our results lead to the observation that choosing $k$ between $1.5$ and
$1.7$ leads to hard instances.

%-----------------------------------------------------------------------
\textbf{Results for Greedy.}
%-----------------------------------------------------------------------
We also studied the Naive-Greedy algorithm on our graphs, which always adds a
vertex to its resulting cover from those with a maximum degree within the
remaining graph. The experiments show that the Naive-Greedy algorithm can
determine an optimal solution for dense graphs. As for NuMVC graphs, the choice
of $k = 1.7$ leads to graphs that seem to be infeasible to solve. So, one can
assume that the error of NuMVC results from that of the initial cover generated
by its internally used greedy algorithm.

%-----------------------------------------------------------------------
\textbf{Comparison with Witzel Graphs.}
%-----------------------------------------------------------------------
The effectiveness of a random benchmark generation process also relies on its
frequency of producing hard instances. One could, e.~g., preselect some vertices
$C$ and add edges randomly among them and from them to remaining vertices --
ensuring that every vertex of $C$ shares an edge with at least one vertex in
$\overline{C}$. This process is similar to that discussed above; though, the
probability that the generated instance is hard is low and unpredictable. We
experimentally constructed 50 instances from this process. None of them appeared
hard to solve for NuMVC, i.~e. in all cases, it required a single step to find a
solution. It is hard to determine if NuMVC always found an optimal cover, but it
is at least highly likely; one can see from Table
\ref{tab:varying-density-1}, that, whenever NuMVC took a single step to solve an
instance, it always returned an optimal solution. We compared our generation
process to that of Witzel graphs, which start from $n$ cliques of $x$ vertices
each and ``[...] connect random pairs of cliques by random
edges''~\cite{dharwadker:2006}, which resembles the BHOSLIB
process~\cite{xu:2007}. For the parameters used to generate BHOSLIB Graphs
see~\cite{xu:2005simple}.

We executed NuMVC for \unit[$1500$]{seconds} over $30$ random instances
consisting of $4000$ vertices each, generated by the Witzel Graph process, the
BHOSLIB process, and our benchmark generator. For the construction of Witzel
graphs, we chose $100$ cliques of $40$ vertices each, and added edges randomly
such that their total number was located between $550\,000$ and $850\,000$.%, which resembles the hardest BHOSLIB instance \texttt{frb-100-40}.
Analogously, we also chose a number of $m \in [550\,000, 850\,000]$ random edges
for our benchmark generator. For the BHOSLIB process, we chose $n = 100$,
$\alpha = 0.8$, $r = 0.8/(\ln(4) - \ln(3)) \approx 2.7808$, and $p = 0.25$,
which were used for the hardest instance \texttt{frb-100-40} according to the
BHOSLIB authors.
NuMVC found an optimal solution for all $30$ Witzel instances and $29$ of the
BHOSLIB graphs, whereas it returned a minimum cover in only $23$ of the $30$
cases of our benchmark generator. Detailed results can be seen in 
Table~\ref{tab:bhoslib-process}. In a future step, we will compare our generator with the
BHOSLIB process.

%-----------------------------------------------------------------------
\begin{table*}[h]
  \small
  \begin{center}
    \begin{tabular}{lrrrrrr}
      \toprule 
        & 
        & \multicolumn{3}{c}{Distance of NuMVC} \\
        \cmidrule(lr){3-5}
      Generator        & Avg. Opt. VC 
        & \#Opt.       & Avg. & Max. 
        & Avg. \#Steps & Avg. Time (s) \\
      \midrule
        Our generator  & 2250.63 & 23 & 16.57 & 102 &  8364265.5 & 118.37 \\
        BHOSLIB Graphs & 3900.00 & 29 &  3.00 &   3 & 11465347.9 & 398,22 \\
        Witzel Graphs  & 3900.00 & 30 &  0.00 &   0 & 21821546.0 & 441.84 \\
      \bottomrule
    \end{tabular}
  \end{center}
  \vspace{-1.0em}
  \caption{Comparison of minimum vertex cover size and run time of 
  NuMVC for a maximum of \unit[$1500$]{seconds} on
  \unit[$30$]{graphs} each from our graph generator, BHOSLIB Graphs, 
  and Witzel Graphs.}
  \label{tab:bhoslib-process}
\end{table*}
%-----------------------------------------------------------------------

\textbf{Generating Instances with Less Structure}
%-----------------------------------------------------------------------
Finally, we tackled the question whether the structure that is given by the
\Irreducible graphs is necessary for generating hard instances. Therefore, we
generated random graphs by the following simplified generator $\cG_{n,m,n_c}$:
\begin{quote}
$\cG_{n,m,n_c}$ outputs a graph of $n$ vertices $V$ and $m$ edges $E$ such that
for a subset $V_C\subseteq V$ of size $n_C$ the edges are uniformly chosen from
the set of edges connecting a vertex of $V_C$ with an arbitrary other vertex of
$V$ (see Figure~\ref{fig:RandProc02}).
\end{quote}
Clearly, $n_C$ is an upper bound for the minimum vertex cover of graphs
generated by $\cG_{n,m,n_c}$.

%-----------------------------------------------------------------------
\begin{figure}[h]
  \begin{center}
    \scalebox{.4}{
      \input{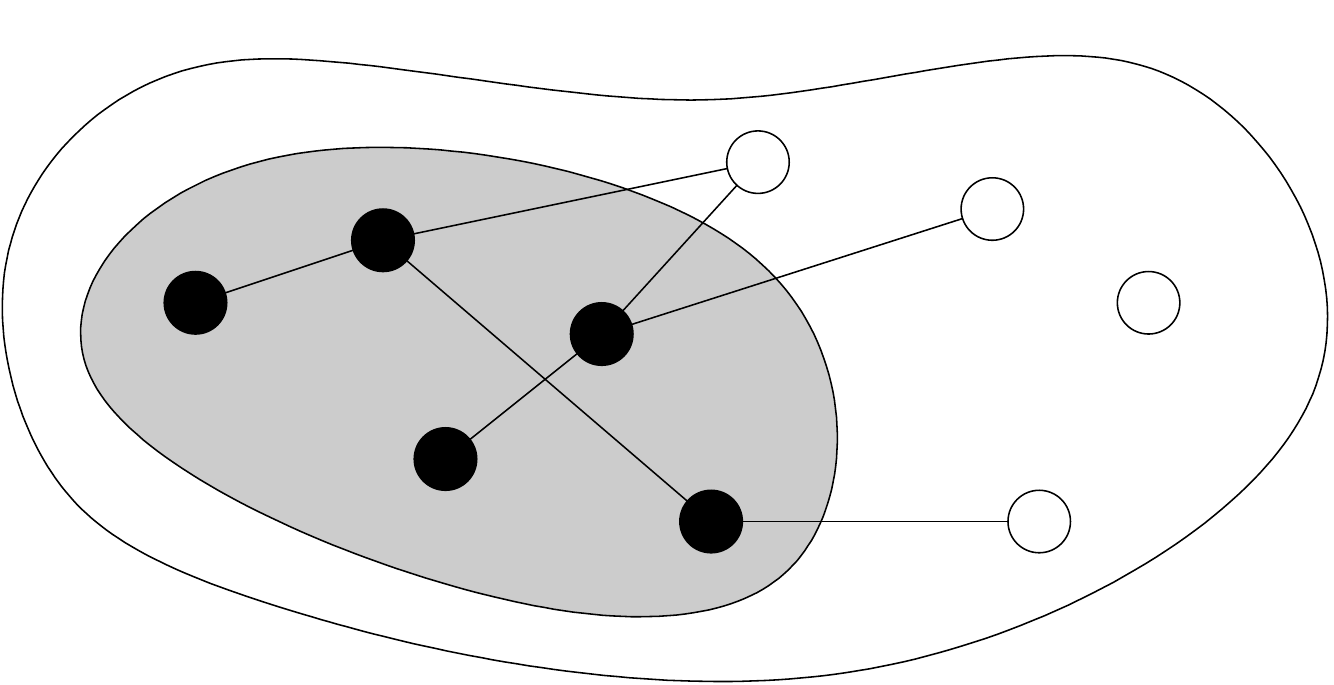_t}
    }
  \end{center}
  \vspace{-1.0em}
  \caption{Illustration of a graph of $\cG_{n,m,n_c}$.}
  \label{fig:RandProc02}
\end{figure}
%-----------------------------------------------------------------------

For our experiments, we chose $n=3500$ and $m=n^{1.7}$ and generated ten random
graphs for each value in $n_C\in\{2000, 2050, 2100, 2150, 2200\}$. For all
generated graphs, NuMVC found a solution latest after a single improvement step
that has been within the upper bound of $n_C$. We conclude from these results
that using the structure of \Irreducible graphs increases the probability of
finding hard instances. Moreover, knowing the exact size of a cover is highly
useful for determining the quality of the solutions determined by NuMVC.

%-----------------------------------------------------------------------
\section{Conclusions}
\label{sec:conclusion}
%-----------------------------------------------------------------------

This paper introduced the concept of \emph{\Irreducible} graphs for the vertex-%
cover problem, i.e. graphs which cannot be reduced wrt. their number of vertices
without reducing the minimum vertex cover. As starting points, they allow to
generate any graph within a random process such that we can determine the
optimal cover size of the resulting construction. The random process can be
parametrized by the desired number of vertices, cover size, as well as number of
edges. We introduced a set of extensions that allow us to generate a
sufficiently large subset of
\Irreducible graphs as the base for constructing graphs. We showed that our
extensions allow to render the generation process very efficiently. 

One specific application with apparent potential is the construction of graphs
with hidden solutions that are difficult to solve. Our experiments with a recent
heuristic NuMVC and Naive Greedy frequently underestimate the minimum vertex
cover by more than $100$ vertices for graphs generated with our process with
\unit[$3500$]{vertices}. By analyzing the behavior of NuMVC and Naive Greedy on
random graphs without a substructure like our \Irreducible graphs, we could
observe that these structures appear necessary for hard instances.
We will continue this research and publish the hardest graphs at our webside.

In general, \Irreducible graphs are interesting objects to study to deepen the
understanding of hard problems. Concerning our proposed generation process, if
we were aware of all \Irreducible instances, one could efficiently create all
graphs from our process. Though, it seems that the task of finding all critical
instances is far from complete. We have started the systematic search of small
\Irreducible instances and will continue will report on further findings in the
close future.

Furthermore, we started analyzing the quality of {\CirculantGraph}s for
benchmarking purposes. At an early stage, we found that instances with a small
size of internal cliques were still easily solvable for local-search algorithms
such as NuMVC, which shows that the verification of partial results is crucial.
We plan to continue our research also in the direction of constructing larger
\Irreducible graphs from {\CirculantGraph}s and will also report on those
findings.

{\bf Acknowledgement:} We would like to thank Gwena\"{e}l Joret and Ke Xu 
for their constructive comments on an earlier version of the paper.

%-----------------------------------------------------------------------
% References
%-----------------------------------------------------------------------

\bibliographystyle{plain}
\bibliography{TR}

%-----------------------------------------------------------------------
% Appendix
%-----------------------------------------------------------------------

\newpage

\appendix
%-----------------------------------------------------------------------
\section{Critical Graphs of Degree $6$}
\label{sec:degree-four}
%-----------------------------------------------------------------------

%-----------------------------------------------------------------------
\begin{table}[H]
  \centering
  \scriptsize
  \begin{tabular}{r p{140mm}}
    \toprule
      $n$ & Tuples $(i, j)$ \\
    \midrule
    $ 4$ &
    $( 2,  3)$  \\
    $ 5$ &
    $( 2,  3)$, $( 2,  4)$, $( 3,  4)$  \\
    $ 6$ &
    $( 2,  3)$, $( 3,  4)$  \\
    $ 7$ &
    $( 2,  3)$, $( 2,  4)$, $( 3,  5)$, $( 4,  5)$  \\
    $ 8$ &
    $( 2,  6)$, $( 2,  7)$, $( 6,  7)$  \\
    $10$ &
    $( 4,  5)$, $( 5,  6)$  \\
    \midrule
    $11$ &
    $( 2,  3)$, $( 2,  8)$, $( 2,  9)$, $( 2, 10)$, $( 3,  4)$, $( 3,  7)$, $( 3,  9)$, $( 4,  5)$, $( 4,  6)$, $( 4,  8)$, $( 5,  6)$, $( 5,  7)$, $( 5, 10)$, $( 6,  7)$, $( 6, 10)$, $( 7,  8)$, $( 8,  9)$, $( 9, 10)$  \\
    $13$ &
    $( 3,  4)$, $( 3,  9)$, $( 4, 10)$, $( 5,  8)$, $( 5, 12)$, $( 8, 12)$, $( 9, 10)$  \\
    $14$ &
    $( 2, 12)$, $( 2, 13)$, $( 6,  7)$, $( 7,  8)$, $(12, 13)$  \\
    $15$ &
    $( 2,  3)$, $( 2, 12)$, $( 3, 13)$, $( 6,  7)$, $( 6,  8)$, $( 7,  9)$, $( 8,  9)$, $(12, 13)$  \\
    $17$ &
    $( 2,  6)$, $( 2, 11)$, $( 2, 15)$, $( 2, 16)$, $( 3,  6)$, $( 3,  8)$, $( 3,  9)$, $( 3, 11)$, $( 4, 13)$, $( 4, 16)$, $( 6, 14)$, $( 6, 15)$, $( 8,  9)$, $( 8, 14)$, $( 8, 16)$, $( 9, 14)$, $( 9, 16)$, $(11, 14)$, $(11, 15)$, $(13, 16)$, $(15, 16)$  \\
    $18$ &
    $( 3,  4)$, $( 3, 14)$, $( 4, 15)$, $( 8,  9)$, $( 9, 10)$, $(14, 15)$  \\
    $19$ &
    $( 2,  3)$, $( 2, 16)$, $( 3, 17)$, $( 6,  7)$, $( 6, 12)$, $( 7,  8)$, $( 7, 11)$, $( 7, 13)$, $( 8,  9)$, $( 8, 10)$, $( 8, 12)$, $( 9, 11)$, $(10, 11)$, $(11, 12)$, $(12, 13)$, $(16, 17)$  \\
    $20$ &
    $( 2, 18)$, $( 2, 19)$, $( 3,  4)$, $( 3, 16)$, $( 4, 17)$, $( 7,  8)$, $( 7, 12)$, $( 8, 13)$, $(12, 13)$, $(16, 17)$, $(18, 19)$  \\
    \midrule
    $22$ &
    $( 4, 18)$, $(10, 11)$, $(11, 12)$  \\
    $23$ &
    $( 2,  3)$, $( 2,  9)$, $( 2, 14)$, $( 2, 20)$, $( 5, 10)$, $( 5, 13)$, $( 7,  8)$, $( 7, 11)$, $( 7, 12)$, $( 7, 15)$, $( 7, 16)$, $( 8, 16)$, $(10, 11)$, $(10, 12)$, $(10, 13)$, $(10, 18)$, $(11, 12)$, $(11, 13)$, $(11, 16)$, $(12, 13)$, $(12, 16)$, $(13, 18)$, $(15, 16)$  \\
    $24$ &
    $( 2,  6)$, $( 2, 18)$  \\
    $25$ &
    $( 3,  4)$, $( 6,  7)$, $( 6, 18)$, $( 7,  8)$, $( 7, 17)$, $( 7, 18)$, $( 7, 19)$, $( 8, 18)$, $( 9, 10)$, $( 9, 15)$, $(10, 11)$, $(10, 14)$, $(10, 16)$, $(11, 15)$, $(14, 15)$, $(15, 16)$, $(17, 18)$, $(18, 19)$  \\
    $26$ &
    $( 6, 10)$, $( 6, 16)$, $( 7,  8)$, $( 7, 18)$, $( 8, 19)$, $(10, 11)$, $(10, 15)$, $(10, 20)$, $(11, 16)$, $(12, 13)$, $(13, 14)$, $(15, 16)$, $(16, 20)$, $(18, 19)$  \\
    $27$ &
    $( 2,  3)$, $( 3,  4)$, $( 6,  7)$, $( 6, 20)$, $( 7, 20)$, $(12, 13)$, $(12, 14)$, $(13, 15)$, $(14, 15)$  \\
    $28$ &
    $( 3,  8)$, $( 3, 20)$, $( 5, 14)$, $( 6, 10)$, $( 6, 18)$, $( 7, 12)$, $( 7, 16)$, $( 9, 12)$, $( 9, 16)$, $(10, 18)$, $(11, 14)$, $(12, 19)$, $(14, 17)$, $(16, 19)$  \\
    $29$ &
    $( 2, 12)$, $( 2, 17)$, $( 5, 12)$, $( 5, 13)$, $( 5, 16)$, $( 5, 17)$, $( 6,  9)$, $( 6, 14)$, $( 6, 15)$, $( 6, 20)$, $( 9, 13)$, $( 9, 16)$, $(12, 17)$, $(13, 20)$, $(14, 15)$, $(16, 20)$  \\
    $30$ &
    $( 6,  7)$, $(12, 13)$, $(12, 17)$, $(13, 18)$, $(14, 15)$, $(15, 16)$, $(17, 18)$  \\
    \midrule
    $31$ &
    $( 2,  3)$, $( 2,  6)$, $( 3, 15)$, $( 3, 16)$, $( 5,  6)$, $( 5, 10)$, $(10, 11)$, $(10, 20)$, $(14, 15)$, $(14, 16)$, $(15, 17)$, $(16, 17)$  \\
    $32$ &
    $( 3,  4)$, $( 6,  7)$, $( 6, 12)$, $( 6, 20)$, $( 9, 10)$, $(11, 12)$, $(11, 20)$, $(12, 20)$  \\
    $33$ &
    $( 2,  9)$, $( 6,  7)$, $(12, 16)$, $(12, 17)$, $(14, 15)$, $(14, 18)$, $(15, 18)$, $(15, 19)$, $(18, 19)$  \\
    $34$ &
    $( 3,  4)$, $( 6,  9)$, $(10, 11)$, $(12, 15)$, $(12, 19)$, $(16, 17)$, $(17, 18)$  \\
    $35$ &
    $( 2,  3)$, $( 2, 15)$, $( 2, 20)$, $( 5, 16)$, $( 5, 19)$, $( 7,  8)$, $(10, 17)$, $(10, 18)$, $(11, 12)$, $(11, 15)$, $(11, 20)$, $(13, 14)$, $(16, 17)$, $(16, 18)$, $(17, 18)$, $(17, 19)$, $(18, 19)$  \\
    $36$ &
    $(15, 16)$, $(15, 20)$  \\
    $37$ &
    $(10, 11)$  \\
    $38$ &
    $( 2,  6)$, $( 6,  7)$, $( 8, 16)$, $(10, 11)$, $(18, 19)$, $(19, 20)$  \\
    $39$ &
    $( 2,  3)$, $( 3,  4)$, $( 3,  8)$, $( 3, 17)$, $( 5, 15)$, $( 9, 10)$, $( 9, 16)$, $(15, 16)$, $(17, 18)$, $(18, 19)$, $(18, 20)$  \\
    $40$ &
    $( 5, 12)$, $( 6,  7)$, $(10, 14)$, $(10, 18)$, $(12, 18)$, $(14, 15)$, $(15, 16)$, $(17, 18)$  \\
    \midrule
    $41$ &
    $( 2, 18)$, $( 3,  4)$, $( 3, 15)$, $( 5, 14)$, $( 5, 19)$, $( 8, 11)$, $( 8, 12)$, $( 9, 16)$, $( 9, 20)$, $(10, 11)$, $(13, 17)$, $(14, 15)$  \\
    $42$ &
    $( 2, 12)$, $( 6, 10)$, $( 6, 19)$, $(11, 18)$  \\
    $43$ &
    $( 2,  3)$, $( 2,  9)$, $( 5, 19)$, $( 6,  7)$, $( 8, 12)$, $(14, 15)$, $(15, 18)$, $(16, 20)$  \\
    $44$ &
    $( 5, 12)$, $( 6, 10)$, $( 8, 11)$, $( 8, 18)$, $( 9, 20)$  \\
    $45$ &
    $( 2,  6)$, $( 6,  7)$, $( 9, 10)$, $(12, 13)$  \\
    $46$ &
    $( 3,  4)$, $( 5, 12)$, $( 6,  7)$, $( 9, 16)$, $(10, 11)$, $(13, 14)$, $(14, 15)$, $(16, 17)$, $(18, 19)$  \\
    $47$ &
    $( 2,  3)$, $( 5, 17)$, $( 5, 19)$, $( 6, 19)$, $( 8, 11)$, $( 9, 10)$, $( 9, 18)$, $(10, 11)$, $(13, 14)$, $(14, 15)$, $(15, 16)$, $(15, 19)$, $(17, 18)$  \\
    $48$ &
    $( 3,  4)$  \\
    $49$ &
    $(18, 19)$  \\
    $50$ &
    $( 3,  8)$, $( 5, 14)$, $( 8, 12)$, $( 9, 10)$, $(10, 11)$, $(13, 14)$, $(14, 17)$  \\
    \midrule
    $51$ &
    $( 2,  3)$, $( 2, 15)$, $( 6,  7)$  \\
    $52$ &
    $( 2,  6)$, $( 5, 16)$, $(14, 18)$  \\
    $53$ &
    $( 2,  9)$, $( 3,  4)$, $( 4, 13)$, $( 4, 16)$, $( 6,  7)$, $( 6, 12)$, $( 8, 11)$, $( 8, 20)$, $( 9, 10)$, $(10, 11)$, $(10, 13)$, $(13, 14)$, $(15, 16)$, $(16, 17)$, $(18, 19)$  \\
    $55$ &
    $( 2,  3)$, $( 2, 12)$, $( 3,  4)$, $(10, 11)$, $(12, 13)$, $(13, 14)$, $(16, 17)$, $(17, 18)$, $(18, 19)$  \\
    $56$ &
    $( 5, 14)$, $( 5, 16)$, $( 5, 18)$, $( 6,  7)$, $( 6, 17)$, $( 8, 11)$, $( 8, 18)$, $( 9, 10)$, $(11, 14)$, $(14, 17)$, $(14, 20)$  \\
    $57$ &
    $( 5, 12)$, $(12, 16)$, $(15, 16)$  \\
    $58$ &
    $( 4, 16)$, $( 6,  7)$, $( 6, 10)$, $( 8, 11)$, $( 9, 10)$, $(10, 16)$, $(13, 14)$  \\
    $59$ &
    $( 2,  3)$, $( 2,  6)$, $( 6,  7)$, $(10, 11)$, $(10, 20)$, $(11, 14)$, $(12, 16)$, $(12, 18)$, $(12, 19)$, $(16, 17)$, $(19, 20)$  \\
    $60$ &
    $( 2, 18)$, $( 3,  4)$, $( 5, 14)$, $( 5, 16)$, $( 6, 10)$, $( 8, 11)$, $( 8, 14)$  \\
    \bottomrule
  \end{tabular}
  \caption{Critical circulant graphs of degree $6$. Within this table 
    the value 1 is omitted within each entry.}
  \label{tab:critical-circulant-graphs-2}
\end{table}
%-----------------------------------------------------------------------

%-----------------------------------------------------------------------

\end{document}